\title{Learning-Augmented Facility Location Mechanisms for Envy Ratio}
\date{}
\begin{document}
	
	\author{Haris Aziz\thanks{UNSW Sydney. Email: \texttt{haris.aziz@unsw.edu.au}.} \and Yuhang Guo\thanks{UNSW Sydney. Email: \texttt{yuhang.guo2@unsw.edu.au}.}\\  \and Alexander Lam\thanks{Hong Kong Polytechnic University. Email: \texttt{Email: alexander-a.lam@polyu.edu.hk}.} \and Houyu Zhou\thanks{UNSW Sydney. Email: \texttt{houyu.zhou97@gmail.com}. Corresponding Author.}}
	
	\maketitle
	
	\begin{abstract}
		The augmentation of algorithms with predictions of the optimal solution, such as from a machine-learning algorithm, has garnered significant attention in recent years, particularly in facility location problems. Moving beyond the traditional focus on utilitarian and egalitarian objectives, we design learning-augmented facility location mechanisms on a line for the envy ratio objective, a fairness metric defined as the maximum ratio between the utilities of any two agents. For the deterministic setting, we propose the $\alpha$-Bounding Interval Mechanism ($\alpha$-BIM), which utilizes predictions to achieve $\alpha$-consistency and $\frac{\alpha}{\alpha - 1}$-robustness for a selected parameter $\alpha \in [1,2]$, and prove its optimality. We also resolve open questions raised by \citet{DLC+20a}, devising a randomized mechanism without predictions to improve upon the best-known approximation ratio from $2$ to approximately $1.8944$. Building upon these advancements, we construct a novel randomized mechanism, the Bias-Aware Mechanism (BAM), which incorporates predictions to achieve improved consistency and robustness guarantees.
	\end{abstract}
	
	\thispagestyle{empty}
	
	\vspace{15pt}
	
	\hrule
	
	\vspace{5pt}
	{
		\setlength\columnsep{35pt}
		\setcounter{tocdepth}{2}
		\renewcommand\contentsname{\vspace{-20pt}}
		{\small\tableofcontents}
	}
	
	\vspace{11pt}
	\hrule

	\newpage
	\pagenumbering{arabic}
	
	\section{Introduction}
In the uni-dimensional facility location problem, we are given a set of agents who are located along an interval, and are tasked with finding an ideal location to place a facility. Each agent has single-peaked preferences, preferring the facility to be located as close to them as possible. Due to its simplicity as a continuous, single-peaked preference aggregation problem, the facility location problem has many industrial and societal applications, such as school/library/hospital placement \citep{schummer2002strategy}, social/economic policy selection \citep{dragu2019coalition} and budget aggregation \citep{freeman2021truthful}. Among these applications, it is particularly important to achieve a \emph{fair} solution, in which no agent (or subset of agents) is excessively distant from the facility. As a result, this problem has been widely studied in operations research, microeconomics, and theoretical computer science, with numerous papers proposing various solution concepts. 

In social choice problems, a common egalitarian fairness concept is to maximize the utility/well-being of the \emph{worst-off} agent, and when translated to the facility location problem, this is equivalent to the minimizing the worst-off agent's distance from the facility. This is achieved by placing the facility at the midpoint of the left- and right-most agents, but as \citet{Mulligan:1991ug} remarks, this solution is prone to perturbations of the extreme agent locations. Accordingly, numerous papers (e.g. \citep{mcallister1976equity,marsh1994equity}) have proposed alternative fairness objectives which provide an improved measure of the inequality within an instance.

Our paper's focus is on the \emph{envy ratio} objective, which was proposed for the facility location problem by \citet{DLC+20a} and later expanded upon by \citet{liu2020envyratio}. Informally speaking, the envy ratio of an instance is defined as the largest ratio between any two agents' utilities, in which each agent's utility is equal to the difference between the length of the domain and their distance from the facility. Unlike the maximum distance objective, the envy ratio objective is a pairwise fairness notion which additionally takes into account the relative well-being of the best-off agent. To further illustrate the difference and added nuance, consider an instance with two agents, located at the extremities of the interval. When the facility is placed at the midpoint, the addition of new agents near the midpoint does not affect the maximum distance, but causes the envy ratio to increase, as the agents at the endpoints become envious of the agents who are located near the facility. When considering instances that correspond to a large approximation ratio, the envy ratio's dependence on the agents' utilities creates a comparatively larger focus on instances where agents' receive low welfare from the mechanism, such as when they are located at the domain endpoints. Importantly, the envy ratio objective also respects fundamental fairness principles such as the Pigou-Dalton principle \citep{Sen97} and the Rawlsian Principle \citep{Raw17}.

Aside from fairness concerns, we typically desire a solution which is additionally \emph{strategyproof}, incentivizing agents to reveal their true locations by ensuring that any misreporting is not beneficial. This is important when the agents' locations are assumed to be \emph{private} information, and is the defining goal in the extensive literature on approximate mechanism design \citep{Moulin80a,PrTe13a}. Essentially, the aim is to design strategyproof mechanisms which have a bounded approximation for some (typically utilitarian or egalitarian) objective, and to compute a lower bound on the best-possible approximation by a strategyproof mechanism. For the envy ratio objective, \citet{DLC+20a} show that deterministic strategyproof mechanisms can achieve a $2$-approximation at best, and give bounds on the approximation achievable by randomized mechanisms.

In this work, we ask whether we can improve upon the best-known envy ratio approximation results by designing mechanisms which additionally utilize a \emph{prediction} of the optimal facility placement (such as from a machine-learning algorithm trained on historical data). This approach stems from the field of \emph{learning-augmented} algorithms, which has seen significant interest in recent years, as the additional prediction information is leveraged to improve upon the traditional `worst-case' approximation ratio bounds. In this context, an ideal mechanism provides an outcome which is close to the social optimum when the predictor provides accurate information (\textit{consistency}), and also retains worst-case approximation guarantees when the predictor is inaccurate (\textit{robustness}).

\subsection{Our Results}
In this paper, we apply learning augmentation to design anonymous and strategyproof mechanisms which have a bounded approximation ratio for the envy ratio objective. Our main results are as follows.

1. In the deterministic setting, we propose the novel $\alpha$-Bounding Interval Mechanism ($\alpha$-BIM), where $\alpha \in [1,2]$ serves as a tunable parameter based on the confidence level of the prediction. This mechanism obtains $\alpha$-consistency and $\frac{\alpha}{\alpha-1}$-robustness. 

2. We demonstrate the optimality of the $\alpha$-BIM mechanism by showing that no deterministic, strategyproof, and anonymous mechanism can achieve $(\alpha-\varepsilon)$-consistency and $(\frac{\alpha}{\alpha-1}-\varepsilon)$-robustness. We further explore fine-grained approximation ratios parameterized by the prediction error, which smoothly transition from $\alpha$-consistency to $\frac{\alpha}{\alpha-1}$-robustness as the error bound increases.

3. We next revisit open problems relating to randomized mechanisms \emph{without} predictions. We first introduce a class of randomized mechanisms termed $(\alpha,p)$-LRM constant mechanisms, and prove that the $(\frac{\sqrt{5}}{2}-1,\frac{2}{5})$-LRM constant mechanism achieves a $1+\frac{2}{\sqrt{5}}\approx 1.8944$-approximation w.r.t. envy ratio, which is optimal within the family of $(\alpha,p)$-LRM mechanisms. Our results resolve an open question posed by \citet{DLC+20a}, which asked whether a randomized mechanism with an approximation ratio strictly below $2$ could be found.

4. We show that any randomized mechanism without predictions has an approximation ratio of at least $1.1125$, improving the best-known lower bound from $1.0314$.

5. To address the challenging problem of devising randomized mechanisms with predictions, we propose the Bias-Aware Mechanism (BAM), in which the probability distribution of the facility depends on the deviation of location prediction $\hat{y}$ from the interval midpoint $\frac{1}{2}$. We demonstrate that BAM achieves $(-4c^2+2)$-consistency and $(c+2)$-robustness when $c\in [\frac{1}{4}, \frac{1}{2}]$, and $\frac{7}{4}$-consistency and $\frac{9}{4}$-robustness when $c\in [0, \frac{1}{4})$, where $c=|\hat{y}-\frac12|$. BAM strictly outperforms the deterministic $\alpha$-BIM in terms of both consistency and robustness performance guarantees. In addition to BAM, we investigate other potential mechanisms to provide a broader perspective on the learning-augmented mechanism design.
Results lacking full proofs are proven in the appendix.

\subsection{Related Work}
\paragraph{Fairness in Facility Location Mechanism Design} 
Fairness concerns and objectives have been long-studied in facility location problems; early works in operations research (e.g., \citep{mcallister1976equity,marsh1994equity,Mulligan:1991ug}) discuss optimal solutions for fairness objectives such as the Gini coefficient and the mean deviation, quantifying various inequity notions. On the other hand, the seminal paper by \citet{procaccia2009facility} introduces an approximate mechanism design approach, in which they design strategyproof facility location mechanisms with a bounded approximation ratio for various objectives, including the cost/distance incurred by the worst-off agent. This measure of egalitarian fairness has a similar underlying principle as our \emph{envy ratio} objective, which represents, in a multiplicative sense, the envy that the worst-off agent has for the best-off agent. This objective was introduced for the one-facility location problem by \citet{DLC+20a}, and later extended to multiple facilities in the subsequent work \citep{liu2020envyratio}. A similar notion of minimax envy quantifies the envy in an additive sense (i.e., the maximum difference between any two agents' distances from the facility), and was studied by \citet{CFT16a} and \citet{chen2020envy}. \citet{Walsh25equitable} studied Gini index in facility location mechanism design.

Other than egalitarian/worst-off fairness, fairness objectives relating to groups of agents can be considered. For instance, \cite{zhou2022groupfairness,zhou2024altruism} consider two group-fair objectives, the maximum total cost incurred by a group of agents, and the maximum average cost incurred by a group of agents. One may also consider representing group fairness via axioms that must be satisfied by a `group-fair' mechanism. For instance, distance/utility guarantees can be imposed for endogenously defined groups of agents at or near the same location, in which the magnitude of the guarantee is proportional to the size of the group \citep{ALSW22,LAL+24,ALL+25a}. 
For other related work and variations of facility location problems, we refer the reader to a recent survey by \citet{chan2021survey}.

\paragraph{Facility Location Mechanisms with Predictions} 

Research on facility location mechanisms which are augmented with predictions has flourished in recent years, beginning with the paper by \citet{ABG+22a}, which studies deterministic mechanisms that take a prediction of the optimal facility location as an additional input. They provide \emph{best-of-both-worlds} style results, designing mechanisms which perform (in terms of their approximation ratio) \emph{consistently} well when an accurate prediction is provided, and are \emph{robust} to entirely inaccurate predictions. 
The concept of measuring both the consistency and robustness of learning-augmented algorithms was first introduced by \citet{LyVa21a}, and it has since been applied in numerous extensions of facility location problems. \citet{BGS24a} studied randomized mechanisms for the egalitarian/maximum cost objective, whilst \citet{CGI24a} extend the domain to a continuous general metric space. For the two-facility location problem, \citet{BGT24a} design randomized mechanisms which use \emph{mostly} and \emph{approximately} correct (MAC) predictions of the agents' locations, supplementing the work by \citet{XuLu22a}, which explores a deterministic mechanism for the same problem. Aside from facility location problems, learning-augmented algorithms have been used for a wide variety of settings. For additional references, readers may refer to the ALPS website \citep{LiMe22a}.

	 \section{Preliminaries}
 For any $t\in \mathbb{N}$, denote $[t]:=\{1,2,\ldots,t\}$. Let $N=[n]$ be a set of agents, where each agent $i$ has a location\footnote{Our results hold w.l.o.g. for any compact interval domain.} $x_i\in [0,1]$. We denote the location profile of $N$ by $\mathbf{x}=(x_1,x_2,\ldots,x_n)\in [0,1]^n$. A \emph{deterministic mechanism} $f:[0,1]^n\to [0,1]$ takes a location profile $\mathbf{x}$ as input, and outputs a facility location $y\in [0,1]$, under which each agent $i\in N$ has a utility of $u(y,x_i)=1-d(y,x_i)$, where $d(y,x_i)=|y-x_i|$ represents the distance between the facility $y$ and $i$'s location. A \emph{randomized mechanism} $f:[0,1]^n \to \Delta([0,1])$ maps the location profile $\mathbf{x}$ to a probability distribution $P$ over $[0,1]$, under which each agent $i\in N$ has an expected utility of $u(P,x_i)=1-\E_{y\in P}[|y-x_i|]$.
 
 We are primarily focused with the \emph{envy ratio} objective, formally defined for an instance as the ratio between the best-off and worst-off agents' utilities.\footnote{Note that a utility-based formulation is necessary to define this objective, as a distance-based definition results in an unbounded envy ratio whenever the facility coincides with an agent's location.}
 
 \begin{definition}[Envy Ratio]
 	Given a location profile $\x \in \R^n$ and mechanism $f$, the envy ratio is
 	\[
 	\ER(f(\x),\x)=\max_{i\neq j}\frac{u(f(\x),x_i)}{u(f(\x),x_j)}.
 	\]
 \end{definition}
 
 For a given mechanism $f$, we can quantify its worst-case performance (over all possible location profiles) with respect to the envy ratio objective via its \emph{approximation ratio}.
 
 \begin{definition}[Approximation Ratio]
 	A mechanism $f$ is said to have an approximation ratio of $\rho$ if
 	\[
 	\rho =\sup_{\x \in [0,1]^n} \frac{\ER(f(\x),\x)}{\ER(\OPT(\x),\x)},
 	\]
 	where $\OPT(\x)$ is the optimal solution which minimizes the envy ratio for any given location profile $\x\in [0,1]^n$. For any specific instance $\x$, let $\rho(\x)$ denote the approximation ratio of $f$ under $\x$.
 \end{definition}
 
 Note that $\rho\geq 1$ for all $f$. As proven by \citet{DLC+20a}, the $\OPT$ mechanism is the well-known midpoint mechanism which places the facility halfway between the left-most and right-most agent locations.
 
 \begin{lemma}[\citet{DLC+20a}]
 	\label{lem::opt_mid_point_envy_ratio}
 	Given any location profile instance $\x$, the midpoint mechanism $f(\x)=\midp(\x)= \frac{\lm(\x)+\rtm(x)}{2}$ (where $\lm(\x):=\min_{i\in N}\{x_i\}$, and $\rtm(\x):=\max_{i\in N}\{x_i\}$) optimizes the envy ratio objective.
 \end{lemma}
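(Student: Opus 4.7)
The plan is to show pointwise that $\ER(y,\x)\geq \ER(y^*,\x)$ for every $y\in [0,1]$, where $y^*=\frac{L+R}{2}$ with $L=\lm(\x)$ and $R=\rtm(\x)$. First I would rewrite the envy ratio in a useful form: the best-off (resp.\ worst-off) agent is the one closest to (resp.\ furthest from) $y$, so if we let $d_{\min}(y)=\min_i|y-x_i|$ and $d_{\max}(y)=\max_i|y-x_i|=\max(y-L,R-y)$ (using that $L,R$ are the extremal agents), then $\ER(y,\x)=\frac{1-d_{\min}(y)}{1-d_{\max}(y)}$.

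Next I would restrict attention to $y\in[L,R]$ and set $\delta=|y-y^*|$. A direct computation gives $d_{\max}(y)=\tfrac{R-L}{2}+\delta$, and the triangle inequality $|y-x_i|\leq |y^*-x_i|+\delta$ applied to the minimizing index yields $d_{\min}(y)\leq d_{\min}(y^*)+\delta$. Translating back to utilities, this says $u_{\min}(y)=u_{\min}(y^*)-\delta$ and $u_{\max}(y)\geq u_{\max}(y^*)-\delta$, so
\[
\ER(y,\x)\;\geq\;\frac{u_{\max}(y^*)-\delta}{u_{\min}(y^*)-\delta}.
\]
The remaining algebraic step is to observe that for $a\geq b>0$ the map $\delta\mapsto \frac{a-\delta}{b-\delta}$ is non-decreasing on $[0,b)$ (its derivative is $\frac{a-b}{(b-\delta)^2}\geq 0$). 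Applying this with $a=u_{\max}(y^*)$ and $b=u_{\min}(y^*)$ (which satisfies $a\geq b$ tautologically) gives the desired inequality $\ER(y,\x)\geq \ER(y^*,\x)$.

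For $y\notin[L,R]$, the worst-off utility $u_{\min}(y)=1-d_{\max}(y)$ is strictly smaller than at any point in $[L,R]$; a short argument (monotonicity of $\ER$ as $y$ moves toward the interval) shows that such $y$ are dominated by the nearer endpoint, reducing to the already-handled case.

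The main obstacle is the technical condition $u_{\min}(y^*)-\delta>0$ required for the monotonicity step to make sense. Since $\delta\leq \frac{R-L}{2}$ on $[L,R]$ and $u_{\min}(y^*)=1-\frac{R-L}{2}$, this is automatic whenever $R-L<1$. The degenerate boundary case $R-L=1$ with $y$ at an endpoint forces $u_{\min}(y)=0$, making $\ER(y,\x)$ infinite and the desired inequality trivial; I would dispense with it in a one-line aside.
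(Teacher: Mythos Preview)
The paper does not actually prove this lemma: it is attributed to \citet{DLC+20a} and quoted without argument, so there is no in-paper proof to compare against. Your argument is correct and self-contained; the key inequalities $u_{\min}(y)=u_{\min}(y^*)-\delta$ and $u_{\max}(y)\geq u_{\max}(y^*)-\delta$ together with the monotonicity of $\delta\mapsto\frac{a-\delta}{b-\delta}$ do the job, and the boundary cases are handled appropriately. It is worth noting that your bounding technique is essentially the same one the paper later deploys in its proof of Lemma~\ref{lem::two_agent} (see the appendix), where the authors write ``the maximum utility is at most $\tilde{u}+|y-\midp(\x)|$, while the minimum utility under $y$ is at least $\bar{u}-|y-\midp(\x)|$'' and then exploit the same ratio monotonicity; so while no comparison to a proof of this particular lemma is possible, your approach is fully in keeping with the paper's methods.
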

 
 Throughout the paper, our proofs focus on the case where $\lm(\x) < \rtm(\x)$ and omit the case where $\lm(\x) = \rtm(\x)$, in which any feasible facility location $f(\x)$  achieves an optimal envy ratio of $1$. 
 
 As standard in facility location mechanism design, we assume that the agents' true locations are private information, and that the mechanism takes as input the locations which are \emph{reported} by the agents. Accordingly, we restrict our attention to \emph{strategyproof} mechanisms, which disincentivize agents from misreporting their location.
 \begin{definition}[Strategyproofness]
 	A mechanism $f$ is \emph{strategyproof} if for any location profile $\mathbf{x}\in [0,1]^n$, we have
 	$u(f(\x),x_i) \geq u(f(\x_{-i},x_i'),x_i)$
 	for all $i\in N$ and $x_i^\prime$.
 \end{definition}
 Note that by definition, strategyproofness is defined in expectation if $f$ is a randomized mechanism.
 
 In this paper, we discuss \emph{learning-augmented} mechanisms, which take a prediction $\hat{y}$ of the optimal facility location as an additional input. We denote these mechanisms by $f(\x, \hat{y})$. Our goal is to design strategyproof mechanisms which have best-of-both-worlds approximation ratio guarantees, performing \emph{consistently} well when $\hat{y}$ is a perfectly accurate prediction, and also being \emph{robust} to inaccurate predictions of the optimal solution. Formally, we define the two performance metrics as follows.

 \begin{definition}[$\gamma$-consistency]
 	\label{def:consistent}
 	A mechanism $f$ is $\gamma$-\emph{consistent} if it achieves an approximation ratio of $\gamma$ when given a \textit{correct} prediction $\hat{y}=\OPT(\x)$, i.e.,
 	\[
 	\gamma = \sup_{\x \in [0,1]^n} \frac{\ER(f(\x, \OPT(\x)),\x)}{\ER(\OPT(\x),\x)}.
 	\]
 \end{definition}
 
 \begin{definition}[$\beta$-robustness]
 	\label{def:robust}
 	A mechanism $f$ is $\beta$-\emph{robust} if it achieves an approximation ratio of $\beta$ under \emph{any} prediction $\hat{y}$, i.e.,
 	\[
 	\beta = \sup_{\x \in [0,1]^n, \hat{y}\in [0,1]} \frac{\ER(f(\x, \hat{y}),\x)}{\ER(\OPT(\x),\x)}.
 	\]
 \end{definition}
 
 Note that the mechanism which always places the facility at the predicted location $\hat{y}$ is $1$-consistent but has unbounded robustness. We also remark that if a mechanism does not admit a prediction as input and is $\rho$-approximate, then it is $\rho$-consistent and $\rho$-robust in the learning-augmented setting.

	\section{Deterministic Mechanisms with Prediction}
We begin with the deterministic setting, in which any strategyproof mechanism without predictions is known to have an approximation ratio of at least $2$ (Theorem 1 in \citep{DLC+20a}), and that this lower bound is matched by the constant-$\frac12$ mechanism which always places the facility at $\frac{1}{2}$. By admitting a facility location prediction as an additional input, we are able to extend the Constant-$\frac12$ mechanism to the following $\alpha$-Bounding Interval Mechanism, which defines an interval based on a parameter $\alpha\in [1,2]$, and places the facility at the prediction $\hat{y}$ if it lies within this interval. Otherwise, the facility is placed at a boundary point of this interval.



\begin{algorithm}[!htbp]
	\caption{$\alpha$-Bounding Interval Mechanism ($\alpha$-BIM)}
	\label{alg::alpha_consist_mechanism}
	\begin{algorithmic}[1] 
		\REQUIRE Location profile $\x$, facility location prediction $\hat{y}$, and parameter $\alpha \in [1,2]$. 
		\ENSURE Facility location $f(\x,\hat{y})$.
		\IF{$\hat{y}\in [1-\frac{1}{\alpha}, \frac{1}{\alpha}]$}
		\STATE Return $f(\x,\hat{y})\leftarrow \hat{y}$;
		\ELSIF{$\hat{y}\in (\frac{1}{\alpha},1]$}
		\STATE Return $f(\x,\hat{y})\leftarrow \frac{1}{\alpha}$;
		\ELSE 
		\STATE Return $f(\x,\hat{y})\leftarrow 1-\frac{1}{\alpha}$;
		\ENDIF
	\end{algorithmic}
\end{algorithm}

Note that the output of this mechanism ranges from $f(\x, \hat{y})=\hat{y}$ when $\alpha=1$, to $f(\x, \hat{y})=\frac12$ when $\alpha=2$, and thus its performance ranges from $1$-consistency and unbounded robustness to $2$-consistency and $2$-robustness. 
As we will show, the $\alpha$-Bounding Interval Mechanism specifically has $\alpha$-consistency and $\frac{\alpha}{\alpha - 1}$-robustness. While we do not achieve a strict improvement over the $2$-consistency and $2$-robustness of the Constant-$\frac12$ mechanism, the added flexibility from the $\alpha$ parameter enables the central decision maker to choose their desired consistency-robustness tradeoff depending on their confidence in the prediction accuracy. We also remark that the mechanism is additionally \emph{anonymous}, meaning that the output is invariant under any permutation of the agents' labelings.
Before analyzing the consistency and robustness of $\alpha$-BIM, we first introduce a crucial lemma which simplifies the space of location profiles which need to be considered.

\begin{lemma}\label{lem::two_agent}
	For any instance $\x$, and a distribution of facility locations $P$, there always exists a 2-agent instance $\x'=(\lm(\x), \rtm(\x))$ such that $\E_{y\in P}\left[\frac{\ER(y,\x)}{\ER(\midp(\x),\x)}\right] \le \E_{y\in P}\left[\frac{\ER(y,\x')}{\ER(\midp(\x'),\x')}\right]$.
\end{lemma}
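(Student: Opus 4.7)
The plan is to establish the inequality \emph{pointwise} in $y$, from which the expectation version follows by linearity. Since both $\ER(\midp(\x),\x)$ and $\ER(\midp(\x'),\x')=1$ (by Lemma~\ref{lem::opt_mid_point_envy_ratio}) are constant in $y$, it suffices to show
\[
\frac{\ER(y,\x)}{\ER(\midp(\x),\x)} \;\le\; \ER(y,\x')
\]
for each $y\in[0,1]$. Normalize so that $\lm(\x)=0$, and write $b=\rtm(\x)$, $M=b/2$. For any facility location the worst-off agent is always an endpoint of $[0,b]$, so $\ER(y,\x)$ and $\ER(y,\x')$ share the denominator $1-\max(y,b-y)$; they differ only through the best-off numerator $1-\min_i|y-x_i|$ versus $1-\min(y,b-y)$.

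I would first dispose of the case $y\notin[0,b]$. The closest agent to $y$ in $\x$ is then an endpoint, so $\min_i|y-x_i|=\min(y,b-y)$ and $\ER(y,\x)=\ER(y,\x')$; the desired bound collapses to $1\le\ER(\midp(\x),\x)$, which is immediate since every envy ratio is at least $1$. For the principal case $y\in[0,b]$, by reflection I may assume $y\le M$. Setting $d_y=\min_i|y-x_i|$ and $m=\min_i|M-x_i|$, cross-multiplying reduces the target to
\[
(1-d_y)(1-M) \;\le\; (1-m)(1-y).
\]
Letting $x^{\ast}$ realize the minimum defining $d_y$, I have $m\le|M-x^{\ast}|$, so it suffices to prove the \emph{stronger} one-variable inequality
\[
(1-|y-x^{\ast}|)(1-M)\;\le\;(1-|M-x^{\ast}|)(1-y)
\]
for arbitrary $x^{\ast}\in[0,b]$.

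I would then split on the three subcases $x^{\ast}\in[0,y]$, $x^{\ast}\in[y,M]$, and $x^{\ast}\in[M,b]$. Resolving the absolute values and expanding, the first and third subcases factor cleanly as $(\text{RHS}-\text{LHS})=x^{\ast}(M-y)$ and $(M-y)(2-x^{\ast})$ respectively, both non-negative under $y\le M\le 1$ and $b\le 1$. Taking the expectation over $y\sim P$ then yields the lemma.

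The main obstacle is the middle subcase $x^{\ast}\in[y,M]$, where the expansion does not reduce to a single linear factor; instead $(\text{RHS}-\text{LHS})=x^{\ast}(2-M-y)-2y(1-M)$, an affine function of $x^{\ast}$. Here I would use $M+y\le 2$ to conclude the coefficient of $x^{\ast}$ is non-negative, so the minimum over $[y,M]$ is attained at the left endpoint $x^{\ast}=y$ and evaluates to $y(M-y)\ge 0$. The three subcases combined cover all positions of the critical agent $x^{\ast}$ and close the pointwise inequality.
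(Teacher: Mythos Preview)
Your proof is correct, and like the paper you establish the inequality pointwise in $y$ and then take expectations. The routes diverge after that shared first step, however.

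The paper never isolates a ``nearest agent'' $x^{\ast}$. Instead it argues in terms of the optimal utilities $\tilde u=\max_i u(\midp(\x),x_i)$ and $\bar u=\min_i u(\midp(\x),x_i)$: moving the facility from $\midp(\x)$ to $y$ changes any agent's utility by at most $|y-\midp(\x)|$, so $\ER(y,\x)\le (\tilde u+|y-\midp(\x)|)/(\bar u-|y-\midp(\x)|)$, and then the ratio of this to $\tilde u/\bar u$ is shown to be non-increasing in $\tilde u$, which lets one replace $\tilde u$ by $\bar u$ --- exactly the two-agent instance $\x'$. The case $y\notin[\lm(\x),\rtm(\x)]$ is handled by a parallel (but distinct) utility computation rather than your observation that $\ER(y,\x)=\ER(y,\x')$ there.

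Your argument is more hands-on: after normalizing and reducing to the single inequality $(1-|y-x^{\ast}|)(1-M)\le(1-|M-x^{\ast}|)(1-y)$ via the overshoot $m\le|M-x^{\ast}|$, you finish with a three-way case split on the position of $x^{\ast}$ and explicit factorizations. This buys you a completely elementary, self-contained verification that avoids the monotonicity-in-$\tilde u$ step; the paper's approach buys a cleaner conceptual story (``replacing the best-off agent by a copy of the worst-off agent only increases the approximation ratio'') that scales more readily to variants. Both are equally valid here.
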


The proof idea is that given any instance $\x$ and for any location $y\in P$, either $y\in [\lm(\x), \rtm(\x)]$ or $y \notin [\lm(\x), \rtm(\x)]$, we show that the approximation ratio under $\x$ is always upper-bounded by that under $\x'=(\lm(\x), \rtm(\x))$. The complete proof is relegated to \Cref{sec::lem_two_agents}.


By \Cref{lem::two_agent}, when analyzing the performance of mechanisms, we only need to focus on 2-agent instances. We now formally prove the consistency and robustness of $\alpha$-BIM.

\begin{theorem} 
	\label{alg1_consistent_robust_proof}
	$\alpha$-BIM is anonymous, strategyproof, and satisfies $\alpha$-consistency and $\frac{\alpha}{\alpha-1}$-robustness.
\end{theorem}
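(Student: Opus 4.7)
My plan is to handle strategyproofness and anonymity by direct inspection, reduce the approximation-ratio analysis to two-agent instances via \Cref{lem::two_agent}, and then verify the consistency and robustness bounds through a short case analysis on the branch of $\alpha$-BIM that is triggered.

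Strategyproofness and anonymity will follow immediately from the observation that $f(\x,\hat y)$ is determined solely by the prediction $\hat y$ and the constant $\alpha$; no agent's report $x_i$ can alter the outcome (so $u(f(\x),x_i)=u(f(\x_{-i},x_i'),x_i)$ for every deviation $x_i'$), and the output is trivially invariant under relabellings of the agents. For the approximation-ratio bounds, \Cref{lem::two_agent} lets me restrict attention to the two-agent profile $\x'=(a,b)$ with $a=\lm(\x)\le b=\rtm(\x)$; on such a profile the midpoint mechanism gives both agents the common utility $1-(b-a)/2$, so $\ER(\OPT(\x'),\x')=1$ and the approximation ratio coincides with $\ER(f(\x',\hat y),\x')$ itself.

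For $\alpha$-consistency I set $\hat y=(a+b)/2$ and split on the three branches of the mechanism. When $(a+b)/2\in[1-\frac{1}{\alpha},\frac{1}{\alpha}]$ the facility is placed at the midpoint and $\ER=1\le\alpha$. When $(a+b)/2>\frac{1}{\alpha}$ the facility is snapped to $z=\frac{1}{\alpha}$; from $a+b>2/\alpha$ together with $a\le b$ I get $b>\frac{1}{\alpha}\ge z$, and a short check of the sub-cases $a\le z$ vs.\ $a>z$ shows agent $b$ is worst-off in both, with utility $1-(b-z)=1-b+\frac{1}{\alpha}\ge\frac{1}{\alpha}$ (using $b\le 1$), while the best-off utility is at most $1$; hence $\ER\le\alpha$. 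The third branch $(a+b)/2<1-\frac{1}{\alpha}$ is symmetric.

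For $\frac{\alpha}{\alpha-1}$-robustness the crucial observation is that for every $\hat y$ the output $z:=f(\x,\hat y)$ always lies inside $[1-\frac{1}{\alpha},\frac{1}{\alpha}]$. Consequently, for any agent at $x_i\in[0,1]$ one has $1-|z-x_i|\ge 1-\max(z,1-z)\ge 1-\frac{1}{\alpha}=\frac{\alpha-1}{\alpha}$, while the best-off utility is at most $1$, so $\ER(f(\x,\hat y),\x)\le\frac{\alpha}{\alpha-1}$; combined with $\ER(\OPT(\x),\x)\ge 1$, this yields the claimed robustness. The only step that requires genuine care is the consistency branch where the facility is snapped to a boundary of $[1-\frac{1}{\alpha},\frac{1}{\alpha}]$---one must verify that the worst-off agent remains the extreme right (resp.\ left) agent regardless of whether $z$ lies inside or outside the interval $[a,b]$---but once that is in place, both bounds collapse to the elementary inequalities $1-b+\frac{1}{\alpha}\ge\frac{1}{\alpha}$ and $z,1-z\le\frac{1}{\alpha}$.
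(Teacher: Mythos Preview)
Your argument is correct and follows essentially the same route as the paper: trivial strategyproofness/anonymity from independence of $\x$, reduction to two-agent profiles via \Cref{lem::two_agent}, and the same utility bounds ($\min u\ge\frac{1}{\alpha}$ in the snapped-consistency branch, $\min u\ge 1-\frac{1}{\alpha}$ everywhere for robustness). The only omission is that the paper additionally exhibits tight instances (e.g., $\x=(0,1-\tfrac{1}{\alpha})$ for consistency and $\x=(1-\tfrac{1}{\alpha},1)$ with $\hat y<1-\tfrac{1}{\alpha}$ for robustness) to show the bounds are attained, which is needed because Definitions~\ref{def:consistent} and~\ref{def:robust} are stated as equalities rather than inequalities.
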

\begin{proof}
	$\alpha$-BIM is trivially strategyproof and anonymous, as the output is independent of the agents' locations. We next move to the analysis of consistency and robustness. From \Cref{lem::two_agent} we only need to consider instances $\x=(x_1, x_2)$ with two agents where $x_1 < x_2$.
	
	\textbf{(Consistency)}. Consider an arbitrary 2-agent instance $\x$ in which $\hat{y}$ is accurate, i.e., $\hat{y}=\midp(\x)$. If $\midp(\x)=\hat{y}\in [1-\frac{1}{\alpha}, \frac{1}{\alpha}]$, $\alpha$-BIM trivially satisfies $1$-consistency. There are two remaining cases: either $\midp(\x)=\hat{y}\in [0, 1-\frac{1}{\alpha})$ or $\midp(\x)\in (\frac{1}{\alpha}, 1]$. Due to symmetry, it suffices to focus on the former case, in which $\alpha$-BIM returns $f(\x,\hat{y})=1-\frac{1}{\alpha}$. Since $\hat{y} < 1-\frac{1}{\alpha}$, the maximum utility achieved by the facility location $1-\frac{1}{\alpha}$ is contributed by $x_2$, and the minimum utility achieved by the facility location $1-\frac{1}{\alpha}$ is contributed by $x_1$. Moreover, the utility of the agent at $x_2$ is at most $1$. The utility of the agent at $x_1$ is at least $1-(1-\frac{1}{\alpha})=\frac{1}{\alpha}$. Hence, the consistency is at most
	\begin{align*}
		\gamma = \sup_{\x \in [0,1]^n} \frac{\ER(f(\x, \midp(\x)),\x)}{\ER(\midp(\x),\x)}\leq  \frac{1}{1-(1-\frac{1}{\alpha})}=\alpha.
	\end{align*}
	For a matching lower bound, consider an instance with $2$ agents located at $x_1=0$ and $x_2=1-\frac{1}{\alpha}$. Here, Mechanism~\ref{alg::alpha_consist_mechanism} places the facility at $1-\frac{1}{\alpha}$, leading to a consistency of at least $\alpha$. Therefore, we conclude that $\alpha$-BIM achieves $\alpha$-consistency.
	
	\textbf{(Robustness)}. Consider an arbitrary 2-agent instance $\x$, suppose the mechanism outputs $y$. Since $y\in [1-\frac{1}{\alpha}, \frac{1}{\alpha}]$, the minimum utility is at least $1-\frac{1}{\alpha}$ and the maximum utility is at most $1$. Hence, the robustness is at most
	\begin{align*}
		\beta = \sup_{\x \in [0,1]^n, \hat{y}\in [0,1]} \frac{\ER(f(\x, \hat{y}),\x)}{\ER(\midp(\x),\x)}\leq \frac{1}{1-\frac{1}{\alpha}} =\frac{\alpha}{\alpha-1}.
	\end{align*}
	For a corresponding lower bound, consider a $2$-agent instance, with the agents located at $1 - \frac{1}{\alpha}$ and $1$. The optimal facility location in this case would be $1 - \frac{1}{2\alpha}$, achieving an envy ratio of $1$. If $\hat{y} \in [0, 1 - \frac{1}{\alpha})$, then the mechanism selects $1 - \frac{1}{\alpha}$ as the facility location, leading to an envy ratio (and therefore robustness lower bound) of $\frac{\alpha}{\alpha - 1}$. 
\end{proof}

We next show that for the envy ratio objective, $\alpha$-BIM obtains the best possible consistency and robustness guarantees among all strategyproof and anonymous deterministic mechanisms, establishing the optimality of the mechanism.\footnote{For $\alpha=1$, it is trivial that no mechanism with $1$-consistency can achieve bounded robustness.}

\begin{theorem}[Optimality]
	\label{thm::optimality_algorithm_1}
	Given any parameter $\alpha \in (1,2]$, there is no deterministic, strategyproof, and anonymous mechanism that is $(\alpha-\varepsilon)$-consistent and $(\frac{\alpha}{\alpha-1}-\varepsilon)$-robust with respect to envy ratio, for any $\varepsilon > 0$.
\end{theorem}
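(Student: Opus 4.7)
The plan is to argue by contradiction using Moulin's classical characterization of anonymous strategyproof mechanisms for single-peaked preferences on the line. Suppose, for the sake of contradiction, that $f$ is anonymous, strategyproof, $(\alpha-\varepsilon)$-consistent, and $(\frac{\alpha}{\alpha-1}-\varepsilon)$-robust for some fixed $\varepsilon > 0$. By \Cref{lem::two_agent}, it suffices to work with 2-agent instances, so I would fix the prediction $\hat{y}^\star = \tfrac{1}{2}(1-\tfrac{1}{\alpha})$ and examine two canonical profiles under this common prediction: $\x^A = (0,\, 1-\tfrac{1}{\alpha})$, whose midpoint equals $\hat{y}^\star$ (so consistency applies), and $\x^B = (1-\tfrac{1}{\alpha},\, 1)$, whose midpoint differs from $\hat{y}^\star$ (so robustness applies).

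By Moulin's theorem, for each fixed prediction $\hat{y}$ the restriction $(x_1,x_2)\mapsto f((x_1,x_2),\hat{y})$ is a generalized median rule: there exist phantoms $\phi_0 \le \phi_1 \le \phi_2$ (depending on $\hat{y}$) such that $f((x_1,x_2),\hat{y}) = \mathrm{med}(\phi_0, x_1, \phi_1, x_2, \phi_2)$. Specializing to $\hat{y}^\star$ and computing the envy ratio directly on $\x^A$, one sees that any facility at $0$ or at $1-\tfrac{1}{\alpha}$ incurs envy ratio exactly $\alpha$. Consistency therefore forces the output to lie strictly inside the open interval $(0, 1-\tfrac{1}{\alpha})$ with an $\varepsilon$-dependent gap from each endpoint. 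A short case analysis of the median formula, splitting on where each $\phi_i$ lies relative to $[0, 1-\tfrac{1}{\alpha}]$, shows that this can only be achieved when $\phi_1 \in (0, 1-\tfrac{1}{\alpha})$, in which case the output is exactly $\phi_1$.

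With this structural constraint on the phantoms in hand, I would turn to $\x^B$. Since $\phi_0 \le \phi_1 < 1-\tfrac{1}{\alpha}$, the median formula on $\x^B$ produces only two possibilities: the output is $\phi_2$ when $\phi_2 < 1-\tfrac{1}{\alpha}$, and $1-\tfrac{1}{\alpha}$ otherwise. A direct calculation yields envy ratio $1 + \tfrac{1}{\alpha\phi_2}$ in the first sub-case, which strictly exceeds $\tfrac{\alpha}{\alpha-1}$ whenever $\phi_2 < 1-\tfrac{1}{\alpha}$; in the second sub-case the envy ratio equals exactly $\tfrac{\alpha}{\alpha-1}$. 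Either way, the achieved envy ratio is at least $\tfrac{\alpha}{\alpha-1} > \tfrac{\alpha}{\alpha-1} - \varepsilon$, contradicting the assumed robustness.

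The main obstacle is not conceptual but bookkeeping: one must carefully handle every configuration of the three phantoms relative to the agent locations (including the degenerate cases where a phantom coincides with an agent's position or falls outside $[0,1]$), and verify that each configuration is either ruled out by consistency on $\x^A$ or forces the envy-ratio violation on $\x^B$. Once the phantom characterization is invoked, every ingredient reduces to elementary algebra, but the case split has to be exhausted cleanly to close the argument for every $\varepsilon > 0$.
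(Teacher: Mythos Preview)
Your core strategy—invoking Moulin's phantom characterization and pinning down the phantoms via two carefully chosen profiles sharing a common prediction—is exactly what the paper does, and your two-agent analysis on $\x^A$ and $\x^B$ is correct. However, your reduction to two-agent instances via \Cref{lem::two_agent} is not valid for an impossibility argument. That lemma upper-bounds the approximation ratio of a \emph{fixed} facility distribution on an $n$-agent profile by its ratio on the induced two-agent profile; it is a tool for proving \emph{upper bounds} on a mechanism's performance, not for reducing a lower-bound proof from $n$ agents to $2$ agents. Consequently, your use of Moulin with exactly three phantoms $\phi_0\le\phi_1\le\phi_2$ is only justified when $n=2$; for general $n$ the characterization yields $n+1$ phantoms and the median is taken over $2n+1$ points, so the case split you describe does not transfer verbatim.

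The paper avoids this by working with $n+1$ phantoms from the outset and reversing your order of argument: it first uses the robustness hypothesis (via profiles with $n-1$ agents placed at a putative out-of-range phantom and one agent at $1$) to force \emph{every} phantom into $[1-\tfrac{1}{\alpha},\tfrac{1}{\alpha}]$, and then observes that any facility confined to that interval incurs envy ratio at least $\alpha$ on the profile with $n-1$ agents at $\tfrac{1}{\alpha}$ and one at $1$, contradicting $(\alpha-\varepsilon)$-consistency. Your argument for $n=2$ can certainly be lifted to general $n$ along the same lines, but you would need to drop the appeal to \Cref{lem::two_agent}, specify how the $n$ agents are split between the two locations in each of $\x^A$ and $\x^B$, and redo the median case analysis with $n+1$ phantoms so that the relevant phantom is isolated. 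That is precisely the bookkeeping you flagged as the obstacle, and it is what is missing.
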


\begin{proof}
	By the characterization of \citet{Moulin80a} (Proposition 2), a deterministic strategyproof and anonymous mechanism must be a phantom mechanism with $n+1$ `constant' points/phantoms. A phantom mechanism places the facility at the median of the $n$ agent points and the $n+1$ constant points. Note that the `phantom' locations may be a function of the prediction.

	Observe that when $\alpha\in (1,2]$, we have $1-\frac{1}{\alpha}\le \frac{1}{\alpha}$. Next, given any prediction $\hat{y}$, we will show that all $n+1$ phantoms must be in $[1-\frac{1}{\alpha}, \frac{1}{\alpha}]$ in order for the robustness to be $\frac{\alpha}{\alpha-1}$ or better. To see this, suppose for contradiction that one of those phantoms (denoted by $p_i=f_i(\hat{y})$) is in $[0,1-\frac{1}{\alpha})$. Since $p_i$ only depends on $\hat{y}$, we have that $p_i\in [0,1-\frac{1}{\alpha})$ is a fixed point for every set of locations $x_i,\dots, x_n$. Now consider a location profile with $n-1$ agents at $p_i$ and one agent at $1$. Under this location profile, the facility will be placed at ${p_i}$, which leads to an envy ratio of $p_i$ and implies an approximation ratio of at least $\frac{1}{p_i}$. Since $p_i<1-\frac{1}{\alpha}$, the robustness will be strictly greater than $\frac{\alpha}{\alpha-1}$.
	
	Next, for the same fixed $\hat{y}$, we consider another location profile with $n-1$ agents at $\frac{1}{\alpha}$ and one agent at $1$. The facility will be placed in the interval $[1-\frac{1}{\alpha}, \frac{1}{\alpha}]$, leading to $\alpha$-consistency at best. Therefore, if the robustness is $\frac{\alpha}{\alpha-1}$ or better, the consistency cannot be better than $\alpha$, proving the result.
\end{proof}

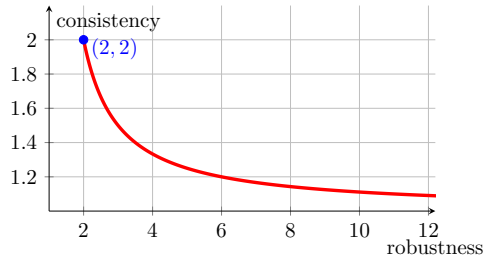
\begin{figure}[!htbp]
	\centering
	\scalebox{.8}{
		\begin{tikzpicture}
			\begin{axis}[
				axis lines=middle,
				grid=both,
				xmin=1, xmax=12.2,   
				ymin=1, ymax=2.2, 
				xlabel={robustness}, 
				ylabel={consistency}, 
				xtick={1,2,4,6,8,10,12}, 
				ytick={1,1.2,1.4,1.6,1.8,2}, 
				domain=1.01:2, 
				samples=1000,  
				xlabel style={at={(1,-0.1)}, anchor=north}, 
				width=8cm, height=5cm 
				]
				\addplot[red, ultra thick] ({x/(x-1)}, x);
				\addplot[blue, mark=*] coordinates {(2,2)};
				\node at (1, 95) [anchor=west, blue] {$(2, 2)$};
			\end{axis}
	\end{tikzpicture}}
	\caption{Trade-off between consistency and robustness under $\alpha$-BIM}
	\label{fig::po_frontier}
\end{figure}
We also depict the trade-off between consistency and robustness, as determined by the parameter $\alpha$, in Figure~\ref{fig::po_frontier} below. One may adjust the parameter $\alpha$ according to the confidence of the prediction, i.e., setting a small (resp. large) $\alpha$ when the confidence in the prediction is high (resp. low). 

\paragraph{Approximation Ratio Parameterized by Prediction Error}
We now extend the consistency and robustness results for $\alpha$-BIM to obtain a refined approximation ratio parameterized by the prediction error. Let $y^*$ denote the optimal facility location $\OPT(\mathbf{x})$ and $\eta$ denote the upper bound of the distance gap between the optimal location and prediction location, i.e., $\eta = \sup |\hat{y}-y^\ast|$, and $\rho_{\alpha}(\eta)$ be the approximation ratio for any specific $\alpha$ under prediction error $\eta$. 

\begin{theorem}
	\label{thm:approximation_ratio_parameterized_by_prediction_error}
	Let $\eta$ denote $\sup |\hat{y}-y^\ast|$. When $\alpha \in [1,\frac{1+\sqrt{5}}{2}]$, the approximation ratio is 
	\begin{align*}
		\rho_{\alpha}(\eta)=
		\begin{cases}
			\alpha & \eta \in [0, \frac{\alpha-1}{2(\alpha+1)}] \\
			1+\frac{4\eta}{1-2\eta} & \eta \in (\frac{\alpha-1}{2(\alpha+1)}, \frac{1}{\alpha}-\frac{1}{2}] \\
			1+\frac{2\alpha \eta}{\alpha-1} & \eta \in (\frac{1}{\alpha}-\frac{1}{2}, \frac{1}{2\alpha}] \\
			\frac{\alpha}{\alpha -1} & \eta \in (\frac{1}{2\alpha}, +\infty)
		\end{cases}.
	\end{align*}
	When $\alpha \in (\frac{1+\sqrt{5}}{2}, 2]$, the approximation ratio is
	\begin{align*}
		\rho_{\alpha}(\eta)=
		\begin{cases}
			\alpha & \eta \in [0, \frac{(\alpha-1)^2}{2\alpha}] \\
			1+\frac{2\alpha \eta}{\alpha-1} & \eta \in (\frac{(\alpha-1)^2}{2\alpha}, \frac{1}{2\alpha}]\\
			\frac{\alpha}{\alpha -1} & \eta \in (\frac{1}{2\alpha}, +\infty)
		\end{cases}.
	\end{align*}
\end{theorem}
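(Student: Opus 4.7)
By \Cref{lem::two_agent} it suffices to consider two-agent instances $\x=(x_1,x_2)$; let $y^\ast=\frac{x_1+x_2}{2}$ and $M=\frac{x_2-x_1}{2}$. The mechanism $\alpha$-BIM is symmetric about $\frac{1}{2}$, so I assume WLOG $y^\ast\le\frac{1}{2}$, in which case the largest feasible half-spread is $M=y^\ast$. Setting $\Delta=|y-y^\ast|$ for the facility $y=f(\x,\hat y)$, a short case split on whether $y$ lies between or outside the two agents, followed by maximization over $M\in[0,y^\ast]$, shows that the envy ratio collapses to
\[
\ER(y,\x) \;=\; \frac{1-|y^\ast-\Delta|}{1-y^\ast-\Delta},
\]
a quantity non-decreasing in both $y^\ast$ and $\Delta$. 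Thus computing $\rho_\alpha(\eta)$ reduces to determining, for each $y^\ast\in[0,\frac{1}{2}]$, the largest $\Delta$ the adversary can realize subject to $|\hat y-y^\ast|\le\eta$, then maximizing over $y^\ast$.

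\textbf{The three candidate corners.} Tracing through the three output branches of $\alpha$-BIM, the adversary's achievable $\Delta$ decomposes into $\Delta_A\le\min(\eta,\,1/\alpha-y^\ast)$ (facility placed at $\hat y\in[1-1/\alpha,1/\alpha]$), $\Delta_B=|1-1/\alpha-y^\ast|$ (facility clipped to $1-1/\alpha$, available when $y^\ast-\eta<1-1/\alpha$), and $\Delta_C=1/\alpha-y^\ast$ (facility clipped to $1/\alpha$, available when $y^\ast+\eta>1/\alpha$). Combined with the monotonicity of $\ER$, three extremal configurations drive the supremum: \emph{(i)} $y^\ast=\frac{1-1/\alpha}{2}$ via the $B$ branch, yielding $\ER=\alpha$; \emph{(ii)} $y^\ast=\frac{1}{2}$ via the $A$ branch with $\hat y=\frac{1}{2}+\eta$, yielding $\ER=1+\frac{4\eta}{1-2\eta}$ provided $\eta\le 1/\alpha-\frac{1}{2}$; \emph{(iii)} $y^\ast\to(1/\alpha-\eta)^+$ via the $C$ branch, yielding $\ER=1+\frac{2\alpha\eta}{\alpha-1}$ and saturating at $\frac{\alpha}{\alpha-1}$ once $\eta\ge\frac{1}{2\alpha}$ (at which point the adversary fixes $y^\ast=\frac{1}{2\alpha}$).

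\textbf{Assembling the piecewise formula.} The ratio $\rho_\alpha(\eta)$ equals the pointwise maximum of (i)--(iii) together with the robustness cap $\frac{\alpha}{\alpha-1}$. Equating (ii) with (i) gives the breakpoint $\eta=\frac{\alpha-1}{2(\alpha+1)}$, while equating (iii) with (i) gives $\eta=\frac{(\alpha-1)^2}{2\alpha}$; regime (ii) is only valid up to $\eta=1/\alpha-\frac{1}{2}$, and (iii) saturates at $\eta=\frac{1}{2\alpha}$. Whether (ii) is ever active reduces to comparing $\frac{\alpha-1}{2(\alpha+1)}$ with $1/\alpha-\frac{1}{2}$, which simplifies to the sign of $\alpha^2-\alpha-1$ and hence switches at $\alpha=\frac{1+\sqrt{5}}{2}$. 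For $\alpha\in[1,\frac{1+\sqrt{5}}{2}]$ all four regimes appear, producing the first piecewise formula; for $\alpha\in(\frac{1+\sqrt{5}}{2},2]$ regime (ii) is dominated throughout by (i) or (iii), leaving the three-piece formula.

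\textbf{Main obstacle.} The chief difficulty is not any individual computation but the coherence of the case analysis: one must verify that no interior $y^\ast$ combining an intermediate $\Delta_A$ with a boundary $\Delta_B$ or $\Delta_C$ can outperform the three corners, and that the induced envelope has exactly the stated breakpoint structure. The monotonicity of $\ER$ in $(y^\ast,\Delta)$ is the key tool ruling out off-corner interior maxima, but it must be applied while carefully tracking which branch supplies the dominant $\Delta$ for each $y^\ast$. The matching lower-bound instances required for tightness come for free from the three corner constructions, so no additional adversarial examples need to be built once the case analysis is complete.
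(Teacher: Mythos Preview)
Your approach is correct and follows essentially the same strategy as the paper: reduce to two-agent instances via \Cref{lem::two_agent}, split according to which output branch of $\alpha$-BIM is triggered, and bound the envy ratio case by case to obtain the three candidate values $\alpha$, $1+\frac{4\eta}{1-2\eta}$, $1+\frac{2\alpha\eta}{\alpha-1}$ and the robustness cap. The paper organizes the split by first fixing whether $\hat y$ lies in $[1-\tfrac{1}{\alpha},\tfrac{1}{\alpha}]$ and then ranging over $\eta$, whereas you work through a unified formula $\ER=(1-|y^\ast-\Delta|)/(1-y^\ast-\Delta)$ and its monotonicity to isolate the same three corners; this is a mild repackaging of the same computation, and the breakpoint algebra you give matches the paper's exactly. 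One case you leave implicit---branch $B$ with $y^\ast>1-\tfrac{1}{\alpha}$, which in the paper produces the intermediate bound $1/(1-2\alpha\eta)$---is indeed always dominated by your corners (ii) or (iii), so your three-corner claim is valid, but spelling this out is part of the ``coherence of the case analysis'' you flag as the main obstacle.
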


\begin{proof}[Proof Sketch]
	Recall that $\alpha$-BIM places the facility at $\hat{y}$ when it is in the interval $[1-\frac{1}{\alpha}, \frac
	{1}{\alpha}]$, and otherwise places it at the nearest endpoint of that interval. We obtain the approximation ratio as a function of $\eta$ by treating these two placement regimes separately and taking the worst case in each. When $\hat{y} \in [1-\frac{1}{\alpha}, \frac
	{1}{\alpha}]$, by \Cref{lem::two_agent}, we focus on the two-agent instances and analyze the worst case approximation ratio parameterized by $\eta$ when moving the facility from $y^*$ to $\hat{y}$ as $\eta$ grows from $0$ to $\infty$. Similarly, when $\hat{y} \notin [1-\frac{1}{\alpha}, \frac{1}{\alpha}]$ and the facility is placed at the nearest endpoint. We again consider the worst case when moving the facility from $y^*$ to $1-\frac{1}{\alpha}$ (or $\frac{1}{\alpha}$). Taking the worst case over the two placement regimes produces a single approximation ratio expressed as a piecewise function of $\eta$, which monotonically increases from $\alpha$ to $\frac{\alpha}{\alpha -1}$ as the error bound $\eta$ increases continuously.
\end{proof}

To better illustrate \Cref{thm:approximation_ratio_parameterized_by_prediction_error}, we present the approximation ratios in \Cref{fig:plot_prediction_error}. For each fixed value of $\alpha$, the approximation ratio is a piecewise function, which is smooth, specifically, continuous and monotonic with respect to error bound $\eta$. We further observe that when the error bound $\eta \leq \frac{\sqrt{5}}{2}-1\approx 0.118$, the approximation ratio achieves $\frac{\sqrt{5}+1}{2}\approx 1.62$, which shows that $\alpha$-BIM can substantially improve the ratio with a well-performed prediction model.
\begin{figure}[!htbp]
	\centering
	\includegraphics[width=0.57\linewidth]{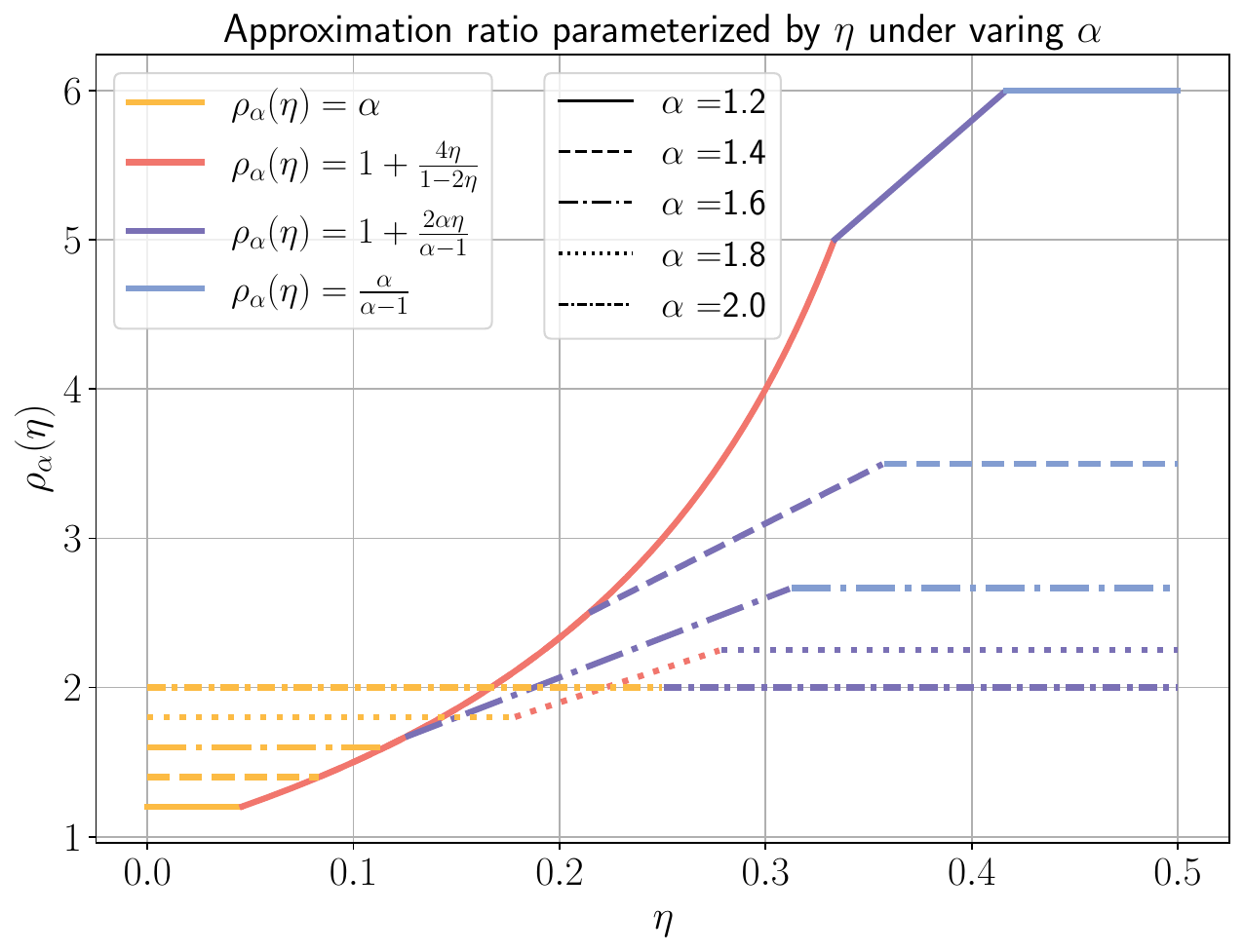}
	\caption{Approximation ratio parameterized by error bound $\eta$ with various $\alpha$ values.}
	\label{fig:plot_prediction_error}
\end{figure}

	\section{Randomized Mechanisms}
In the context of randomized mechanism design (without predictions) for envy ratio minimization, \citet{DLC+20a} proved that any strategyproof mechanism must have an approximation ratio of at least $1.0314$, and showed that an approximation ratio of $2$ is achieved by the deterministic mechanism which always places the facility at $\frac12$. However, they were unable to construct any randomized strategyproof mechanism beyond $2$-approximation. In this section, we address this gap and open problem within both the classic setting (without predictions), and mechanism design with predictions. 


\subsection{Randomized Mechanisms without Prediction}
Since it remains an open question whether a randomized mechanism can achieve an approximation ratio better than $2$, we address this by introducing a novel family of $(\alpha,p)$-LRM constant mechanisms (\Cref{alg::alpha_p_LRM_randomized mechanism}). The mechanism is inherently strategyproof and anonymous. By carefully selecting the parameters, we show that there exists a mechanism within this family that achieves an approximation ratio of approximately $1.8944$.

\begin{algorithm}[!htbp]
	\caption{$(\alpha,p)$-LRM Constant Mechanism}
	\label{alg::alpha_p_LRM_randomized mechanism}
	\begin{algorithmic}[1] 
		\REQUIRE Location profile $\x$.
		\ENSURE Distribution of facility locations $f(\x)$.
		\STATE With probability $p$: return $f(\x)=\frac{1}{2} - \alpha$;
		\STATE With probability $1-2p$: return $f(\x)=\frac{1}{2}$;
		\STATE With probability $p$: return $f(\x)=\frac{1}{2} + \alpha$;
	\end{algorithmic}
\end{algorithm}

We now compute the optimal parameters of $\alpha$ and $p$ which minimize the mechanism's approximation ratio. By the following lemma, we show that it suffices to restrict our attention to mechanisms with $\alpha \leq \frac14$, as any $(\alpha, p)$-LRM constant mechanism with $\alpha > \frac{1}{4}$ will have a worse approximation ratio than the deterministic mechanism which simply places the facility at $\frac12$.


\begin{lemma}
	\label{prop::bad_LRM_when_alpha_greater_than_quarter}
	When $\alpha > \frac{1}{4}$, every $(\alpha, p)$-LRM constant mechanism has an approximation ratio of at least $2$.
\end{lemma}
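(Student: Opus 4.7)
The plan is to exhibit a single two-agent location profile on which the expected envy ratio of any $(\alpha,p)$-LRM constant mechanism with $\alpha>1/4$ is at least $2$. By \Cref{lem::two_agent}, it suffices to work with two-agent instances, so one well-chosen witness will do.

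My witness will be $\x=(1/2,1)$, the classical worst case for the Constant-$\frac12$ mechanism. For this instance $\midp(\x)=3/4$ yields envy ratio $1$, so the approximation ratio on $\x$ equals the expected envy ratio $\E_{y}[\ER(y,\x)]$ directly. I will evaluate $\ER(y,\x)$ at each of the three support points $y\in\{1/2-\alpha,\,1/2,\,1/2+\alpha\}$. The output $y=1/2$ gives $\ER=2$ immediately. The output $y=1/2-\alpha$ keeps the agent at $1$ as the worst off and gives $\ER=\frac{2(1-\alpha)}{1-2\alpha}$. The critical case is $y=1/2+\alpha$: the two distances from $y$ to the agents are $\alpha$ and $1/2-\alpha$, and the assumption $\alpha>1/4$ is precisely the threshold at which the agent at $1/2$ becomes worst off rather than the agent at $1$; this flip gives $\ER=\frac{1/2+\alpha}{1-\alpha}$, and is where the lemma's hypothesis enters.

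Combining the three outputs weighted by $p$, $1-2p$, $p$, the expected envy ratio rearranges as $2+p\,T(\alpha)$, where $T(\alpha):=\frac{2(1-\alpha)}{1-2\alpha}+\frac{1/2+\alpha}{1-\alpha}-4$. A direct evaluation yields $T(1/4)=3+1-4=0$, and both fractions are easily seen to be strictly increasing in $\alpha$ on $[1/4,1/2)$ (for example by rewriting them as $1+\frac{1}{1-2\alpha}$ and $-\frac12+\frac{3/2}{1-\alpha}$). Hence $T(\alpha)\ge 0$ throughout this range, so the expected envy ratio is at least $2$ for every valid $p>0$; the boundary $p=0$ is immediate because the mechanism then coincides with the $2$-approximate Constant-$\frac12$ mechanism. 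I expect the only mild pitfall to be correctly identifying which agent is worst off at each support point, since the ordering flips between the outputs and is in fact precisely what makes the $1/4$ threshold appear.
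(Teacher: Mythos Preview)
Your proof is correct and takes essentially the same approach as the paper: the paper's witness is $(0,\tfrac12)$, the mirror image of your $(\tfrac12,1)$, and both arguments boil down to checking that the expected envy ratio equals $2$ at $\alpha=\tfrac14$ and is increasing in $\alpha$ thereafter. (One harmless typo: the rewriting of $\frac{1/2+\alpha}{1-\alpha}$ should read $-1+\frac{3/2}{1-\alpha}$, not $-\tfrac12+\frac{3/2}{1-\alpha}$; the monotonicity claim is unaffected.)
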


Given that $\alpha \leq \frac14$, we show that the optimal parameters of $\alpha$ and $p$ can be found by solving the following optimization problem, which concerns the mechanism's performance over $2$ different location profiles.

\begin{lemma}
	\label{lem:two_cases_optimal_lrm_mechanism}
	Let $\x=(x_1=0,x_2=\frac{1}{2})$ and $\x'=(x_1'=0,x_2'=\frac{1}{2}+\alpha)$. When $\alpha \leq \frac{1}{4}$, the $(\alpha^\ast, p^\ast)$-LRM constant Mechanism optimizes the approximation ratio of envy ratio objective where $(\alpha^\ast, p^\ast) = \arg\min_{(\alpha,p)} $ $\{\max\{\rho(\x), \rho(\x')\}\}$.
\end{lemma}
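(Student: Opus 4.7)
The plan is to invoke Lemma~\ref{lem::two_agent} to reduce the worst-case analysis to $2$-agent instances, and then to exploit the mechanism's symmetry and monotonicity properties to localize the worst case to a one-parameter family that can be handled directly.

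By Lemma~\ref{lem::two_agent}, it suffices to bound $\rho$ on $2$-agent instances $\x=(x_1,x_2)$ with $x_1<x_2$, on which the optimal envy ratio equals $1$, so $\rho(\x)$ is just the expected envy ratio of the randomized mechanism. Writing $m=(x_1+x_2)/2$, $h=(x_2-x_1)/2$, and $\delta_y=|y-m|$, a short calculation yields the envy ratio at a single facility $y$ in closed form as
\[
g(\delta_y,h)=\frac{1+\delta_y-h}{1-\delta_y-h}\ \text{when}\ \delta_y\le h,\qquad g(\delta_y,h)=\frac{1+h-\delta_y}{1-\delta_y-h}\ \text{when}\ \delta_y\ge h,
\]
which, as a direct check shows, is strictly increasing in each of its arguments. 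The approximation ratio then becomes $\rho(\x)=p\,g(\delta_1,h)+(1-2p)\,g(\delta_2,h)+p\,g(\delta_3,h)$, where $\delta_k$ is the distance from $m$ to the $k$-th point of $\{\frac{1}{2}-\alpha,\frac{1}{2},\frac{1}{2}+\alpha\}$.

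Next, the mechanism's reflection symmetry about $\frac{1}{2}$ gives $\rho((x_1,x_2))=\rho((1-x_2,1-x_1))$, so I may assume $m\le \frac{1}{2}$ and set $\mu=\frac{1}{2}-m\ge 0$, yielding $\delta_1=|\alpha-\mu|$, $\delta_2=\mu$, $\delta_3=\alpha+\mu$. The feasibility constraint $x_1\ge 0$ reads $h\le \frac{1}{2}-\mu$; since every $g$-term is strictly increasing in $h$ and the coefficients $p$ and $1-2p$ are nonnegative (assuming $p\le \frac{1}{2}$), $\rho$ itself is increasing in $h$, so the worst case for each fixed $\mu$ is attained at $h=\frac{1}{2}-\mu$, i.e.\ at $x_1=0$. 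This localizes the search to the one-parameter family $\{(0,x_2):x_2\in(0,1]\}$.

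Parameterizing this family by $z=x_2/2$, the three facility distances become $|\frac{1}{2}-\alpha-z|$, $|\frac{1}{2}-z|$, $|\frac{1}{2}+\alpha-z|$ (each against $h=z$), with breakpoints at $z\in\{\frac{1}{4}-\frac{\alpha}{2},\frac{1}{4},\frac{1}{4}+\frac{\alpha}{2}\}$; the hypothesis $\alpha\le \frac{1}{4}$ ensures these lie correctly ordered inside $(0,\frac{1}{2})$. On each of the resulting pieces $\rho(z)$ is a smooth rational function, and I would show that $d\rho/dz$ does not vanish in any piece's interior, so that the maximum on $(0,\frac{1}{2}]$ is attained at a breakpoint or at the right endpoint $z=\frac{1}{2}$ (i.e.\ $x_2=1$). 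The main obstacle will be this derivative analysis: increasing $z$ enlarges $h$ (pushing every $g$-term upward via $\partial g/\partial h>0$) but simultaneously moves each $\delta_y=|y-z|$ either toward or away from the corresponding facility $y$, so the three terms pull in conflicting directions, and pinning down the sign of $d\rho/dz$ will require carefully comparing $\partial g/\partial\delta$ against $\partial g/\partial h$ term by term, using $\alpha\le \frac{1}{4}$ to keep the breakpoints apart and the denominators positive. Once monotonicity within each piece is in hand, a direct comparison of the four candidate values dismisses $x_2=\frac{1}{2}-\alpha$ and $x_2=1$ as dominated, leaving $x_2\in\{\frac{1}{2},\frac{1}{2}+\alpha\}$ --- precisely $\x$ and $\x'$ --- as the worst-case instances, so that $\min_{(\alpha,p)}\max\{\rho(\x),\rho(\x')\}$ equals the minimum worst-case approximation ratio over the $(\alpha,p)$-LRM family.
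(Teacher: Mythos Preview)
Your overall strategy mirrors the paper's proof: invoke Lemma~\ref{lem::two_agent}, use the reflection symmetry to assume $m\le\tfrac12$, push $h$ to its maximum to force $x_1=0$, and then analyze $\rho$ piecewise in $x_2$. However, there is a concrete gap in your piecewise decomposition. You record only the breakpoints where $\delta_y=h$, i.e.\ $z\in\{\tfrac14-\tfrac\alpha2,\tfrac14,\tfrac14+\tfrac\alpha2\}$, but each $\delta_y=|y-z|$ also has a kink at $z=y$. This produces an additional breakpoint at $z=\tfrac12-\alpha$ (equivalently $x_2=1-2\alpha$), which for $\alpha<\tfrac16$ lies strictly inside your last piece $(\tfrac14+\tfrac\alpha2,\tfrac12]$. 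On that piece $\rho$ is therefore \emph{not} a single smooth rational function, and your monotonicity claim fails: for $z\in(\tfrac14+\tfrac\alpha2,\tfrac12-\alpha)$ the expression is linear decreasing, while for $z\in(\tfrac12-\alpha,\tfrac12]$ the first term becomes $p\,\frac{1/2+\alpha}{3/2-\alpha-2z}$, making $\rho$ strictly convex with $d\rho/dz$ changing sign from negative to positive (this already happens at the optimal $(\alpha^*,p^*)=(\tfrac{\sqrt5}{2}-1,\tfrac25)$). So ``$d\rho/dz$ does not vanish in any piece's interior'' is false as stated.

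The paper handles precisely this region by a separate convexity argument (its Case~(1) for $\alpha<\tfrac16$ and Case~(5) for $\alpha\in[\tfrac16,\tfrac14]$), bounding the convex term by its secant line to conclude that the maximum over $x_2\in(1-2\alpha,1]$ still sits at an endpoint; the relative position of $1-2\alpha$ and $\tfrac12+\alpha$ is what forces the split at $\alpha=\tfrac16$, which your plan omits. Once the missing breakpoint is inserted and convexity is used on the final sub-piece, the candidate set becomes $\{\tfrac12-\alpha,\tfrac12,\tfrac12+\alpha,1-2\alpha,1\}$, and the comparisons that eliminate $\tfrac12-\alpha$, $1-2\alpha$, and $1$ follow from the established piecewise monotonicity (as the paper does); your ``direct comparison of the four candidate values'' would need these monotonicity facts rather than an independent inequality between the endpoint values.
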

Finally, by solving this optimization problem, we show that setting $\alpha=\frac{\sqrt{5}}{2}-1$ and $p=\frac{2}{5}$ leads to the optimal approximation ratio among all $(\alpha,p)$-LRM constant mechanisms.
\begin{theorem} \label{thm::opt_LRM_constant_mechanism}
	$(\frac{\sqrt{5}}{2}-1, \frac{2}{5})$-LRM constant mechanism is anonymous, strategyproof, and achieves an approximation ratio of $1+\frac{2}{\sqrt{5}}$, which is optimal among all $(\alpha, p)$-LRM constant mechanisms.
\end{theorem}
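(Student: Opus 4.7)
First I would observe that strategyproofness and anonymity are immediate: the output distribution of the $(\alpha,p)$-LRM constant mechanism does not depend on the reported profile. For the approximation ratio, I would chain together the preceding reductions: \Cref{lem::two_agent} restricts attention to 2-agent instances, \Cref{prop::bad_LRM_when_alpha_greater_than_quarter} restricts us to $\alpha\le\tfrac14$, and \Cref{lem:two_cases_optimal_lrm_mechanism} then narrows the worst case to the two profiles $\x=(0,\tfrac12)$ and $\x'=(0,\tfrac12+\alpha)$, each with optimal envy ratio $1$. Thus $\rho(\x)$ and $\rho(\x')$ are just the expected envy ratios of the three-point distribution, which I can write in closed form in $(\alpha,p)$.

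Evaluating the deterministic envy ratio at each of $y\in\{\tfrac12-\alpha,\tfrac12,\tfrac12+\alpha\}$ on the two profiles and averaging with weights $p,1-2p,p$ yields
\begin{align*}
\rho(\x)&=2-4p+\frac{4p(1-\alpha)}{1-4\alpha^2},\\
\rho(\x')&=2(1-\alpha)(1-2p)+\frac{4p(1-\alpha+2\alpha^2)}{1-4\alpha^2},
\end{align*}
where $\rho(\x')$ uses $\alpha\le\tfrac16$ to fix the orientation of the ratio at $y=\tfrac12-\alpha$ (the complementary range $\alpha\in(\tfrac16,\tfrac14]$ can be dispatched by a parallel computation giving a strictly larger value). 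A direct sign check gives $\partial_p\rho(\x)<0$ and $\partial_p\rho(\x')>0$ throughout the feasible region, so any interior minimizer of $\max\{\rho(\x),\rho(\x')\}$ must equalize the two expressions.

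Setting $\rho(\x)=\rho(\x')$ and clearing denominators simplifies to $2p(1+2\alpha-4\alpha^2)=1-4\alpha^2$, which I would use to eliminate $p$. Substituting back into $\rho(\x)$ and writing $s:=2\alpha$ reduces the problem to minimizing $2-s(1-2s)/(1+s-s^2)$ over $s\in(0,\tfrac12]$. Differentiating the fraction collapses the stationarity condition to $s^2+4s-1=0$, whose positive root is $s=\sqrt5-2$; hence $\alpha=\tfrac{\sqrt5}{2}-1$, back-substitution gives $p=\tfrac25$, and the common value of $\rho(\x)$ and $\rho(\x')$ there evaluates to $1+\tfrac{2}{\sqrt5}$. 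Checking the endpoints $\alpha\to 0$ and $\alpha=\tfrac14$ (both yielding $\rho=2$ along the constraint $\rho(\x)=\rho(\x')$) rules out a boundary optimum. The main obstacle I anticipate is simply the bookkeeping in the rational-function algebra needed to see the clean quadratic $s^2+4s-1=0$ emerge.
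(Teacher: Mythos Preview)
Your proposal is correct and follows the same arc as the paper: reduce via the given lemmas to the two critical profiles $\x=(0,\tfrac12)$ and $\x'=(0,\tfrac12+\alpha)$, equate $\rho(\x)$ and $\rho(\x')$ to express $p$ in terms of $\alpha$, then optimize the resulting one-variable expression, arriving at $\alpha=\tfrac{\sqrt5}{2}-1$, $p=\tfrac25$, and value $1+\tfrac{2}{\sqrt5}$; the range $\alpha\in(\tfrac16,\tfrac14]$ is handled separately and yields $21/11$, just as you indicate.

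The one genuine difference is in how the equalization step is justified. You use the direct observation that $\partial_p\rho(\x)<0$ and $\partial_p\rho(\x')>0$ throughout the region, so for each fixed $\alpha$ the minimizer over $p$ of the maximum must be where the two curves cross. The paper instead proves two auxiliary ``crossing'' lemmas showing that whenever $\rho(\x)<1+\tfrac{2}{\sqrt5}$ one has $\rho(\x')>1+\tfrac{2}{\sqrt5}$, and vice versa, thereby establishing $1+\tfrac{2}{\sqrt5}$ as a lower bound on $\max\{\rho(\x),\rho(\x')\}$. Your monotonicity-in-$p$ argument is shorter and more transparent; the paper's pair of lemmas reaches the same conclusion by a more roundabout route. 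One minor quibble: since your closed form for $\rho(\x')$ assumes $\alpha\le\tfrac16$, the correct range for the substituted variable is $s\in(0,\tfrac13]$ rather than $s\in(0,\tfrac12]$; this is harmless because the critical point $s=\sqrt5-2\approx0.236$ lies inside and the boundary value at $s=\tfrac13$ is exactly the $21/11$ of your dispatched case.
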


\begin{proof}[Proof Sketch of \Cref{thm::opt_LRM_constant_mechanism}]
	By \Cref{lem:two_cases_optimal_lrm_mechanism}, it suffices to find the optimal parameters $\alpha^*$ and $p^*$ for the optimization problem $\min_{(\alpha,p)} \{\max\{\rho(\x), \rho(\x')\}\}$. Specifically, we show that when $\alpha \in [0, \frac{1}{6})$, by setting $\alpha=\frac{\sqrt{5}}{2}-1$, and $p=\frac{2}{5}$, the $(\frac{\sqrt{5}}{2}-1, \frac{2}{5})$-LRM mechanism achieves an approximation ratio of $1+\frac{2}{\sqrt{5}}\approx 1.8944$ while when $\alpha \in [\frac{1}{6}, \frac{1}{4}]$, the optimal parameters are $\alpha=\frac{1}{6}$, and $p=\frac{4}{11}$, which yields an approximation ratio of $\frac{21}{11}\approx 1.909$. Therefore, the $(\frac{\sqrt{5}}{2}-1, \frac{2}{5})$-LRM mechanism is optimal within the family of $(\alpha, p)$-LRM Constant mechanisms.
\end{proof}

With an approximation ratio of approximately $1.8944$, our $(\frac{\sqrt{5}}{2}-1, \frac{2}{5})$-LRM Constant mechanism significantly improves upon the upper bound among mechanisms without predictions. We also further tighten the gap by establishing an improved lower bound. Previously, \citet{DLC+20a} showed that any randomized strategyproof mechanism (without predictions) has an approximation ratio of at least $1.0314$. We advance this lower bound by carefully selecting a location profile and constructing an upper bound on the facility's expected distance from an agent's location, in terms of its probability to be located within certain intervals, which gives us a lower bound of $1.12579$.

\begin{theorem}\label{thm::random_mechanism_new_lower_bound}
	Any randomized strategyproof mechanism has an approximation ratio of at least $1.12579$.
\end{theorem}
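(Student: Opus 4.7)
The plan is to construct a worst-case instance and use strategyproofness to constrain the output distribution of any anonymous randomized mechanism on that instance, then extract a lower bound on the envy ratio approximation ratio. I would begin by selecting a specific base profile $\x^\ast$ (a natural candidate is a small-support, asymmetric two- or three-agent configuration for which $\OPT(\x^\ast)$ and $\ER(\OPT(\x^\ast),\x^\ast)$ are easy to describe), and let $P$ denote the distribution that an arbitrary anonymous strategyproof randomized mechanism outputs on $\x^\ast$. The target quantity is $\ER(P,\x^\ast)/\ER(\OPT(\x^\ast),\x^\ast)$.

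To bound this quantity, I would partition $[0,1]$ into a small collection of sub-intervals with break points at the agent locations of $\x^\ast$ plus additional critical thresholds suggested by the deviation profiles used below. Writing $p_k$ for the probability mass $P$ places in the $k$-th piece, each expected distance $\E_{y\in P}[|y-x_i|]$ can be upper bounded by $\sum_k p_k \cdot c_{k,i}$, where $c_{k,i}$ is the farthest distance from $x_i$ within the $k$-th piece, and analogously lower bounded using the nearest-endpoint distance. These bounds translate the expected utilities $u(P,x_i)=1-\E_{y\in P}[|y-x_i|]$ into linear expressions in the $p_k$'s, which in turn produce a linear-in-$p_k$ lower bound on $\ER(P,\x^\ast)=\max_{i\neq j} u(P,x_i)/u(P,x_j)$.

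The next step extracts constraints on the $p_k$'s via strategyproofness. For each single-agent deviation profile $\x'$ obtained from $\x^\ast$ by an agent misreporting her location (e.g., an extremal agent reporting toward the midpoint, or a symmetric swap), strategyproofness forces $\E_{P}[|y-x_i|]\le \E_{P'}[|y-x_i|]$, where $P'$ is the output distribution on $\x'$. Applying the same interval-probability upper bounds to the right-hand side couples the masses of $P$ across regions and yields linear inequalities in the $p_k$'s. Together with normalization $\sum_k p_k=1$ and non-negativity, this produces a small linear program whose minimum lower-bounds the approximation ratio; by careful tuning of $\x^\ast$ and the chosen deviation profiles, its optimal value evaluates to $1.12579$.

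The main obstacle will be the joint selection of the base profile $\x^\ast$, the partition of $[0,1]$, and the collection of deviation profiles: the induced constraints must be strong enough to push the lower bound well above the previous $1.0314$ of \citet{DLC+20a}, while the resulting program must remain analytically tractable so as to admit an explicit solution or a clean verifiable bound. A secondary subtlety is the $\max$ over agent pairs in the definition of $\ER$, which splits the analysis into regimes according to which agent realizes the largest and smallest utility; the choice of $\x^\ast$ should be tailored so that this case split can be resolved with minimal bookkeeping, leaving the lower bound to follow from a single tight regime.
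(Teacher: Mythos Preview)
Your outline has the right ingredients---partition $[0,1]$, express distances linearly in the interval masses, and invoke strategyproofness---but the strategyproofness step, as you set it up, does not yield a usable constraint. You take $\x^\ast$ as the profile where the envy ratio is evaluated and write $\E_P[|y-x_i|]\le \E_{P'}[|y-x_i|]$ for a deviation $\x'$. The right-hand side is a functional of $P'$, not $P$; ``upper-bounding the right-hand side'' produces an inequality in the masses of $P'$, which says nothing about the $p_k$'s of $P$. More fundamentally, even if you could close this, an \emph{upper} bound on $\E_P[|y-x_i|]$ only certifies that agent $i$'s utility is \emph{large}, whereas lower-bounding the envy ratio requires forcing some agent's utility to be \emph{small}. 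Your scheme never generates that inequality.

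The paper runs the argument in the opposite direction. It starts from the symmetric profile $\x=(0.29,0.71)$ and observes, from the triangle inequality alone, that any output distribution $P$ satisfies $\E_P[|y-0.29|]\ge 0.21$ or the symmetric statement---no strategyproofness yet. In the first case it then passes to $\x'=(0,0.71)$: strategyproofness (the agent at $0.29$ must not gain by reporting $0$) gives $\E_{P'}[|y-0.29|]\ge 0.21$, a \emph{lower} bound on the expected distance from the output at $\x'$ to a point that is \emph{not} an agent in $\x'$. The envy ratio is then computed at $\x'$ (agents $0$ and $0.71$), and the distance constraint is converted---via a three-piece partition centred at $0.29$ with half-width $\delta$---into a linear lower bound on the \emph{expected} envy ratio $\E_{y\sim P'}[\ER(y,\x')]$ in the masses $p_1,p_2,p_3$; choosing $\delta=\tfrac{617}{4300}$ annihilates the $p_2$ coefficient and gives $1.12579$. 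Two further points: the paper works throughout with the expected envy ratio $\E_{y\sim P}[\ER(y,\x)]$ (cf.\ the analyses of the LRM and BAM mechanisms), not the ratio of expected utilities you write, so your ``linear-in-$p_k$ lower bound on $\max_{i\neq j} u(P,x_i)/u(P,x_j)$'' is targeting the wrong quantity; and the bound is for \emph{all} strategyproof randomized mechanisms, so anonymity should not enter the argument.
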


\subsection{Randomized Mechanisms with Prediction}
Next, we extend our investigation to the paradigm of randomized mechanism design with predictions, proposing a new randomized mechanism that outperforms the $\alpha$-BIM. An immediate idea may be to run the $(\frac{\sqrt{5}}{2}-1,\frac{2}{5})$-LRM constant mechanism within the $\alpha$-BIM, returning the former mechanism's output when the prediction lies outside the bounding interval. However, this modification performs worse than the original $\alpha$-BIM. Further details are provided in \Cref{sec::appendix_random_prediction}. 

To demonstrate the difficulty of this problem, consider an extreme 2-agent instance $\x=(x_1=0,x_2=1)$, with prediction $\hat{y}=0$. For any mechanism that places the facility at $\hat{y}$ with positive probability, the robustness of the mechanism becomes unbounded. However, intuitively, assigning a higher probability to placing the facility at $\hat{y}$ improves consistency. This reveals the fundamental challenge of balancing consistency and robustness. To address this, we adapt the underlying design principle of the $\alpha$-BIM: the mechanism locates the facility at $\hat{y}$ if $\hat{y}$ lies within a specified closed interval. Otherwise, the facility is placed at the boundary of that interval. This design can be viewed as a threshold mechanism, where the placement decision is based on the distance between $\hat{y}$ and $\frac{1}{2}$. By integrating this threshold-based approach in a probabilistic manner, we design our novel Bias-Aware mechanism, which we introduce as follows.

\begin{algorithm}[!htbp]
	\caption{Bias-Aware Mechanism (BAM)}
	\label{alg::bias_aware_mechanism}
	\begin{algorithmic}[1] 
		\REQUIRE Location profile $\x$, facility location prediction $\hat{y}$. 
		\ENSURE Facility location $f(\x,\hat{y})$.
		\STATE Compute bias $c = |\hat{y}-\frac{1}{2}|$
		\STATE Compute probability $p = \frac{1}{2} - c$
		\STATE With probability $p$: return $f(\x,\hat{y})=\hat{y}$
		\STATE With probability $1-p$: return $f(\x,\hat{y})=\frac{1}{2}$
	\end{algorithmic}
\end{algorithm}

\begin{theorem} 
	\label{bias_aware_proof}
	BAM is anonymous, strategyproof and $(-4c^2+2)$-consistency and $(c+2)$-robustness when $c \in [\frac{1}{4}, \frac{1}{2}]$, $\frac{7}{4}$-consistency and $\frac{9}{4}$-robustness when $c \in [0, \frac{1}{4})$.
\end{theorem}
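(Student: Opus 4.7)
Since BAM's output distribution depends only on $\hat{y}$, strategyproofness and anonymity are immediate. For the consistency and robustness bounds I invoke \Cref{lem::two_agent} to reduce to 2-agent instances $\mathbf{x}=(x_1,x_2)$ with $x_1<x_2$, on which the optimal envy ratio is $1$ and the approximation ratio equals BAM's expected envy ratio. By the left--right symmetry of BAM under $\hat{y}\mapsto 1-\hat{y}$ and $\mathbf{x}\mapsto 1-\mathbf{x}$, I may assume $\hat{y}\ge 1/2$, set $c=\hat{y}-1/2\in[0,1/2]$ and $p=1/2-c$, so the key quantity to bound is $p\,\ER(\hat{y},\mathbf{x})+(1-p)\,\ER(1/2,\mathbf{x})$.

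For consistency, the constraint $\hat{y}=(x_1+x_2)/2$ makes the facility at $\hat{y}$ equidistant from both agents, so $\ER(\hat{y},\mathbf{x})=1$ and the expected envy ratio reduces to $p+(1-p)\ER(1/2,\mathbf{x})$. Parameterizing the instance by $x_1\in[2c,1/2+c]$ with $x_2=1+2c-x_1$ and splitting the analysis at the breakpoint $x_1=1/2$, a direct computation shows $\ER(1/2,\mathbf{x})$ is monotonically decreasing in $x_1$ within each piece and hence maximized at the smallest admissible $x_1=2c$ (i.e., $x_2=1$). For $c\in[0,1/4]$ this gives $\ER(1/2,(2c,1))=1+4c$ and expected envy ratio $1+2c+4c^2$, which is increasing in $c$ and bounded by its maximum $7/4$ at $c=1/4$; for $c\in[1/4,1/2]$ it gives $\ER(1/2,(2c,1))=3-4c$ and expected envy ratio precisely $-4c^2+2$, matching the consistency bound tightly.

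For robustness, the extremal instance is $(0,1/2)$: here $\ER(\hat{y},(0,1/2))=(1-c)/(1/2-c)$ and $\ER(1/2,(0,1/2))=2$, yielding expected envy ratio $(1/2-c)\cdot(1-c)/(1/2-c)+(1/2+c)\cdot 2=(1-c)+(1+2c)=c+2$. This matches the $(c+2)$-robustness bound exactly on $[1/4,1/2]$, and is at most $9/4$ on $[0,1/4)$. The main obstacle is verifying that $(0,1/2)$ is indeed the worst 2-agent instance once $\mathbf{x}$ is unconstrained: each of $\ER(\hat{y},\cdot)$ and $\ER(1/2,\cdot)$ is a piecewise rational function in $(x_1,x_2)$ whose pieces depend on the positions of the two agents relative to the breakpoints $1/2$ and $\hat{y}$, so the convex combination $p\,\ER(\hat{y},\cdot)+(1-p)\,\ER(1/2,\cdot)$ splits into a finite number of pieces, each of which must be checked to be monotone in one coordinate and hence extremized on the boundary. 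Once this monotonicity is established and the handful of boundary candidates are compared, the remaining arithmetic reduces to elementary polynomial inequalities.
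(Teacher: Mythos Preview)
Your setup, the reduction to two agents, and the entire consistency argument match the paper's proof (the paper takes $\hat y\le \tfrac12$ where you take $\hat y\ge \tfrac12$, but the monotonicity computation and the extremal instance $(0,2\hat y)$ versus your $(2c,1)$ are reflections of each other, and the resulting bounds $-4c^2+2$ and $4c^2+2c+1\le \tfrac74$ coincide).

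For robustness you have the right extremal instance and the right value $c+2$, but the proof that no other two-agent instance does worse is only promised, not given. You yourself flag it as ``the main obstacle'' and sketch a brute-force plan (partition $(x_1,x_2)$-space by the breakpoints $\tfrac12$ and $\hat y$, check monotonicity on each piece, compare boundary candidates); this would work if carried out, but it is several cases and is the substantive content of the robustness bound, so leaving it as an outline is a genuine gap. The paper avoids the case split by a shorter reduction: for each of the two facility locations $y\in\{\hat y,\tfrac12\}$ it lower-bounds the minimum utility by its worst possible value over $[0,1]$ (namely $\hat y$ and $\tfrac12$, in the paper's convention $\hat y\le\tfrac12$), and observes that both of these worst values are attained simultaneously when one agent sits at the domain endpoint $1$. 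With that agent fixed, the other agent supplies the maximum utility at both facility locations; moving it into $[\hat y,\tfrac12]$ can only raise both numerators, and on that interval the resulting bound is linear in the single remaining position, maximized at $\tfrac12$ to give $\tfrac52-\hat y=2+c$ directly. Adopting this two-step reduction would close your gap in a few lines rather than a piecewise case analysis.
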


\begin{proof}
	BAM is immediately anonymous and strategyproof as the output is independent of the agents’ locations. We now consider its consistency and robustness. From \Cref{lem::two_agent}, we only need to consider instances with two agents, in which the optimal envy ratio is always $1$. Hence, we only need to consider the envy ratio achieved by the mechanism. Given any profile $\x$ and $\hat{y}$, without loss of generality, we assume that $x_1 < x_2$ and $\hat{y}\le \frac{1}{2}$, giving us $p=\hat{y}$ and $c = \frac{1}{2}-\hat{y}$.
	
	\textbf{(Robustness).}  We first consider robustness. Observe that when placing the facility at $\hat{y}$ (resp. $\frac{1}{2}$), the minimum utility of any agent is at least $1-\hat{y}$ (resp. $\frac{1}{2}$) as the distance from $\hat{y}$ (resp. $\frac{1}{2}$) is at most $1-\hat{y}$ (resp. $\frac{1}{2}$). The equalities hold when $x_2=1$. If $x_1 < \hat{y}$, moving $x_1$ to $\hat{y}$ will increase the maximum utility achieved by $\hat{y}$ and $\frac{1}{2}$. If $x_1 > \frac{1}{2}$, moving $x_1$ to $\frac{1}{2}$ will increase the maximum utility achieved by $\hat{y}$ and $\frac{1}{2}$. Hence, we only need to consider the case where $x_1 \in [\hat{y}, \frac{1}{2}]$, in which the robustness is expressed as
	\begin{align*}
		\ER(f(\x, \hat{y}), \x)
		= \hat{y}\cdot \frac{1-(x_1-\hat{y})}{\hat{y}} + (1-\hat{y})\cdot \frac{1-(\frac{1}{2}-x_1)}{\frac{1}{2}}
		\le \frac{5}{2}-\hat{y} = 2 + c,
	\end{align*}
	which reaches the maximum when $x_1$ reaches $\frac{1}{2}$.
	
	\textbf{(Consistency).} Consider any arbitrary instance $\x$ and $\hat{y}$ is accurate, i.e., $\hat{y} = \midp(\x)$. Let $\delta=\frac{x_2-x_1}{2}$. If $\hat{y}\le \frac{1}{4}$, we have $x_2\le \frac{1}{2}$. The envy ratio achieved by $f(\x, \hat{y})=\hat{y}$ is $1$ and the envy ratio achieved by $f(\x, \hat{y})=\frac{1}{2}$ is $\frac{1-(\frac{1}{2}-x_2)}{1-(\frac{1}{2}-x_1)}=\frac{1-(\frac{1}{2}-\hat{y}-\delta)}{1-(\frac{1}{2}-\hat{y}+\delta)}$, where we have $\delta \le \hat{y}$ as $0\le x_1 \le \hat{y}$. For consistency, it is 
	\begin{align*}
		\ER(f(\x, \hat{y}), \x) = \hat{y}\cdot 1 + (1-\hat{y})\cdot \frac{1-(\frac{1}{2}-\hat{y}-\delta)}{1-(\frac{1}{2}-\hat{y}+\delta)}
		\le \hat{y} + (1-\hat{y})(1+4\hat{y}) = -4c^2+2,
	\end{align*}
	which reaches the maximum when $\delta=\hat{y}$.
	
	When $\hat{y}> \frac{1}{4}$, the envy ratio achieved by $f(\x, \hat{y})=\hat{y}$ is $1$ and the envy ratio achieved by $f(\x, \hat{y})=\frac{1}{2}$ is $\frac{1-(\delta-(\frac{1}{2}-\hat{y}))}{1-(\delta+\frac{1}{2}-\hat{y})}$, where $\delta \le \hat{y}$ as $0\le x_1 \le \hat{y}$. Hence, the consistency is
	\begin{align*}
		\ER(f(\x, \hat{y}), \x)  = \hat{y}\cdot 1 + (1-\hat{y})\frac{1-(\delta-(\frac{1}{2}-\hat{y}))}{1-(\delta+\frac{1}{2}-\hat{y})}
		\le \hat{y} + (1-\hat{y})(3-4\hat{y}) = 4c^2+2c+1,
	\end{align*}
	which is maximized when $\delta=\hat{y}$. Note that in this case, both consistency and robustness are monotonically increasing w.r.t. $c$, reaching the maximum of $(\frac{7}{4}, \frac{9}{4})$ when $c=\frac{1}{4}$.
	
	Finally, we conclude that BAM satisfies $(-4c^2+2)$-consistency and $(c+2)$-robustness when $c \in [\frac{1}{4}, \frac{1}{2}]$, $\frac{7}{4}$-consistency and $\frac{9}{4}$-robustness when $c \in [0, \frac{1}{4})$.
\end{proof}

Intuitively, BAM reduces the probability that the facility is placed at $\hat{y}$ as the distance between $\hat{y}$ and the midpoint increases, in which the probability reaches $0$ when $\hat{y}$ reaches $0$ or $1$. This prevents the mechanism from having unbounded robustness, and improves the balance between consistency and robustness by effectively using the prediction. \Cref{fig::random_prediction} highlights that BAM is strictly better than the deterministic $\alpha$-BIM in terms of both consistency and robustness. Further discussion is provided in \Cref{sec::appendix_random_prediction}. For instance, we show that if we modify BAM by replacing the $\frac{1}{2}$ output with the $(\frac{\sqrt{5}}{2}-1, \frac{2}{5})$-LRM, both the consistency and robustness worsen. 
\begin{figure}[!htbp]
	\centering
	\scalebox{0.8}{
		\begin{tikzpicture}
			\begin{axis}[
				axis lines=middle,
				grid=both,
				xmin=1, xmax=2.2,   
				ymin=2, ymax=3.2, 
				xlabel={Consistency}, 
				ylabel={Robustness}, 
				xtick={1,1.2,1.4,1.6,1.8,2}, 
				ytick={2,2.2,2.4,2.6,2.8,3,3.2},
				xlabel style={at={(1,-0.1)}, anchor=north}, 
				width=8cm, height=5cm 
				]
				\addplot[
				domain=1.3:2, 
				samples=100, 
				smooth, 
				ultra thick, 
				red
				] 
				({x}, {x / (x - 1)});
				
				\addplot[
				domain=0.25:0.5, 
				samples=100, 
				smooth, 
				dashed,
				ultra thick, 
				blue
				] 
				({-4*x^2 + 2}, {x + 2});
				\addplot[blue, mark=*] coordinates {(1,2.5)};
				\node at (1, 55) [anchor=west, blue] {$(1, 2.5)$};
				\addplot[blue, mark=*] coordinates {(1.75,2.25)};
				\node at (35, 20) [anchor=west, blue] {$(1.75, 2.25)$};
			\end{axis}
	\end{tikzpicture}}
	\caption{Comparison between \textcolor{red}{$\alpha$-BIM} (red solid line) and \textcolor{blue}{BAM} (blue dashed line). Note that, unlike \textcolor{red}{$\alpha$-BIM}, the range of approximation ratios for \textcolor{blue}{BAM} is not dependent on a chosen parameter, but rather on $|\hat{y} - \frac{1}{2}|$.}
	\label{fig::random_prediction}
\end{figure}
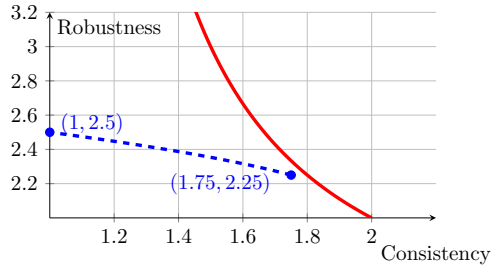
	\section{Conclusion and Discussion}
In this paper, we revisit the problem of facility location mechanism design problems for the envy ratio objective, through the scope of learning-augmentation. We provide tight results by devising the deterministic $\alpha$-BIM, which reaches the Pareto frontier of deterministic, anonymous and strategyproof mechanisms. For randomized mechanisms without prediction, we improve upon the best-known lower bound, and propose the $(\frac{\sqrt{5}}{2}-1, \frac{2}{5})$-LRM Constant mechanism which achieves a $1.8944$-approximation, resolving the open question of devising a mechanism with an approximation ratio better than $2$. Finally, we proposed BAM, a learning-augmented randomized mechanism which outperforms $\alpha$-BIM in terms of both consistency and robustness. For $\alpha$-BIM , we provide a comprehensive analysis regarding the approximation ratio parameterized by prediction error, however, regarding BAM, the randomized mechanism with prediction, unfortunately, we are unable to derive a closed-form expression for approximation ratio $\rho(\eta)$. The inherent difficulty is that when one performs a case-by-case analysis, each case is expressed in the form $\max_j\{f_j(\eta,\hat{y})\}$, in which the presence of $\hat{y}$ in the probability terms introduces significant complexity, especially the numerator, which includes quadratic expressions and cross terms. This prevents us from deriving an explicit form for the $\eta$-parameterized approximation ratio.

For future work, exploring lower bounds for randomized mechanisms with predictions presents a challenging yet meaningful task. Due to the significant differences in optimization objectives, the state-of-the-art techniques used by \citet{BGS24a} for learning-augmented randomized mechanisms are difficult to extend to the envy ratio scenario studied in this paper. Thus, developing a novel approach to establish lower bounds would be beneficial. Additionally, it is promising to apply the learning-augmented framework to other fairness notions within the literature. 
	
	\newpage
	\section*{Acknowledgments}
	This work was supported by the NSF-CSIRO grant on “Fair Sequential Collective Decision-Making" (RG230833) and the ARC Laureate Project FL200100204 on “Trustworthy AI”. The authors would like to express their gratitude to the anonymous reviewers of IJCAI 2025 and NeurIPS 2025 for their insightful and constructive feedback, which greatly helped improve this paper.
	\bibliographystyle{plainnat}
	\bibliography{ref}
	
	\newpage
    \appendix
	\section{Omitted Proofs for Section 3}

\subsection{Proof of Lemma~\ref{lem::two_agent}} \label{sec::lem_two_agents}
\begin{proof}
	For an arbitrary location profile $\x=(x_1,\ldots, x_n)$, without loss of generality, we assume that $x_1 \leq \ldots \leq x_n$. Let $\tilde{u}$ and $\bar{u}$ denote the maximum and minimum agent utilities, respectively, when the facility is placed at $\midp(\x)$. Consequently, the optimal envy ratio can be expressed as $\ER(\midp(\x),\x)=\frac{\tilde{u}}{\bar{u}}$.
	
	For any facility location $y\in P$, if $y\in [x_1, x_n]$, the maximum utility is at most $\tilde{u}+|y-\midp(\x)|$, while the minimum utility under $y$ is at least $\bar{u}-|y-\midp(\x)|$. Therefore, we derive the following inequality:
	\begin{align*}
		\frac{\ER(y,\x)}{\ER(\midp(\x),\x)} & \le \frac{(\tilde{u}+|y-\midp(\x)|)/(\bar{u}-|y-\midp(\x)|)}{\tilde{u}/\bar{u}}\\
		& \le \frac{(\bar{u}+|y-\midp(\x)|)/(\bar{u}-|y-\midp(\x)|)}{\bar{u}/\bar{u}} \tag{$\because$ ratio is non-increasing w.r.t. $\tilde{u}$} \\
		& = \frac{(\bar{u}+|y-\midp(\x)|)}{(\bar{u}-|y-\midp(\x)|)},
	\end{align*}
	Equality holds when all the agents are at $x_1$ and $x_n$. In this case, consider a new 2-agent instance $\x'=(x_1,x_n)$. Notice that both $x_1$ and $x_n$ achieve the minimum utility under $\midp(\x)$, i.e., $u(\midp(\x),x_1)=u(\midp(\x),x_n)=\bar{u}$. For this instance, the approximation ratio of placing the facility at $y$ is 
	\begin{align*}
		\frac{\ER(y,\x')}{\ER(\midp(\x'),\x')}&=\frac{\bar{u}+|y-\midp(\x')|/\bar{u}-|y-\midp(\x')|}{1}\\
		&= \frac{(\bar{u}+|y-\midp(\x)|)}{(\bar{u}-|y-\midp(\x)|)} \geq \frac{\ER(y,\x)}{\ER(\midp(\x),\x)}.
	\end{align*}
	
	For the case where $y\notin [x_1, x_n]$, without loss of generality, assume $y > x_n$. Since $x_1$ and $x_n$ always achieve the minimum utility under $\midp(\x)$, we have $d(\midp(\x), x_1) = d(\midp(\x), x_n) = 1-\bar{u}$. When changing the facility location to $y > x_n$, $x_1$ achieves the minimum utility while $x_n$ achieves the maximum utility. Specifically, $u(y,x_1)=1-d(y,x_1)=1 - (y-\midp(\x) + (1-\bar{u}))=-y+\midp(\x)+\bar{u}$, while $u(y,x_n)=1-d(y,x_n)=1 - (y-\midp(\x) - (1-\bar{u}))=2-y+\midp(\x)-\bar{u}$. Thus, the envy ratio is 
	\begin{align*}
		\frac{\ER(y,\x)}{\ER(\midp(\x),\x)} &\le 
		\frac{(2-y+\midp(\x)-\bar{u})/(-y+\midp(\x)+\bar{u})}{\tilde{u}/\bar{u} } \\
		&\leq \frac{2-y+\midp(\x)-\bar{u}}{-y+\midp(\x)+\bar{u}}. \tag{$\because$ ratio is non-increasing w.r.t. $\tilde{u}$}
	\end{align*}
	Equality holds when all the agents are located at $x_1$ and $x_n$. By considering the new instance $\x'=(x_1,x_n)$, it can be verified that 
	\begin{align*}
		\frac{\ER(y,\x')}{\ER(\midp(\x'),\x')}&=\frac{2-y+\midp(\x')-\bar{u}}{-y+\midp(\x')+\bar{u}} \\
		&= \frac{2-y+\midp(\x)-\bar{u}}{-y+\midp(\x)+\bar{u}} \geq \frac{\ER(y,\x)}{\ER(\midp(\x),\x)}.
	\end{align*}
	In conclusion, for any location profile $\x$ and distribution of facility locations $P$, we can always construct a new 2-agent instance $\x'=(\lm(\x), \rtm(\x))$ such that the approximation ratio of any $y\in P$ under $\x$ is upper-bounded by the approximation ratio of $y$ under $\x'$. Formally,
	\begin{align*}
		\E_{y\in P}\left[\frac{\ER(y,\x)}{\ER(\midp(\x),\x)}\right] \le \E_{y\in P}\left[\frac{\ER(y,\x')}{\ER(\midp(\x'),\x')}\right].
	\end{align*}
	This completes the proof.
\end{proof}

\subsection{Proof of \Cref{thm:approximation_ratio_parameterized_by_prediction_error}}

\begin{proof}
	When $\hat{y} \in [1-\frac{1}{\alpha}, \frac{1}{\alpha}]$, the $\alpha$-BIM mechanism places the facility at $\hat{y}$. Without loss of generality, assume that $y^* \in [0,\frac{1}{2}]$ (the case $y^\ast \in [\frac{1}{2},1]$ is symmetric). Suppose the prediction error satisfies $|\hat{y}-y^*|\le\eta$. We analyze the approximation ratio by cases on $\eta$. Note that for $\alpha \in [1,2]$, it always holds that $\frac{1}{\alpha}-\frac{1}{2}\le \frac{1}{2\alpha}$.
	
	\medskip
	\noindent\textbf{Case 1.} $\eta \in [0, \frac{1}{\alpha}-\frac{1}{2}]$.
	By \Cref{lem::two_agent}, we may restrict to a two-agent instance $\mathbf{x}=(x_1,x_2)$ with $x_2=2y^*-x_1$. Since $y^*$ is optimal, the approximation ratio for placing the facility at $\hat{y}$ is bounded by
	\begin{align*}
		\rho &\le \frac{1-(y^*-x_1)+\eta}{1-(y^*-x_1)-\eta} \\
		&\le \frac{1-\frac{1}{2}+\eta}{1-\frac{1}{2}-\eta} 
		\tag{since $x_1 \ge 0,~ y^* \le \tfrac{1}{2}$}\\
		&= 1 + \frac{4\eta}{1-2\eta}.
	\end{align*}
	
	\medskip
	\noindent\textbf{Case 2.} $\eta \in (\frac{1}{\alpha}-\frac{1}{2}, \frac{1}{2\alpha}]$.
	Here the prediction error exceeds $\frac{1}{\alpha}-\frac{1}{2}$, implying a tighter bound on $y^*$ since $y^*\le \frac{1}{2}$ and $\hat{y}\in [1-\frac{1}{\alpha}, \frac{1}{\alpha}]$. Thus,$y^* \le \hat{y}-\eta \le \frac{1}{\alpha}-\eta \le \frac{1}{2}$.
	
	Consider any instance $\mathbf{x}=(x_1,x_2=2y^*-x_1)$. When the facility moves from $y^*$ to $\hat{y}$, the maximum utility can increase by at most $\eta$, and the minimum utility can decrease by at most $\eta$. Hence,
	\begin{align*}
		\rho &\le \frac{1-(y^*-x_1)+\eta}{1-(y^*-x_1)-\eta}\\
		&\le \frac{1-(\frac{1}{\alpha}-\eta)+\eta}{1-(\frac{1}{\alpha}-\eta)-\eta} 
		\tag{since $x_1 \ge 0,~ y^*\le \tfrac{1}{\alpha}-\eta$}\\
		&= 1 + \frac{2\alpha\eta}{\alpha-1}.
	\end{align*}
	
	\medskip
	\noindent\textbf{Case 3.} $\eta > \frac{1}{2\alpha}$.
	In this regime, the approximation ratio is upper-bounded by the robustness ratio:
	$\rho \le \frac{1}{1-\frac{1}{\alpha}} = \frac{\alpha}{\alpha-1}$. 
	This bound is tight for the instance $\mathbf{x}=(0, \frac{1}{\alpha})$, where $y^*=\frac{1}{2\alpha}$ and $\hat{y}=\frac{1}{\alpha}$.
	
	\medskip
	\noindent Next, consider $\hat{y}\in [0,1-\frac{1}{\alpha})$ or $\hat{y}\in (\frac{1}{\alpha},1]$, where $\alpha$-BIM places the facility at the endpoint $1-\frac{1}{\alpha}$ or $\frac{1}{\alpha}$. Without loss of generality, we analyze the case $\hat{y}=1-\frac{1}{\alpha}$.
	
	\medskip
	\noindent\textbf{Case 1.} $y^* \le 1-\frac{1}{\alpha}$.
	Here $x_2$ is closer to $\hat{y}$ than $x_1$ is. The utility of agent 1 is at least $\frac{1}{\alpha}$, and that of agent 2 is at most $1$, yielding $\rho \le \frac{1}{1/\alpha} = \alpha$.
	Equality holds when $\hat{y}=y^*=\frac{1}{2}-\frac{1}{2\alpha}$ for any $\eta\ge0$.
	
	\medskip
	\noindent\textbf{Case 2.} $y^*>1-\frac{1}{\alpha}$.
	We further distinguish subcases:
	
	\begin{itemize}
		\item[(a)] If $\eta \in [0, \frac{1}{\alpha}-\frac{1}{2}]$, then $y^*=\hat{y}+\eta \le 1-\frac{1}{\alpha}+\eta \le \frac{1}{2}$. Hence,
		\[
		\rho \le \frac{\frac{1}{\alpha}}{1-(2y^*-(1-\frac{1}{\alpha}))}
		\le \frac{\frac{1}{\alpha}}{1-(2(1-\frac{1}{\alpha}+\eta)-(1-\frac{1}{\alpha}))}
		= \frac{1}{1-2\alpha\eta}.
		\]
		
		\item[(b)] If $\eta \in (\frac{1}{\alpha}-\frac{1}{2}, \frac{1}{2\alpha}]$, for any instance $\mathbf{x}=(x_1,2y^*-x_1)$, moving $x_1$ to $1-\frac{1}{\alpha}$ (if it lies to the right) increases the ratio while maintaining $y^*-\hat{y}\le\eta$. Thus, we only need to consider $x_1\le 1-\frac{1}{\alpha}$, giving
		\begin{align*}
			\rho &\le \frac{1-((1-\frac{1}{\alpha})-x_1)}{1-((2y^*-x_1)-(1-\frac{1}{\alpha}))}\\
			&\le \frac{\frac{1}{\alpha}+2y^*-1}{1-\frac{1}{\alpha}}
			\le \frac{1-\frac{1}{\alpha}+2\eta}{1-\frac{1}{\alpha}}
			= 1 + \frac{2\alpha\eta}{\alpha-1},
		\end{align*}
		where the second inequality follows from $x_1\ge 2y^*-1$, and the third from $y^*\le 1-\frac{1}{\alpha}+\eta$.
		
		\item[(c)] If $\eta > \frac{1}{2\alpha}$, the approximation ratio is bounded by $\frac{\alpha}{\alpha-1}$, which is tight for $\mathbf{x}=(1-\frac{1}{\alpha},1)$ with $y^*=1-\frac{1}{2\alpha}$ and $\hat{y}=1-\frac{1}{\alpha}-\varepsilon$ for any $\varepsilon>0$.
	\end{itemize}
	
	\medskip
	\noindent Combining the above, when $\hat{y}\in [0,1-\frac{1}{\alpha})$ or $(\frac{1}{\alpha},1]$, we obtain:
	\[
	\rho \le
	\begin{cases}
		\alpha, & \eta \in [0, \frac{\alpha -1}{2\alpha^2}],\\[3pt]
		\frac{1}{1-2\alpha\eta}, & \eta \in [\frac{\alpha -1}{2\alpha^2}, \frac{1}{\alpha}-\frac{1}{2}],\\[3pt]
		1+\frac{2\alpha\eta}{\alpha -1}, & \eta \in [\frac{1}{\alpha}-\frac{1}{2}, \frac{1}{2\alpha}],\\[3pt]
		\frac{\alpha}{\alpha -1}, & \eta \in [\frac{1}{2\alpha},+\infty).
	\end{cases}
	\]
	
	\medskip
	\noindent Finally, we combine all bounds by distinguishing two parameter ranges.
	
	\begin{itemize}
		\item When $\alpha \in [1,\frac{1+\sqrt{5}}{2}]$, note that $\frac{\alpha -1}{2(\alpha +1)} \le \frac{1}{\alpha}-\frac{1}{2}$ and that $\alpha \ge \frac{1+2\eta}{1-2\eta}$ for $\eta \le \frac{\alpha -1}{2(\alpha +1)}$. Hence:
		\[
		\rho_\alpha(\eta)=
		\begin{cases}
			\alpha, & \eta \in [0, \frac{\alpha-1}{2(\alpha+1)}],\\[3pt]
			1+\frac{4\eta}{1-2\eta}, & \eta \in (\frac{\alpha-1}{2(\alpha+1)}, \frac{1}{\alpha}-\frac{1}{2}],\\[3pt]
			1+\frac{2\alpha\eta}{\alpha-1}, & \eta \in (\frac{1}{\alpha}-\frac{1}{2}, \frac{1}{2\alpha}],\\[3pt]
			\frac{\alpha}{\alpha -1}, & \eta \in (\frac{1}{2\alpha},+\infty).
		\end{cases}
		\]
		
		\item When $\alpha \in (\frac{1+\sqrt{5}}{2}, 2]$, we obtain:
		\[
		\rho_\alpha(\eta)=
		\begin{cases}
			\alpha, & \eta \in [0, \frac{(\alpha-1)^2}{2\alpha}],\\[3pt]
			1+\frac{2\alpha\eta}{\alpha-1}, & \eta \in (\frac{(\alpha-1)^2}{2\alpha}, \frac{1}{2\alpha}],\\[3pt]
			\frac{\alpha}{\alpha -1}, & \eta \in (\frac{1}{2\alpha},+\infty).
		\end{cases}
		\]
	\end{itemize}
	This completes the analysis of approximation ratio parameterized by prediction error bound.
\end{proof}

\section{Omitted Proofs for Section 4}

\subsection{Proof of \Cref{prop::bad_LRM_when_alpha_greater_than_quarter}} 
\label{sec::prop_opt_lrm_constant}
\begin{proof}
	For any arbitrary $(\alpha, p)$-LRM constant mechanism with $\alpha \in (\frac{1}{4}, \frac{1}{2}]$, we first consider a $2$-agent instance $\x=(x_1=0, x_2=\frac{1}{2}$). The approximation ratio $\rho(\x)$ under the instance $\x$ can be represented as
	\begin{align*}
		\rho (\x) &= p \cdot \frac{1-\alpha}{1-(\frac{1}{2}-\alpha)}+(1-2p)\cdot \frac{1-0}{1-\frac{1}{2}} + p \cdot \frac{1-\alpha}{1-(\frac{1}{2}+\alpha)} \\
		&= 2(1-2p)+p\cdot \frac{4-4\alpha}{1-4\alpha^2}.
	\end{align*}
	It is straightforward to verify that $\rho(\x)$ is monotonically increasing with respect to $\alpha$ for $\alpha \in (\frac{1}{4}, \frac{1}{2}]$. This implies that $\rho(\x) \geq 2(1-2p)+p \cdot \frac{4-1}{1-\frac{1}{4}}=2(1-2p)+4p=2$ with equality attained when $\alpha=\frac{1}{4}$. In other words, as long as $\alpha > \frac{1}{4}$, the approximation of $(\alpha, p)$-LRM is at least $2$, regardless of the choice of parameter $p \in [0, \frac{1}{2}]$.
\end{proof}

\subsection{Proof of \Cref{lem:two_cases_optimal_lrm_mechanism}}\label{sec::appendix_lemma_two_instances}
\begin{proof}
	Note that by \Cref{lem::two_agent}, it suffices to consider two-agent instances $\x=(x_1,x_2)$. Without loss of generality, we assume $0 \leq x_1 \leq x_2 \leq 1$. Let $\midp(\x)=\frac{x_1+x_2}{2}$ be the midpoint of the agents' locations. By symmetry, we focus on the case where $\midp(\x) \in [0, \frac{1}{2}]$ (the analysis for $\midp(\x) \in [\frac{1}{2}, 1]$ follows analogously). Let $u$ denote the utility of both agents when the facility is placed at $\midp(\x)$, and define $\delta = \frac{1}{2} - \midp(\x)$ as the distance between $\midp(\x)$ and $\frac{1}{2}$. Consequently, we derive that $u \geq \frac{1}{2} + \delta$ as $u= 1-(\midp(\x) - x_1) \geq 1 - \midp(\x)=1-(\frac{1}{2}-\delta)=\frac{1}{2}+\delta$. We first consider the situation where $\alpha \in [0, \frac{1}{6})$, that is $1-2\alpha > \frac{1}{2}+\alpha$. We consider the following two cases.
	
	\textbf{Case (1).} $\delta \in [0, \alpha]$, i.e., $\midp(\x) \in [\frac{1}{2}-\alpha, \frac{1}{2}]$. For any such instance $\x=(x_1,x_2)$, the approximation ratio is upper-bounded by 
	\begin{align*}
		\rho(\x) \leq p \cdot \frac{u+(\alpha - \delta)}{u-(\alpha-\delta)} + (1-2p)\cdot \frac{u+\delta}{u-\delta} + p\cdot \frac{u+(\alpha + \delta)}{u-(\alpha + \delta)}.
	\end{align*}
	This bound follows from the observation that moving the facility location from $\midp(\x)$ to $\frac{1}{2}-\alpha$ results in a maximum utility increase (decrease) of $\alpha - \delta$, regardless of whether $x_1$ or $x_2$ is closer to $\midp(\x)$. Similarly, moving the facility to $\frac{1}{2}$ or $\frac{1}{2} + \alpha$ changes an agent’s utility by at most $\delta$ or $\alpha + \delta$, respectively.
	
	Notice that $\rho(\x)$ is monotonically non-increasing with respect to $u$ and $u \geq \frac{1}{2}+\delta$, we further upper-bound the ratio by
	\begin{align*}
		\rho(\x) &\leq p \cdot \frac{\frac{1}{2}+\alpha}{\frac{1}{2}+2\delta-\alpha} + (1-2p)\cdot \frac{\frac{1}{2}+2\delta}{\frac{1}{2}} + p\cdot \frac{\frac{1}{2}+2\delta +\alpha}{\frac{1}{2}-\alpha}\\
		&= p\cdot \frac{1+2\alpha}{1+4\delta-2\alpha} + (1-2p)\cdot (1+4\delta) + p\cdot \frac{1+4\delta+2\alpha}{1-2\alpha} \\
		&\leq  p\cdot \frac{1+2\alpha-4\delta}{1-2\alpha}+(1-2p)\cdot (1+4\delta) + p\cdot \frac{1+4\delta+2\alpha}{1-2\alpha} \tag{$\because$ convexity of $\frac{1+2\alpha}{1+4\delta-2\alpha}$ and $0\leq \delta \leq \alpha$}\\
		&=  2p\cdot \frac{1+2\alpha}{1-2\alpha} + (1-2p)\cdot (1+4\delta).
	\end{align*}
	Since $\rho(\x)$ is monotonically increasing with respect to $\delta$ and $\delta\in [0, \alpha]$, it follows that $\rho(\x)$ is at most $2p\cdot \frac{1+2\alpha}{1-2\alpha} + (1-2p)\cdot (1+4\alpha)$ in which $\delta = \alpha$, i.e., when $\midp(\x)=\frac{1}{2}-\alpha$, the approximation ratio is maximized, and the worst case falls into the instance $(0, 1-2\alpha)$ where the inequalities become equality.
	
	\textbf{Case (2).} $\delta \in (\alpha, \frac{1}{2}]$, i.e., $0 \leq \midp(\x) < \frac{1}{2}-\alpha$, we first introduce the following claim.
	\begin{claim}
		When $\midp(\x) \in [0,\frac{1}{2}-\alpha)$, for any instance $\x=(x_1,x_2)$, the approximation ratio of $(\alpha,p)$-LRM mechanism under $\x$ is upper-bounded by the approximation ratio under $\x'=(0, x_2)$. 
	\end{claim}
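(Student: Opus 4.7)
The plan is to leverage the hypothesis that $\midp(\x)$ lies strictly to the left of all three facility candidates $\frac{1}{2}-\alpha$, $\frac{1}{2}$, $\frac{1}{2}+\alpha$: this pins down the identity of the worst-off and best-off agents for every realization. I would then shift $x_1$ to $0$ and argue that each realization's envy ratio weakly increases, so the expected envy ratio---which equals the approximation ratio since the optimal envy ratio on a two-agent instance is $1$---cannot decrease.

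First, I would unpack the hypothesis $\midp(\x) < \frac{1}{2}-\alpha$ as $x_1 + x_2 < 1 - 2\alpha$, which additionally implies $\midp(\x') = \frac{x_2}{2} \le \frac{x_1+x_2}{2} < \frac{1}{2}-\alpha$. Consequently, every $y \in \{\frac{1}{2}-\alpha, \frac{1}{2}, \frac{1}{2}+\alpha\}$ satisfies $y > \midp(\x) \ge x_1$ and $y > \midp(\x') \ge 0$, so the left agent (either $x_1$ under $\x$ or $0$ under $\x'$) is strictly farther from $y$ than $x_2$ is. Hence in every realization $x_2$ is the max-utility agent and the left agent is the min-utility agent, under both instances.

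Next, for each $y$ the realized envy ratios share the same numerator and differ only in the denominator:
\[
\ER(y,\x) \;=\; \frac{1-|y-x_2|}{1-(y-x_1)}, \qquad \ER(y,\x') \;=\; \frac{1-|y-x_2|}{1-y}.
\]
Since $x_1 \ge 0$, we have $1-(y-x_1) \ge 1-y$, so $\ER(y,\x) \le \ER(y,\x')$ term-by-term for each of the three candidate $y$. Taking the weighted sum with weights $p$, $1-2p$, $p$, and dividing by the optimal envy ratio $\ER(\midp(\x),\x) = \ER(\midp(\x'),\x') = 1$, yields $\rho(\x) \le \rho(\x')$.

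The only subtle point is ensuring that the labeling ``left agent equals the worst-off agent'' survives the shift from $\x$ to $\x'$, so that the two realized envy ratios admit matching functional form and the inequality holds term-by-term; this is exactly what the hypothesis $\midp(\x) < \frac{1}{2}-\alpha$ delivers, by placing all three candidate facility locations strictly right of the midpoint in both instances. Beyond that bookkeeping, I do not anticipate any further obstacle.
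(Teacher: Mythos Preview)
Your argument is correct and follows essentially the same route as the paper: both establish that, because every candidate facility location $y\in\{\tfrac12-\alpha,\tfrac12,\tfrac12+\alpha\}$ lies to the right of $\midp(\x)$, agent~$1$ is the worst-off agent in every realization, and then observe that shifting $x_1$ down to $0$ only shrinks each denominator $1-(y-x_1)$ while leaving the numerators unchanged, so every realized envy ratio weakly increases and hence so does the expectation. Your explicit check that $\midp(\x')<\tfrac12-\alpha$ as well (so the worst-off/best-off labeling is preserved after the shift) is a nice piece of bookkeeping that the paper leaves implicit.
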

	\begin{proof}
		The proof starts by observing that for any instance $\x=(x_1,x_2)$, it holds that $y-x_1 \geq |y-x_2|$ for any $y\in \{\frac{1}{2}-\alpha, \frac{1}{2}, \frac{1}{2}+\alpha\}$. That is, agent $1$ always obtains a smaller utility than agent $2$. We prove this by considering the location of $x_2$ and each potential facility location $y$. If $x_2 \leq y$, the expression trivially holds as $x_2 \geq x_1$. Conversely, we have $x_2 - y \leq y - x_1$ as $\midp(\x)=\frac{x_1+x_2}{2} \leq \frac{1}{2}-\alpha$ and $y \geq \frac{1}{2}-\alpha$.
		
		Let $u_1(y)$ and $u_2(y)$ be the utilities of agent $1$ and $2$ under a potential facility location $y\in \{\frac{1}{2}-\alpha, \frac{1}{2}, \frac{1}{2}+\alpha\}$. As we know that $u_1(y) \leq u_2(y)$ for each $y\in \{\frac{1}{2}-\alpha, \frac{1}{2}, \frac{1}{2}+\alpha\}$, the approximation ratio of $(\alpha, p)$-LRM mechanism is upper-bounded by 
		\begin{align*}
			\rho(\x) &= p\cdot \frac{u_2(\frac{1}{2}-\alpha)}{u_1(\frac{1}{2}-\alpha)} + (1-2p)\cdot \frac{u_2(\frac{1}{2})}{u_1(\frac{1}{2})} + p\cdot \frac{u_2(\frac{1}{2}+\alpha)}{u_1(\frac{1}{2}+\alpha)} \\
			&= p\cdot \frac{u_2(\frac{1}{2}-\alpha)}{1-((\frac{1}{2}-\alpha)-x_1)} + (1-2p)\cdot \frac{u_2(\frac{1}{2})}{1-(\frac{1}{2}-x_1)} + p\cdot \frac{u_2(\frac{1}{2}+\alpha)}{1-((\frac{1}{2}+\alpha)-x_1)} \\
			&\leq p\cdot \frac{u_2(\frac{1}{2}-\alpha)}{1-((\frac{1}{2}-\alpha)-x_1')} + (1-2p)\cdot \frac{u_2(\frac{1}{2})}{1-(\frac{1}{2}-x_1')} + p\cdot \frac{u_2(\frac{1}{2}+\alpha)}{1-((\frac{1}{2}+\alpha)-x_1')} = \rho(\x').
		\end{align*}
	\end{proof}
	With the claim in hand, we now turn our attention into the instances $\x=(0, x_2)$ when $\midp(\x) \in [0, \frac{1}{2}-\alpha]$. We divide the proof into subcases depending on the location of $x_2$. 
	
	\textbf{Sub-Case (a).} $x_2 \in [0, \frac{1}{2}-\alpha]$. The approximation ratio is represented as 
	\begin{align*}
		\rho(\x) &= p\cdot \frac{1 - (\frac{1}{2}-\alpha -x_2)}{1-(\frac{1}{2}-\alpha)} + (1-2p)\cdot \frac{1-(\frac{1}{2}-x_2)}{1-\frac{1}{2}} + p\cdot \frac{1-(\frac{1}{2}+\alpha - x_2)}{1-(\frac{1}{2}+\alpha)} \\
		&= p \cdot \frac{1+2\alpha+2x_2}{1+2\alpha} + (1-2p) \cdot (1+2x_2) + p\cdot \frac{1-2\alpha +2x_2}{1-2\alpha}. 
	\end{align*}
	Here, $\rho(\x)$ has a derivative of 
	\begin{align*}
		\frac{\mathrm{d}\rho(\x)}{\mathrm{d}x_2} = \frac{2p}{1+2\alpha}+2(1-2p)+\frac{2p}{1-2\alpha} \geq 0,
	\end{align*}
	as $p \in [0,\frac{1}{2}]$ and $\alpha \in [0, \frac{1}{6}]$. This implies that $\rho(\x)$ is monotonically increasing with respect to $x_2$. Since $x_2 \in [0, \frac{1}{2}-\alpha]$, we have that the approximation ratio of any instance $\x$ where $x_2 \in [0, \frac{1}{2}-\alpha]$ is maximized under the instance $\x=(x_1=0,x_2=\frac{1}{2}-\alpha)$.
	
	\textbf{Sub-Case (b).} $x_2 \in (\frac{1}{2}-\alpha, \frac{1}{2}]$. We slightly modify the approximation ratio expression from sub-case (a) and get
	\begin{align*}
		\rho(\x) &= p\cdot \frac{1 - (x_2-(\frac{1}{2}-\alpha))}{1-(\frac{1}{2}-\alpha)} + (1-2p)\cdot \frac{1-(\frac{1}{2}-x_2)}{1-\frac{1}{2}} + p\cdot \frac{1-(\frac{1}{2}+\alpha - x_2)}{1-(\frac{1}{2}+\alpha)} \\
		&= p \cdot \frac{3-2x_2-2\alpha}{1+2\alpha} + (1-2p) \cdot (1+2x_2) + p\cdot \frac{1-2\alpha +2x_2}{1-2\alpha}. 
	\end{align*}
	Similarly, we compute the derivative of $\rho(\x)$ with respect to $x_2$
	\begin{align*}
		\frac{\mathrm{d}\rho(\x)}{\mathrm{d}x_2} = -\frac{2p}{1+2\alpha} + 2(1-2p) + \frac{2p}{1+2\alpha} = \frac{8p\alpha}{1-4\alpha^2} + 2(1-2p) \geq 0. \tag{$\because$ $p \in [0,\frac{1}{2}],\alpha \in [0, \frac{1}{6}]$}
	\end{align*}
	It follows that the instance with the maximum approximation ratio has $x_2=\frac{1}{2}$, i.e., $\x=(x_1=0, x_2=\frac{1}{2})$.
	
	\textbf{Sub-Case (c).} $x_2 \in (\frac{1}{2}, \frac{1}{2}+\alpha]$. In this case, we have 
	\begin{align*}
		\rho(\x) &= p\cdot \frac{1 - (x_2-(\frac{1}{2}-\alpha))}{1-(\frac{1}{2}-\alpha)} + (1-2p)\cdot \frac{1-(x_2-\frac{1}{2})}{1-\frac{1}{2}} + p\cdot \frac{1-(\frac{1}{2}+\alpha - x_2)}{1-(\frac{1}{2}+\alpha)} \\
		&= p \cdot \frac{3-2x_2-2\alpha}{1+2\alpha} + (1-2p) \cdot (3-2x_2) + p\cdot \frac{1-2\alpha +2x_2}{1-2\alpha}.
	\end{align*}
	The derivative is written as 
	\begin{align*}
		\frac{\mathrm{d}\rho(\x)}{\mathrm{d}x_2} = -\frac{2p}{1+2\alpha} - 2(1-2p) + \frac{2p}{1-2\alpha} =\frac{8p\alpha}{1-4\alpha^2} + 4p -2.
	\end{align*}
	Notably, when given $\alpha$ and $p$, the derivative is a constant, implying that the approximation ratio is either monotonically increasing or decreasing with respect to $x_2$. That is, for any instance in this sub-case, the approximation ratio is either upper-bounded by that under $\x=(x_1=0,x_2=\frac{1}{2})$ or $\x=(x_1=0, x_2=\frac{1}{2}+\alpha)$. 
	
	\textbf{Sub-Case (d).} $x_2 \in (\frac{1}{2}+\alpha, 1-2\alpha]$. The approximation ratio is computed as
	\begin{align*}
		\rho(\x) &= p\cdot \frac{1 - (x_2-(\frac{1}{2}-\alpha))}{1-(\frac{1}{2}-\alpha)} + (1-2p)\cdot \frac{1-(x_2-\frac{1}{2})}{1-\frac{1}{2}} + p\cdot \frac{1-(x_2-(\frac{1}{2}+\alpha))}{1-(\frac{1}{2}+\alpha)} \\
		&= p \cdot \frac{3-2x_2-2\alpha}{1+2\alpha} + (1-2p) \cdot (3-2x_2) + p\cdot \frac{3+2\alpha -2x_2}{1-2\alpha}.
	\end{align*}
	The derivative of $\rho(\x)$ w.r.t. $x_2$ is 
	\begin{align*}
		\frac{\mathrm{d}\rho(\x)}{\mathrm{d}x_2} = -\frac{2p}{1+2\alpha} - 2(1-2p) - \frac{2p}{1-2\alpha} =-\frac{4p}{1-4\alpha^2} -2 (1-2p) \leq 0.
	\end{align*}
	Since $\rho(\x)$ is monotonically non-increasing with respect to $x_2$ and $x_2 \in (\frac{1}{2}+\alpha, 1-2\alpha]$, it follows that $\rho(\x)$ is upper-bounded by the approximation ratio under the instance $\x=(x_1=0, x_2=\frac{1}{2}+\alpha)$.
	
	\Cref{fig:summary_lemma_two_cases_opt_lrm} depicts the worst instances under $4$ sub-cases in Case (2). 
	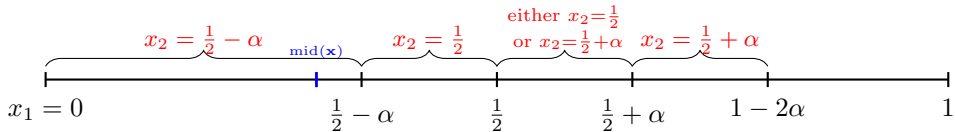
\begin{figure}[!htbp]
		\centering
		\begin{tikzpicture}[x=12cm,y=1cm]
			\draw[thick] (0,0) -- (1,0);
			\foreach \x/\label in {0/$x_1=0$, 0.5/{$\frac{1}{2}$}, 1/1} {
				\draw[thick] (\x,0.1) -- (\x,-0.1);
				\node[below=4pt] at (\x,0) {\label};
				
				\draw[blue, thick] (0.3,0.1) -- (0.3,-0.1);
			}
			\foreach \x/\label in {0.5-0.15/{$\frac{1}{2}-\alpha$}, 0.5+0.15/{$\frac{1}{2}+\alpha$}, 1-0.2/{$1-2\alpha$}} {
				\draw[thick] (\x,0.1) -- (\x,-0.1);
				\node[below=4pt] at (\x,0) {\label};
			}
			\node[blue, above=1pt] at (0.3,0.13) {\tiny $\midp(\x)$};
			\draw [decorate,decoration={brace,amplitude=6pt},yshift=5pt]
			(0,0) -- (0.5-0.15,0) node[midway,yshift=10pt,red] {\small$x_2=\frac{1}{2}-\alpha$};
			\draw [decorate,decoration={brace,amplitude=6pt},yshift=5pt]
			(0.5-0.15,0) -- (0.5,0) node[midway,yshift=10pt,red] {\small $x_2=\frac{1}{2}$};
			\draw [decorate,decoration={brace,amplitude=6pt},yshift=5pt]
			(0.5,0) -- (0.5+0.15,0) node[midway,yshift=14pt,red] {$\text{either }x_2=\frac{1}{2}\atop \text{ or } x_2=\frac{1}{2}+\alpha$};
			\draw [decorate,decoration={brace,amplitude=6pt},yshift=5pt]
			(0.5+0.15,0) -- (0.8,0) node[midway,yshift=10pt,red] {\small $x_2=\frac{1}{2}+\alpha$};
		\end{tikzpicture}
		\caption{Summary of Analysis when $\midp(\x) \in [0, \frac{1}{2}-\alpha]$}
		\label{fig:summary_lemma_two_cases_opt_lrm}
	\end{figure}
	
	We next observe that the approximation ratio of instance $\x=(x_1=0, x_2=\frac{1}{2}-\alpha)$ is no worse than that of instance $\x=(x_1=0, x_2=\frac{1}{2})$, and the approximation ratio of instance $\x=(x_1=0, x_2=1-2\alpha)$ under Case (1) is no worse than that of instance $\x=(x_1=0, x_2=\frac{1}{2}+\alpha)$. 
	
	By combining all aforementioned subcase discussions, we conclude that when $\alpha \leq \frac{1}{6}$, computing the general optimal $(\alpha,p)$-LRM mechanism boils down to finding the optimal $(\alpha, p)$ which optimizes the approximation ratio of envy ratio objective under instance $\x=(x_1=0, x_2=\frac{1}{2})$ and $\x'=(x_1'=0, x_2'=\frac{1}{2}+\alpha)$. 
	
	We next consider the remaining main case where $\alpha \in [\frac{1}{6}, \frac{1}{4}]$. We consider three subcases depending on the position of $\midp(\x)$.
	Since we only consider $\midp(\x) \in [0, \frac{1}{2}]$, the approximation ratio $\rho(\x)$ is viewed as a function of $u$ (recall $u$ is the optimal utility under $\midp(\x)$), which is monotonically decreasing with respect to $u$. Note that $u \geq \frac{1}{2}+\delta$, which implies that we only need to consider the cases where $x_1=0$. By fixing $x_1=0$, we mainly consider the location of $x_2$ as follows.
	
	\textbf{Case (1).} $x_2 \in [0, \frac{1}{2}-\alpha]$. The approximation ratio is represented as
	\begin{align*}
		\rho(\x) &= p\cdot \frac{1 - (\frac{1}{2}-\alpha -x_2)}{1-(\frac{1}{2}-\alpha)} + (1-2p)\cdot \frac{1-(\frac{1}{2}-x_2)}{1-\frac{1}{2}} + p\cdot \frac{1-(\frac{1}{2}+\alpha - x_2)}{1-(\frac{1}{2}+\alpha)} \\
		&= p \cdot \frac{1+2\alpha+2x_2}{1+2\alpha} + (1-2p) \cdot (1+2x_2) + p\cdot \frac{1-2\alpha +2x_2}{1-2\alpha},
	\end{align*}
	which is monotonically increasing with respect to $x_2$, implying the worst instance in this case is $\x=(x_1=0,x_2=\frac{1}{2}-\alpha)$.
	
	\textbf{Case (2).}  $x_2 \in (\frac{1}{2}-\alpha, \frac{1}{2}]$. The approximation ratio is represented as 
	\begin{align*}
		\rho(\x) &= p\cdot \frac{1 - (x_2-(\frac{1}{2}-\alpha))}{1-(\frac{1}{2}-\alpha)} + (1-2p)\cdot \frac{1-(\frac{1}{2}-x_2)}{1-\frac{1}{2}} + p\cdot \frac{1-(\frac{1}{2}+\alpha - x_2)}{1-(\frac{1}{2}+\alpha)} \\
		&= p \cdot \frac{3-2x_2-2\alpha}{1+2\alpha} + (1-2p) \cdot (1+2x_2) + p\cdot \frac{1-2\alpha +2x_2}{1-2\alpha}.
	\end{align*}
	Consequently, the derivative of $\rho(\x)$ over $x_2$ is 
	\begin{align*}
		\frac{\mathrm{d}\rho(\x)}{\mathrm{d}x_2} = -\frac{2p}{1+2\alpha} +2(1-2p)+\frac{2p}{1-2\alpha} = \frac{8p\alpha}{1-4\alpha^2} + 2(1-2p) \geq 0.
	\end{align*}
	Therefore, the approximation ratio is upper-bounded by the instance where $x_1=0, x_2=\frac{1}{2}$.
	
	\textbf{Case (3).} $x_2 \in (\frac{1}{2}, 1-2\alpha]$. The approximation ratio is represented as 
	\begin{align*}
		\rho(\x) &= p\cdot \frac{1 - (x_2-(\frac{1}{2}-\alpha))}{1-(\frac{1}{2}-\alpha)} + (1-2p)\cdot \frac{1-(x_2-\frac{1}{2})}{1-\frac{1}{2}} + p\cdot \frac{1-(\frac{1}{2}+\alpha - x_2)}{1-(\frac{1}{2}+\alpha)} \\
		&= p \cdot \frac{3-2x_2-2\alpha}{1+2\alpha}  + (1-2p) \cdot (3-2x_2) + p\cdot \frac{1-2\alpha +2x_2}{1-2\alpha}.
	\end{align*}
	We compute the derivative of $\rho(\x)$ with respect to $x_2$
	\begin{align*}
		\frac{\mathrm{d}\rho(\x)}{\mathrm{d}x_2} = -\frac{2p}{1+2\alpha} -2(1-2p)+\frac{2p}{1-2\alpha}.
	\end{align*}
	
	\textbf{Case (4).} $x_2 \in (1-2\alpha, \frac{1}{2}+\alpha]$. The approximation ratio is represented as 
	\begin{align*}
		\rho(\x) &= p\cdot \frac{1-(\frac{1}{2}-\alpha)}{1 - (x_2-(\frac{1}{2}-\alpha))} + (1-2p)\cdot \frac{1-(x_2-\frac{1}{2})}{1-\frac{1}{2}} + p\cdot \frac{1-(\frac{1}{2}+\alpha - x_2)}{1-(\frac{1}{2}+\alpha)} \\
		&= p \cdot \frac{1+2\alpha}{3-2x_2-2\alpha} + (1-2p) \cdot (3-2x_2) + p\cdot \frac{1-2\alpha +2x_2}{1-2\alpha}.
	\end{align*}
	Similarly, the derivative of $\rho(\x)$ over $x_2$ is written as
	\begin{align*}
		\frac{\mathrm{d}\rho(\x)}{\mathrm{d}x_2} = \frac{2p(1+2\alpha)}{(3-2x_2-2\alpha)^2} -2(1-2p)+\frac{2p}{1-2\alpha}.
	\end{align*}
	Notably, the derivative in Case (4) is no less than that of Case (3). It follows that if $\rho(\x)$ is monotonically increasing w.r.t. $x_2$ in Case (3), then $\rho(\x)$ is monotonically increasing w.r.t. $x_2$ in Case (4). Conversely, if $\rho(\x)$ is monotonically decreasing w.r.t. $x_2$ in Case (3), $\rho(\x)$ could be either monotonically increasing or decreasing. This implies that the instance with worst approximation ratio is either $x_2=\frac{1}{2}$ or $x_2=\frac{1}{2}+\alpha$. 
	
	\textbf{Case (5).} $x_2\in (\frac{1}{2}+\alpha, 1]$. Recall the definition of $\delta = \frac{1}{2}-\midp(\x)$, we write the approximation ratio $\rho(\x)$ as a function of $\delta$
	\begin{align*}
		\rho(\x) &\leq p \cdot \frac{\frac{1}{2}+\alpha}{\frac{1}{2}+2\delta-\alpha} + (1-2p)\cdot \frac{\frac{1}{2}+2\delta}{\frac{1}{2}} + p\cdot \frac{\frac{1}{2}+2\delta +\alpha}{\frac{1}{2}-\alpha}\\
		&= p\cdot \frac{1+2\alpha}{1+4\delta-2\alpha} + (1-2p)\cdot (1+4\delta) + p\cdot \frac{1+4\delta+2\alpha}{1-2\alpha}.
	\end{align*}
	Since $x_2\in (\frac{1}{2}+\alpha, 1]$, we have $\delta \in [0, \frac{1}{4}-\frac{\alpha}{2}]$. Likewise, we use the same technique in Case (1) when considering $\alpha \in [0, \frac{1}{6}]$. From the convexity of the term $\frac{1+2\alpha}{1+4\delta-2\alpha}$, we get
	\begin{align*}
		\rho(\x) \leq  p\cdot (-\frac{2+4\alpha}{(1-2\alpha)^2}\delta + \frac{1+2\alpha}{1-2\alpha}) + (1-2p)\cdot (1+4\delta) + p\cdot \frac{1+4\delta+2\alpha}{1-2\alpha}.
	\end{align*}
	Consider the derivative of the RHS.
	\begin{align*}
		\frac{\mathrm{d}\rho(\x)}{\mathrm{d}\delta}&= -\frac{(2+4\alpha)p}{(1-2\alpha)^2} + 4(1-2p)+\frac{4p}{1-2\alpha}\\
		&\geq  -\frac{2+\frac{2}{3}}{\frac{4}{9}}p + 4(1-2p)+\frac{4p}{1-\frac{1}{3}} \tag{$\because$ monotonically increasing w.r.t. $\alpha \in [\frac{1}{6}, \frac{1}{4}]$}\\
		&=  4-8p \geq 0.
	\end{align*}
	This implies that the RHS is monotonically increasing with respect to $\delta$ when $\delta \in [0, \frac{1}{4}-\frac{\alpha}{2}]$. Hence, $\rho(\x)$ is upper-bounded by the instance when $\delta = \frac{1}{4}-\frac{\alpha}{2}$, i.e., $\x=(x_1=0, x_2=\frac{1}{2}+\alpha)$. 
	
	By combining the two main cases, i.e., $\alpha \in [0, \frac{1}{6})$ and $\alpha \in [\frac{1}{6}, \frac{1}{4}]$, we derive that $\x=(x_1=0,x_2=\frac{1}{2})$ and $\x'=(x_1'=0,x_2'=\frac{1}{2}+\alpha)$ are the two instances with the worst approximation ratio. Formally, when $\alpha \leq \frac{1}{4}$, the $(\alpha^\ast, p^\ast)$-LRM constant Mechanism optimizes the approximation ratio of envy ratio objective where $(\alpha^\ast, p^\ast)=\arg\min_{(\alpha,p)} \{\max\{\rho(\x), \rho(\x')\}\}$. This completes the proof.
\end{proof} 

\subsection{Proof of \Cref{thm::opt_LRM_constant_mechanism}}\label{sec::appendix_proof_of_opt_lrm_mechanism}
\begin{proof}
	
	Anonymity and strategyproofness are immediate. From \Cref{prop::bad_LRM_when_alpha_greater_than_quarter} and \Cref{lem:two_cases_optimal_lrm_mechanism}, it suffices to only consider the two special instances $\x=(x_1=0, x_2=\frac{1}{2})$, and $\x'=(x_1'=0,x_2'=\frac{1}{2}+\alpha)$. 
	
	Consider any arbitrary $(\alpha,p)$-LRM mechanism where $\alpha \in [0,\frac{1}{4})$. We express the approximation ratio for instances $\x$ and $\x’$ as a function of $\alpha$ and $p$ as follows. The approximation ratio under $\x$ can be expressed as 
	\begin{align*}
		\rho (\x) &= p \cdot \frac{1-\alpha}{1-(\frac{1}{2}-\alpha)}+(1-2p)\cdot \frac{1-0}{1-\frac{1}{2}} + p \cdot \frac{1-\alpha}{1-(\frac{1}{2}+\alpha)}.
	\end{align*}
	The expression of the approximation ratio under $\x'$ slightly varies depending on the range of $\alpha$. In particular, when $\alpha \in [0, \frac{1}{6}]$, it can be represented as
	\begin{align*}
		\rho(\x') &= p\cdot\frac{1-2\alpha}{\frac{1}{2}+\alpha} + (1-2p)\cdot\frac{1-\alpha}{\frac{1}{2}} + p\cdot\frac{1}{\frac{1}{2}-\alpha}\\
		&= p\cdot\frac{2-4\alpha}{1+2\alpha} + (1-2p)(2-2\alpha) + p\cdot\frac{2}{1-2\alpha}.
	\end{align*}
	When $\alpha \in (\frac{1}{6}, \frac{1}{4}]$, for instance $\x'$, when placing the facility at $\frac{1}{2}-\alpha$, $x_1$ is the agent who has higher utility. Henceforth, the approximation ratio is written as
	\begin{align*}
		\rho(\x') &= p\cdot\frac{\frac{1}{2}+\alpha}{1-2\alpha} + (1-2p)\cdot\frac{1-\alpha}{\frac{1}{2}} + p\cdot\frac{1}{\frac{1}{2}-\alpha}\\
		&= p\cdot\frac{1+2\alpha}{2-4\alpha} + (1-2p)(2-2\alpha) + p\cdot\frac{2}{1-2\alpha}.
	\end{align*}
	
	Since the approximation ratio under instance $\x'$ varies with respect to the range of $\alpha$. We compute the optimal parameters of $\alpha$ and $p$ by considering $\alpha \in [0, \frac{1}{6}]$ and $\alpha\in (\frac{1}{6}, \frac{1}{4}]$, respectively.
	
	\textbf{Case (1).} $\alpha \in [0, \frac{1}{6}]$. We show that $\min_{\alpha \in [0, \frac{1}{6}), p \in [0, \frac{1}{2}]} \max \{\rho(\x), \rho(\x')\}$ takes an optimal value of approximately $1+\frac{2}{\sqrt{5}}\approx 1.8944$ when $\alpha=\frac{\sqrt{5}}{2}-1$ and $p=\frac25$.
	
	We first consider the values of $\alpha$ and $p$ which satisfy $\rho(\x)=\rho(\x')$.
	
	We have
	$$2(1-2p)+p\cdot \frac{4-4\alpha}{1-4\alpha^2}=p\cdot\frac{2-4\alpha}{1+2\alpha} + (1-2p)(2-2\alpha) + p\cdot\frac{2}{1-2\alpha}.$$
	Dividing both sides by $2$ simplifies the expression to 
	$$1-2p+p\cdot \frac{2-2\alpha}{1-4\alpha^2}=p\cdot\frac{1-2\alpha}{1+2\alpha} + (1-2p)(1-\alpha) + p\cdot\frac{1}{1-2\alpha},$$
	$$\iff \alpha-2\alpha p=\frac{p(1-2\alpha)^2}{1-4\alpha^2}+\frac{p(1+2\alpha)}{1-4\alpha^2}-\frac{p(2-2\alpha)}{1-4\alpha^2}$$
	$$\iff (\alpha-2\alpha p)(1-4\alpha^2)=p(1-4\alpha+4\alpha^2+1+2\alpha -2+2\alpha )$$
	$$\iff \alpha-4\alpha^3-2\alpha p+8\alpha^3 p=4\alpha^2p$$
	$$\iff p(8\alpha^2-4\alpha^2-2\alpha)=4\alpha^3-\alpha$$
	$$\iff p=\frac{4\alpha^2-1}{2(4\alpha^2-2\alpha-1)}.$$
	
	Hence, we know that $\rho(\x)=\rho(\x')$ when $p=\frac{4\alpha^2-1}{2(4\alpha^2-2\alpha-1)}$. Note that this solution also requires $4\alpha^2-1\neq 0$ and $4\alpha^2-2\alpha-1\neq 0$, which are achieved under $\alpha\in [0,\frac{1}{6})$. Substituting $p=\frac{4\alpha^2-1}{2(4\alpha^2-2\alpha-1)}$ into $2(1-2p)+p\cdot \frac{4-4\alpha}{1-4\alpha^2}$ gives us
	\begin{align*}
		2\left(1-\frac{4\alpha^2-1}{4\alpha^2-2\alpha-1}\right)+\frac{4\alpha^2-1}{2(4\alpha^2-2\alpha-1)}\cdot \frac{4-4\alpha}{1-4\alpha^2}&=\frac{8\alpha^2-4\alpha-2-8\alpha^2+2}{4\alpha^2-2\alpha-1}+\frac{2\alpha-2}{4\alpha^2-2\alpha-1}\\
		&=\frac{2\alpha+2}{-4\alpha^2+2\alpha+1},
	\end{align*}
	which has a derivative of
	$$\frac{\mathrm{d}}{\mathrm{d}\alpha}\left(\frac{2\alpha+2}{-4\alpha^2+2\alpha+1}\right)=\frac{8\alpha^2+16\alpha-2}{(-4\alpha^2+2\alpha+1)^2}.$$
	
	This derivative is equal to $0$ when $\alpha=-1-\frac{\sqrt{5}}{2}$ or when $\alpha=\frac{\sqrt{5}}{2}-1 \approx 0.118$. We ignore the former value as $\alpha\geq 0$. Substituting $\alpha=\frac{\sqrt{5}}{2}-1$ into $p=\frac{4\alpha^2-1}{2(4\alpha^2-2\alpha-1)}$ gives $p=\frac{2}{5}$.
	
	By substitution, we have that $\rho(\x)=\rho(\x')=1+\frac{2}{\sqrt{5}}\approx 1.8944$ when $\alpha=\frac{\sqrt{5}}{2}-1$ and $p=\frac{2}{5}$. From Lemmas~\ref{lem: optimality1} and \ref{lem: optimality2}, it follows that $\min_{\alpha \in [0, \frac{1}{6}), p \in [0, \frac{1}{2}]} \max \{\rho(\x), \rho(\x')\}=1+\frac{2}{\sqrt{5}}$, proving the optimality of the mechanism.
	
	\textbf{Case (1).} $\alpha \in [\frac{1}{6}, \frac{1}{4}]$. With a similar method of analysis in (1), we can prove that $\min_{\alpha \in [\frac{1}{6}, \frac{1}{4}], p \in [0, \frac{1}{2}]} \max $ $\{\rho(\x), \rho(\x')\}$ takes an optimal value of approximately $\frac{21}{11}\approx 1.8944$ when $\alpha=\frac{1}{6}$ and $p=\frac{4}{11}$. Similarly, we first consider the value of $\alpha$ and $p$ which satisfy $\rho(\x)=\rho(\x')$. That is,
	\begin{align*}
		2(1-2p)+p\cdot \frac{4-4\alpha}{1-4\alpha^2} = p\cdot\frac{1+2\alpha}{2-4\alpha} + (1-2p)(2-2\alpha) + p\cdot\frac{2}{1-2\alpha}.
	\end{align*}
	It follows that 
	\begin{align*}
		p = \frac{16\alpha^3-4\alpha}{32\alpha^3-4\alpha^2-28\alpha + 3}.
	\end{align*}
	Henceforth, $\rho(\x)=\rho(\x')$ when $p = \frac{16\alpha^3-4\alpha}{32\alpha^3-4\alpha^2-28\alpha + 3}$. By substituting the $p$ back into $\rho(\x)$, we have
	\begin{align*}
		\rho(\x)=\rho(\x') = \frac{8\alpha^2-56\alpha + 6}{32\alpha^3-4\alpha^2-28\alpha + 3},
	\end{align*}
	which is monotonically increasing with respect to $\alpha \in [\frac{1}{6}, \frac{1}{4}]$. Hence, by leveraging the very similar technique as in (1), we obtain that when $\alpha=\frac{1}{6}$ and $p=\frac{4}{11}$, $\min_{\alpha\in[\frac{1}{6}, \frac{1}{4}]}\max\{\rho(\x), \rho(\x')\}=\frac{21}{11}\approx 1.909$.
	
	By combining these two case analysis, we conclude that the $(\frac{\sqrt{5}}{2}-1, \frac{2}{5})$-LRM constant mechanism optimizes the approximation ratio at $1+\frac{2}{\sqrt{5}}\approx 1.8944$.
\end{proof}

\begin{lemma}\label{lem: optimality1}
	If $\rho(\x)<1+\frac{2}{\sqrt{5}}$, then $\rho(\x')> 1+\frac{2}{\sqrt{5}}$.
\end{lemma}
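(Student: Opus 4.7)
The plan is to exploit opposing monotonicity of $\rho(\x)$ and $\rho(\x')$ in the parameter $p$ for each fixed $\alpha$, which reduces the statement to a one-dimensional analysis of the ``equalizing'' curve $V(\alpha) = \frac{2\alpha+2}{-4\alpha^2+2\alpha+1}$ already derived in the main proof of Theorem~\ref{thm::opt_LRM_constant_mechanism}.

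First I would compute $\partial_p \rho(\x)$ and $\partial_p \rho(\x')$ directly from the explicit formulas in Case~(1) above. A short calculation yields $\partial_p \rho(\x) = \frac{4\alpha(4\alpha-1)}{1-4\alpha^2}$, which is non-positive on $\alpha \in [0, 1/4)$ and strictly negative on $\alpha \in (0, 1/6)$. An analogous calculation gives $\partial_p \rho(\x') = \frac{8\alpha^2(3-2\alpha)}{1-4\alpha^2}$, which is non-negative on $\alpha \in [0, 1/2)$ and strictly positive on $\alpha \in (0, 1/6)$. Hence on the interior, $\rho(\x)$ is strictly decreasing in $p$ while $\rho(\x')$ is strictly increasing in $p$, so the two curves cross exactly once at the value $p_{eq}(\alpha) = \frac{4\alpha^2-1}{2(4\alpha^2-2\alpha-1)}$, with common value $V(\alpha)$.

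I would then conclude $\max\{\rho(\x), \rho(\x')\} \geq V(\alpha)$ for every admissible $p$, with equality iff $p = p_{eq}(\alpha)$. From the derivative analysis already carried out in the main proof, $V(\alpha)$ attains its unique minimum $1+\frac{2}{\sqrt{5}}$ on $[0, 1/6]$ at $\alpha^{\ast} = \frac{\sqrt{5}}{2}-1$, so $V(\alpha) \geq 1+\frac{2}{\sqrt{5}}$ throughout $[0,1/6]$, with equality only at $\alpha^{\ast}$.

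Finally, assume $\rho(\x) < 1+\frac{2}{\sqrt{5}}$. The case $\alpha=0$ is vacuous since then $\rho(\x)\equiv 2$, so $\alpha > 0$. The strict monotonicity of $\rho(\x)$ in $p$ combined with $\rho(\x) < 1+\frac{2}{\sqrt{5}} \leq V(\alpha)$ forces $p > p_{eq}(\alpha)$, and the strict monotonicity of $\rho(\x')$ then yields $\rho(\x') > V(\alpha) \geq 1+\frac{2}{\sqrt{5}}$, as required. The main obstacle I anticipate is not computational but logical: at $\alpha = \alpha^{\ast}$ we have $V(\alpha^{\ast}) = 1+\frac{2}{\sqrt{5}}$ exactly, so one cannot obtain the strict inequality by appealing to a gap in $V$; instead one must use the strict monotonicity of $\rho(\x')$ in $p$ to push strictly past $V(\alpha^{\ast})$. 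Ensuring this strict monotonicity holds cleanly across the whole range $(0, 1/6)$ (and carefully handling the closed endpoint $\alpha = 1/6$, where the formula for $\rho(\x')$ changes regime) is the delicate bookkeeping step.
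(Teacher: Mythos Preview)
Your approach is correct and takes a genuinely different route from the paper's own proof. The paper proceeds by solving the hypothesis $\rho(\x)<1+\tfrac{2}{\sqrt{5}}$ explicitly for $p$ to obtain a threshold $p_0(\alpha)=\tfrac{(1-2/\sqrt{5})(1-4\alpha^2)}{4\alpha(1-4\alpha)}$, then substitutes $p>p_0(\alpha)$ into the (increasing-in-$p$) expression for $\rho(\x')$, obtaining a new one-variable function of $\alpha$ that it minimises from scratch by a separate derivative computation. You instead use the crossing point $p_{eq}(\alpha)$ and the equalising value $V(\alpha)$, both of which were already computed in the proof of Theorem~\ref{thm::opt_LRM_constant_mechanism}, together with the bound $V(\alpha)\ge 1+\tfrac{2}{\sqrt{5}}$ established there. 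Your argument is therefore more economical (no new minimisation is needed) and, as you note, it automatically yields the companion Lemma~\ref{lem: optimality2} by the same monotonicity reasoning with the roles of $\rho(\x)$ and $\rho(\x')$ exchanged. The paper's argument, by contrast, is self-contained and does not appeal back to the main theorem's analysis, at the cost of an extra derivative computation. Your handling of the edge cases ($\alpha=0$ vacuous; strict monotonicity at $\alpha=\alpha^\ast$ supplying the strict inequality even though $V(\alpha^\ast)=1+\tfrac{2}{\sqrt{5}}$) is sound; since both $\rho(\x)$ and $\rho(\x')$ are affine in $p$, the monotonicity argument works regardless of whether $p_{eq}(\alpha)$ lies in $[0,\tfrac12]$, so that bookkeeping worry can be dropped.
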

\begin{proof}
	We have
	$$\rho (\x)<1+\frac{2}{\sqrt{5}} $$
	$$\iff 2-4p+p\cdot \frac{4-4\alpha}{1-4\alpha^2}<1+\frac{2}{\sqrt{5}}$$
	$$\iff 4p\left(1-\frac{1-\alpha}{1-4\alpha^2}\right)>1-\frac{2}{\sqrt{5}}$$
	$$\iff p>\frac{(1-\frac{2}{\sqrt{5}})(1-4\alpha^2)}{4\alpha(1-4\alpha)}.$$
	
	Note that the approximation ratio under $\x'$ can be rewritten as
	\begin{align*}
		\rho(\x')&=p\cdot\frac{2-4\alpha}{1+2\alpha} + (1-2p)(2-2\alpha) + p\cdot\frac{2}{1-2\alpha}\\
		&=p\left(\frac{2-4\alpha}{1+2\alpha}+4\alpha-4+\frac{2}{1-2\alpha}\right)+2-2\alpha\\
		&=p\left(\frac{8\alpha^2(3-2\alpha)}{(1-2\alpha)(2\alpha+1)}\right)+2-2\alpha.
	\end{align*}
	Since $p\cdot\frac{2-4\alpha}{1+2\alpha} + (1-2p)(2-2\alpha) + p\cdot\frac{2}{1-2\alpha}\geq 0$ for all $0\leq \alpha \leq \frac{1}{6}$, we can substitute $p>\frac{(1-\frac{2}{\sqrt{5}})(1-4\alpha^2)}{4\alpha(1-4\alpha)}$ to obtain
	\begin{align*}
		\rho(\x')&>\left(\frac{(1-\frac{2}{\sqrt{5}})(1-4\alpha^2)}{4\alpha(1-4\alpha)}\right)\left(\frac{8\alpha^2(3-2\alpha)}{(1-2\alpha)(2\alpha+1)}\right)+2-2\alpha\\
		&=\frac{(1-\frac{2}{\sqrt{5}})\cdot2\alpha(3-2\alpha)}{1-4\alpha}+2-2\alpha.
	\end{align*}
	The derivative of this expression is
	$$\frac{d}{d\alpha}\left(\frac{(1-\frac{2}{\sqrt{5}})\cdot2\alpha(3-2\alpha)}{1-4\alpha}+2-2\alpha\right)=-\frac{4(4(5+2\sqrt{5})\alpha^2-2(5+2\sqrt{5})\alpha+3\sqrt{5}-5}{5(1-4\alpha)^2},$$
	which is equal to $0$ when $\alpha=\frac{\sqrt{5}}{2}-1$ or $\alpha=\frac{3-\sqrt{5}}{2}$. Thus, we see that when $\alpha\in [0,\frac{1}{6}]$, the expression takes a minimum of $1+\frac{2}{\sqrt{5}}$ when $\alpha=\frac{\sqrt{5}}{2}-1$. Therefore, $\rho(\x')>1+\frac{2}{\sqrt{5}}$ when $\rho(\x)<1+\frac{2}{\sqrt{5}}$.
\end{proof}
\begin{lemma}\label{lem: optimality2}
	If $\rho(\x')<1+\frac{2}{\sqrt{5}}$, then $\rho(\x)> 1+\frac{2}{\sqrt{5}}$.
\end{lemma}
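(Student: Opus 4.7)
The plan is to mirror the proof of \Cref{lem: optimality1}. The statement is its exact symmetric counterpart, so I expect the same schema---derive an upper bound on $p$ from the hypothesis, substitute into the other instance's ratio (which is monotone in $p$), and verify the resulting one-variable bound---to close the argument, with the roles of $\x$ and $\x'$ swapped.

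First, I would reuse the factored form of $\rho(\x')$ already obtained in the proof of \Cref{lem: optimality1}, namely
\[
\rho(\x')=p\cdot\frac{8\alpha^{2}(3-2\alpha)}{(1-2\alpha)(1+2\alpha)}+2-2\alpha.
\]
The coefficient of $p$ is strictly positive on $\alpha\in(0,\tfrac{1}{6}]$, so the hypothesis $\rho(\x')<1+\tfrac{2}{\sqrt{5}}$ is equivalent to
\[
p<\frac{\bigl(2\alpha-1+\tfrac{2}{\sqrt{5}}\bigr)(1-2\alpha)(1+2\alpha)}{8\alpha^{2}(3-2\alpha)}.
\]
When $\alpha\le\tfrac{1}{2}-\tfrac{1}{\sqrt{5}}$ the right-hand side is non-positive, so the hypothesis is unsatisfiable (as $p\ge 0$) and the implication holds vacuously; the boundary case $\alpha=0$ is immediate since then $\rho(\x')=2>1+\tfrac{2}{\sqrt{5}}$.

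Next, I would rewrite $\rho(\x)$ in the analogous form
\[
\rho(\x)=2+p\cdot\frac{4\alpha(4\alpha-1)}{(1-2\alpha)(1+2\alpha)},
\]
whose coefficient of $p$ is non-positive on $[0,\tfrac{1}{6}]$ since $4\alpha-1\le 0$. Hence $\rho(\x)$ is non-increasing in $p$, and plugging in the upper bound on $p$ from the previous step gives the clean one-variable lower bound
\[
\rho(\x)\;>\;2+\frac{\bigl(2\alpha-1+\tfrac{2}{\sqrt{5}}\bigr)(4\alpha-1)}{2\alpha(3-2\alpha)}\;=:\;g(\alpha).
\]

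The remaining---and principal---step is to verify that $g(\alpha)\ge 1+\tfrac{2}{\sqrt{5}}$ on $(\tfrac{1}{2}-\tfrac{1}{\sqrt{5}},\tfrac{1}{6}]$, with equality precisely at $\alpha=\tfrac{\sqrt{5}}{2}-1$. I would do this by differentiating $g$ and showing its only interior critical point in the relevant range is $\alpha=\tfrac{\sqrt{5}}{2}-1$ (one should expect the same critical values $\tfrac{\sqrt{5}}{2}-1$ and $\tfrac{3-\sqrt{5}}{2}$ that appeared in the proof of \Cref{lem: optimality1}, of which only the former lies in the interval), evaluating $g$ there to confirm equality with $1+\tfrac{2}{\sqrt{5}}$, and checking the endpoints. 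A cleaner conceptual justification is also available: at the equalizer $(\alpha^{\ast},p^{\ast})=(\tfrac{\sqrt{5}}{2}-1,\tfrac{2}{5})$ identified in \Cref{thm::opt_LRM_constant_mechanism} one has $\rho(\x)=\rho(\x')=1+\tfrac{2}{\sqrt{5}}$, so the lower bound from $\rho(\x')$ to $\rho(\x)$ obtained above becomes tight at the same $\alpha$ as the symmetric bound in \Cref{lem: optimality1}, and the critical-point analysis already carried out there transfers essentially verbatim. The main obstacle is precisely this last calculus verification; it is mechanical but easy to miscompute, so I would lean on the symmetry with \Cref{lem: optimality1} rather than redo the differentiation from scratch.
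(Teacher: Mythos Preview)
Your proposal is correct and follows essentially the same approach as the paper: derive the upper bound on $p$ from $\rho(\x')<1+\tfrac{2}{\sqrt{5}}$, note the vacuous range $\alpha\le\tfrac{1}{2}-\tfrac{1}{\sqrt{5}}$, substitute into the monotone-in-$p$ expression for $\rho(\x)$, and minimize the resulting function $g(\alpha)$ by differentiation. One small correction: the second critical point of $g$ is not $\tfrac{3-\sqrt{5}}{2}$ as in \Cref{lem: optimality1} but rather $\tfrac{6}{29}-\tfrac{9\sqrt{5}}{58}$ (the functions differ, so the derivative's quadratic numerator is different); this value is negative and hence still irrelevant to the interval, so your conclusion stands.
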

\begin{proof}
	We have
	$$\rho(\x')<1+\frac{2}{\sqrt{5}}$$
	$$\iff p\left(\frac{8\alpha^2(3-2\alpha)}{(1-2\alpha)(2\alpha+1)}\right)<2\alpha-1+\frac{2}{\sqrt{5}}$$
	$$\iff p<\frac{(2\alpha-1+\frac{2}{\sqrt{5}})(1-4\alpha^2)}{8\alpha^2(3-2\alpha)}.$$
	Note that when $\alpha<\frac{1}{2}-\frac{1}{\sqrt{5}}$, the RHS becomes negative and consequently, the inequality cannot be satisfied. We therefore restrict our attention to $\alpha\in [\frac{1}{2}-\frac{1}{\sqrt{5}},\frac16]$.
	
	Note that
	$\rho(\x)=2+4p\left(\frac{4\alpha^2-\alpha}{1-4\alpha^2}\right)$, and that $\frac{4\alpha^2-\alpha}{1-4\alpha^2}<0$ when $\alpha\in [\frac{1}{2}-\frac{1}{\sqrt{5}},\frac16]$. Therefore, by substituting $p<\frac{(2\alpha-1+\frac{2}{\sqrt{5}})(1-4\alpha^2)}{8\alpha^2(3-2\alpha)}$, we have that when $\rho(\x')<1+\frac{2}{\sqrt{5}}$,
	\begin{align*}
		\rho(\x)>2+\frac{(2\alpha-1+\frac{2}{\sqrt{5}})(4\alpha-1)}{2\alpha(3-2\alpha)}.
	\end{align*}
	
	The derivative of the RHS is
	$$\frac{d}{da}\left(2+\frac{(2\alpha-1+\frac{2}{\sqrt{5}})(4\alpha-1)}{2\alpha(3-2\alpha)}\right)=\frac{4(15+4\sqrt{5})\alpha^2+(20-8\sqrt{5})\alpha+6\sqrt{5}-15}{10(3-2\alpha)^2\alpha^2},$$
	which is equal to $0$ when $\alpha=\frac{\sqrt{5}}{2}-1$ or when $\alpha=\frac{6}{29}-\frac{9\sqrt{5}}{58}$. We therefore see that when $\alpha \in [0,\frac{1}{6}]$, the expression takes a minimum of $1+\frac{2}{\sqrt{5}}$ when $\alpha=\frac{\sqrt{5}}{2}-1$, proving that $\rho(\x)>1+\frac{2}{\sqrt{5}}$ when $\rho(\x')<1+\frac{2}{\sqrt{5}}$.
\end{proof}

\subsection{Proof of \Cref{thm::random_mechanism_new_lower_bound}}\label{sec:randomized_mechanism_improved_lower_bound}
\begin{proof}
	To show the lower bound, we first consider the location profile with two agents $\x=(x_1=0.29, x_2=0.71)$. Note that for any randomized mechanism $f$, we have either $\E_{y\in f(\x)}[|y-0.29|] \geq 0.21$ or $\E_{y\in f(\x)}[|y-0.71|]\geq 0.21$.
	
	We first consider the former case, where $\E_{y\in f(\x)}[|y-0.29|] \geq 0.21$. Let $\x'=(0,0.71)$. Then we must have
	\[
	\E_{y\in f(\x')}[|y-0.29|]\geq \E_{y\in f(\x)}[|y-0.29|]\geq 0.21,
	\]
	otherwise agent $1$ at $x_1=0.29$ has an incentive to change her reported location to $x_i'=0$ for a better outcome, violating strategyproofness. Denote $\delta:=\frac{617}{4300}\approx 0.14$, and
	\begin{align*}
		p_1 &:= \Prob\{f(\x')\in [0.29-\delta,0.29+\delta]\}, \\
		p_2 &:= \Prob\{f(\x')\in [0,0.29-\delta)\cup (0.29+\delta,0.58]\},\\
		p_3 &:=\Prob\{f(\x')\in (0.58,1]\}.
	\end{align*}
	Then, we have
	\begin{align*}
		\E_{y\in f(\x')}[|y-0.29|]&\leq \delta p_1+0.29  p_2+0.71  p_3\\
		&=\delta+(0.29-\delta)p_2+(0.71-\delta)p_3.
	\end{align*}
	Substituting $\E_{y\in f(\x')}[|y-0.29|]\geq 0.21$ gives us
	\begin{equation*}
		\delta+(0.29-\delta)p_2+(0.71-\delta)p_3\geq 0.21,
	\end{equation*}
	which we can rearrange to form the inequality
	\[
	p_3 \geq \frac{0.21-\delta -(0.29-\delta)p_2}{0.71-\delta}.
	\]
	Finally, we have
	\begin{align*}
		&\ER(f(\x'),\x')\\
		& \geq p_1+\frac{1-(0.71-(0.29+\delta))}{1-(0.29+\delta-0)}p_2+\frac{1-(0.71-0.58)}{1-0.58}p_3\\
		&= p_1+\frac{0.58+\delta}{0.71-\delta}p_2+\frac{0.87}{0.42}p_3\\
		&= 1+\frac{2\delta-0.13}{0.71-\delta}p_2+\frac{15}{14}p_3\\
		&\geq 1+\frac{2\delta-0.13}{0.71-\delta}p_2+\frac{15}{14}\cdot \frac{0.21-\delta -(0.29-\delta)p_2}{0.71-\delta}\\
		&= 1+\left(\frac{2\delta-0.13}{0.71-\delta}-\frac{15}{14}\cdot \frac{0.29-\delta}{0.71-\delta}\right)p_2+\frac{15}{14}\cdot \frac{0.21-\delta}{0.71-\delta}\\
		&\geq 1.12579,
	\end{align*}
	where in the last inequality, we substitute $\delta=\frac{617}{4300}$ so that the term in front of $p_2$ becomes equal to zero. Note that $\ER(\midp(\x'),\x')=1$, leading to our approximation ratio lower bound of $1.12579$.
	
	For the remaining case where $\E_{y\in f(\x)}[|y-0.71|]\geq 0.21$, we let $\x'=(0.29,1)$ and make a symmetric argument.
\end{proof}

\section{Missing Details on Randomized Mechanisms with Prediction}\label{sec::appendix_random_prediction}

For randomized mechanisms with prediction, by incorporating the newly devised $(\frac{\sqrt{5}}{2}-1,\frac{2}{5})$-LRM constant mechanism with the learning augmentation scheme, we have the following randomized mechanism.

\begin{algorithm}[!htbp]
	\caption{$\alpha$-Bounding Interval Randomized Mechanism}
	\label{alg::alpha_LRM_mechanism}
	\begin{algorithmic}[1] 
		\REQUIRE Location profile $\x$, optimal locaiton prediction $\hat{y}$. \\
		\ENSURE Distribution of facility location $f(\x,\hat{y})$.
		\STATE Initialize the confidence parameter $\alpha \in (1,2]$.
		\IF{$\hat{y}\in [1-\frac{1}{\alpha}, \frac{1}{\alpha}]$}
		\STATE Return $f(\x,\hat{y})\leftarrow \hat{y}$;
		\ELSE 
		\STATE Return $f(\x, \hat{y})\leftarrow (\frac{\sqrt{5}}{2}-1,\frac{2}{5})$-LRM constant mechanism with input $\x$.
		\ENDIF
	\end{algorithmic}
\end{algorithm}

The $\alpha$-Bounding Interval Randomized Mechanism adopts a similar approach to $\alpha$-BIM. It begins by defining a “trustworthy” interval associated with the predicted optimal facility location. When the prediction falls outside this interval, the mechanism employs the $(\frac{\sqrt{5}}{2}-1,\frac{2}{5})$-LRM constant mechanism to ensure the small robustness. However, our next analysis shows that it achieves even worse consistency and robustness than $\alpha$-BIM.

\begin{proposition}\label{thm::rand_mechanism_with_prediction}
	$\alpha$-Bounding Interval Randomized Mechanism is strategyproof and satisfies $\min$ $\{1 + (\frac{12+4\sqrt{5}}{5})(1-\frac{1}{\alpha}), (\frac{3}{5}+\frac{2\sqrt{5}}{5})+\frac{8}{5}(1-\frac{1}{\alpha}), 1+\frac{2}{\sqrt{5}}\}$-consistency, $\frac{\alpha}{\alpha-1}$-robustness with respect to envy ratio, where $\alpha \in (1,2]$.
\end{proposition}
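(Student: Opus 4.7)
The plan is to verify strategyproofness, robustness, and consistency in turn; the first two are routine extensions of the earlier arguments, and the three-term minimum in the consistency bound emerges from a piecewise case analysis of where the atoms of the LRM constant mechanism fall relative to the agents. Strategyproofness is immediate, since neither the branching condition nor the LRM constant mechanism depends on the reported profile $\x$. For robustness I would split on the branch taken: if $\hat{y}\in[1-\tfrac{1}{\alpha},\tfrac{1}{\alpha}]$, placing the facility at $\hat{y}$ yields envy ratio at most $\tfrac{\alpha}{\alpha-1}$ by the same argument as in the proof of Theorem~\ref{alg1_consistent_robust_proof}; otherwise the $(\tfrac{\sqrt{5}}{2}-1,\tfrac{2}{5})$-LRM mechanism runs, which by Theorem~\ref{thm::opt_LRM_constant_mechanism} has approximation ratio at most $1+\tfrac{2}{\sqrt{5}}$, and a direct check shows $1+\tfrac{2}{\sqrt{5}}\le\tfrac{\alpha}{\alpha-1}$ for all $\alpha\in(1,2]$, so the overall robustness is $\tfrac{\alpha}{\alpha-1}$.

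For consistency I assume $\hat{y}=\midp(\x)$. If $\hat{y}\in[1-\tfrac{1}{\alpha},\tfrac{1}{\alpha}]$ the facility is placed at the midpoint and the envy ratio equals $1$. Otherwise, by symmetry suppose $\hat{y}=m<\beta:=1-\tfrac{1}{\alpha}\le\tfrac{1}{2}$. Applying Lemma~\ref{lem::two_agent}, followed by the spread-increasing step from the proof of Lemma~\ref{lem:two_cases_optimal_lrm_mechanism} (which pushes $x_1$ to $0$), the worst case reduces to $\x=(0,2m)$. I would then expand the expected envy ratio as a weighted sum over the three atoms $y_1=\tfrac{3-\sqrt{5}}{2},\ y_2=\tfrac{1}{2},\ y_3=\tfrac{\sqrt{5}-1}{2}$ with probabilities $\tfrac{2}{5},\tfrac{1}{5},\tfrac{2}{5}$, computing each term according to whether the atom lies inside or outside $[0,2m]$.

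Three regimes emerge. When $2m<y_1$ (i.e.\ $m<\tfrac{3-\sqrt{5}}{4}$), all three atoms lie to the right of $x_2$, and direct algebra—using $\tfrac{1}{1-y_1}=\tfrac{\sqrt{5}+1}{2}$ and $\tfrac{1}{1-y_3}=\tfrac{3+\sqrt{5}}{2}$—yields $1+\tfrac{12+4\sqrt{5}}{5}m$. When $y_1\le 2m\le\tfrac{1}{2}$, only $y_1$ lies inside $[0,2m]$ and the sum becomes $\tfrac{3+2\sqrt{5}}{5}+\tfrac{8}{5}m$. When $2m>\tfrac{1}{2}$, both $y_1$ and $y_2$ lie inside, the $m$-dependent contributions cancel exactly, and the expression collapses to the LRM worst case $1+\tfrac{2}{\sqrt{5}}$. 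Since the first two expressions are monotonically increasing in $m$, their suprema over $m\in[0,\beta)$ are $1+\tfrac{12+4\sqrt{5}}{5}\beta$ and $\tfrac{3+2\sqrt{5}}{5}+\tfrac{8}{5}\beta$, which together with the constant $1+\tfrac{2}{\sqrt{5}}$ are the three expressions inside the claimed $\min$; for each $\beta=1-\tfrac{1}{\alpha}$, the regime actually reached is the one producing the smallest value, so the overall consistency is exactly that minimum.

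The main obstacle will be the case bookkeeping: verifying that $x_1=0$ is the spread-maximizing extremum in each regime (inheriting the argument from the LRM analysis), handling the envy-ratio formula correctly when a $y_i$ straddles an endpoint of $[0,2m]$, checking the precise cancellation of the $m$-coefficient in the third regime (which is what makes the LRM worst-case constant reappear), and confirming that the three pieces glue continuously at $m=\tfrac{3-\sqrt{5}}{4}$ and $m=\tfrac{1}{4}$, where adjacent pieces agree at $\tfrac{9}{5}$ and $1+\tfrac{2}{\sqrt{5}}$ respectively.
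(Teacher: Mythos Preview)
Your overall plan matches the paper's: strategyproofness is immediate, robustness is the max of the two branches, and consistency reduces via Lemma~\ref{lem::two_agent} and the $x_1\to 0$ step to analysing $\rho((0,2m))$ for $m<\beta:=1-\tfrac{1}{\alpha}$, split piecewise according to which atoms of the LRM mechanism lie in $[0,2m]$. Your first two regimes and their gluing values agree exactly with the paper's Case~(1) and Case~(2).

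There is one missing case. Your third regime ``$2m>\tfrac{1}{2}$'' tacitly assumes that the rightmost atom $y_3=\tfrac{\sqrt{5}-1}{2}$ still lies outside $[0,2m]$; that is precisely what makes the $m$-coefficients cancel. But once $m>\tfrac{\sqrt{5}-1}{4}$ (equivalently $\alpha>1+\tfrac{1}{\sqrt{5}}$, well inside $(1,2]$), all three atoms lie in $[0,2m]$, the sign of the $y_3$ contribution flips, and the expected envy ratio of $(0,2m)$ becomes strictly \emph{decreasing} in $m$ rather than constant. The supremum over $m<\beta$ is therefore still $1+\tfrac{2}{\sqrt{5}}$, so your conclusion survives, but your cancellation argument does not establish it. The paper sidesteps this extra computation entirely: for $\alpha\ge\tfrac{4}{3}$ it simply invokes Theorem~\ref{thm::opt_LRM_constant_mechanism} to bound the LRM mechanism globally by $1+\tfrac{2}{\sqrt{5}}$, and observes that the instance $(0,\tfrac{1}{2})$, whose midpoint $\tfrac{1}{4}$ lies below $\beta$, achieves this value. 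That single line is the cleanest way to close the gap in your write-up.
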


\begin{proof}
	Strategyproofness directly holds for the $\alpha$-Bounding Interval Randomized Mechanism. We mainly focus on the proof of the bounds of consistency and robustness. 
	
	\noindent \textbf{Consistency}. Consider any instance $\x$ with a correct prediction of the optimal facility location, i.e., $\hat{y}=\midp(\x)$. If $\hat{y}$ is within $[1-\frac{1}{\alpha}, \frac{1}{\alpha}]$, the mechanism outputs $\hat{y}$, providing $1$-consistency. Next, we consider the case where $\hat{y} \in [0,1 - \frac{1}{\alpha}]$, in which the mechanism returns the output of the $(\frac{\sqrt{5}}{2}-1,\frac{2}{5})$-LRM constant mechanism. Here, we consider the following sub-cases divided by the range of the parameter $\alpha$. By Lemma~\ref{lem::two_agent}, it suffices to consider instances with two agents where $0\le x_1 < x_2 \le 2(1-\frac{1}{\alpha})$. 
	
	We first notice that when $\alpha \in [\frac{4}{3}, 2]$, i.e., $2(1-\frac{1}{\alpha}) \in [\frac{1}{2}, 1]$, the consistency is always bounded by $1+\frac{2}{\sqrt{5}}$ since there always exists one two-agent instance achieving approximation ratio of $1+\frac{2}{\sqrt{5}}$, regardless the prediction location $\hat{y}$. Therefore, we mainly consider two cases 
	
	\textbf{Case (1).} $2(1-\frac{1}{\alpha}) \leq \frac{3-\sqrt{5}}{2}$, i.e., $\alpha \in [1, \sqrt{5}-1]$ which implies that both $x_1$ and $x_2$ are on the left side of $\frac{3-\sqrt{5}}{2}$. Henceforth, the approximation ratio is upper-bounded by
	\begin{align*}
		\rho(\x) &\leq \frac{2}{5} \cdot \frac{1-(\frac{3-\sqrt{5}}{2}-2(1-\frac{1}{\alpha}))}{1-\frac{3-\sqrt{5}}{2}} + \frac{1}{5} \cdot \frac{1-(\frac{1}{2}-2(1-\frac{1}{\alpha}))}{1-\frac{1}{2}}+\frac{2}{5}\cdot \frac{1-(\frac{\sqrt{5}-1}{2}-2(1-\frac{1}{\alpha}))}{1-\frac{\sqrt{5}-1}{2}}\\
		&= 1 + (\frac{12+4\sqrt{5}}{5})(1-\frac{1}{\alpha}),
	\end{align*}
	where the equality holds when $x_1=0$ and $x_2=2(1-\frac{1}{\alpha})$.
	
	\textbf{Case (2).} $2(1-\frac{1}{\alpha}) > \frac{3-\sqrt{5}}{2}$, i.e., $\alpha \in (\sqrt{5}-1, \frac{4}{3}]$. The approximation ratio is upper-bounded by
	\begin{align*}
		\rho(\x) &\leq \frac{2}{5} \cdot \frac{1-(2(1-\frac{1}{\alpha})-\frac{3-\sqrt{5}}{2})}{1-\frac{3-\sqrt{5}}{2}} + \frac{1}{5} \cdot \frac{1-(\frac{1}{2}-2(1-\frac{1}{\alpha}))}{1-\frac{1}{2}}+\frac{2}{5}\cdot \frac{1-(\frac{\sqrt{5}-1}{2}-2(1-\frac{1}{\alpha}))}{1-\frac{\sqrt{5}-1}{2}}\\
		&= (\frac{3}{5}+\frac{2\sqrt{5}}{5})+\frac{8}{5}(1-\frac{1}{\alpha}).
	\end{align*}
	where the equality holds when $x_1=0$ and $x_2=2(1-\frac{1}{\alpha})$.

	\noindent \textbf{Robustness}. Now we consider the robustness of the mechanism. By Mechanism~\ref{alg::alpha_consist_mechanism}, we can see that \textbf{if} branch achieves $\frac{\alpha}{\alpha-1}$-robustness and \textbf{else} branch achieves $\frac{21}{11}<\frac{\alpha}{\alpha-1}$ robustness. Hence, the robustness $\beta = \frac{\alpha}{\alpha-1}$.
\end{proof}

Next, we extend BAM by leveraging the $(\frac{\sqrt{5}}{2}-1,\frac{2}{5})$-LRM Mechanism, that is, with probability $(1-p)$ running $(\frac{\sqrt{5}}{2}-1,\frac{2}{5})$-LRM mechanism, rather than putting the facility at $\frac{1}{2}$.
\begin{algorithm}[!htbp]
	\caption{Bias-Aware LRM Mechanism}
	\label{alg::bias_aware_LRM_mechanism}
	\begin{algorithmic}[1] 
		\REQUIRE Location profile $\x$, facility location prediction $\hat{y}$. 
		\ENSURE Facility location $f(\x,\hat{y})$.
		\STATE Compute bias $c = |\hat{y}-\frac{1}{2}|$
		\STATE Compute probability $p = \frac{1}{2} - c$
		\STATE With probability $p$: return $\hat{y}$
		\STATE With probability $1-p$: return $(\frac{\sqrt{5}}{2}-1,\frac{2}{5})$-LRM constant mechanism with input $\x$.
	\end{algorithmic}
\end{algorithm}

We next show that Bias-Aware LRM Mechanism is worse than BAM by constructing some special instances.

\begin{proposition}
	Bias-Aware LRM mechanism is strategyproof and satisfies $\left(\frac{23+9\sqrt{5}}{20}\right)$-consistency and $\left(\frac{1}{2}+\frac{4}{\sqrt{5}}\right)$-robustness when $c \in [0,\frac{1}{4}]$, and $\left(\frac{-8c^2+2(\sqrt{5}-1)c+\sqrt{5}+6}{5}\right)$-consistency, $\left(\frac{1}{10}(4\sqrt{5}c+7\sqrt{5}+5)\right)$- robustness when $c \in [\frac{1}{4},\frac{\sqrt{5}-1}{4}]$, and $\left(-\frac{4(3+\sqrt{5})}{5}c^2+\frac{\sqrt{5}+8}{5}\right)$-consistency, $\left(\frac{1}{10}(4\sqrt{5}c+7\sqrt{5}+5)\right)$-robustness when $c \in [\frac{\sqrt{5}-1}{4},\frac{1}{2}]$.
\end{proposition}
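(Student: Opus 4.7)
The plan is to mirror the proof template of Theorem~\ref{bias_aware_proof} (BAM), extending the case analysis to accommodate the three additional mass points $\tfrac{1}{2}-\alpha,\tfrac{1}{2},\tfrac{1}{2}+\alpha$ coming from the $(\tfrac{\sqrt{5}}{2}-1,\tfrac{2}{5})$-LRM subroutine, where $\alpha=\tfrac{\sqrt{5}}{2}-1$. Strategyproofness and anonymity are immediate, since the entire output distribution is determined by $\hat{y}$ alone and is independent of the reported profile $\x$. By Lemma~\ref{lem::two_agent} I restrict to $2$-agent instances $\x=(x_1,x_2)$ with $x_1<x_2$, on which $\ER(\OPT(\x),\x)=1$; by the left-right symmetry of the anchor set about $\tfrac{1}{2}$ I assume WLOG $\hat{y}\le\tfrac{1}{2}$, so $p=\hat{y}$ and $c=\tfrac{1}{2}-\hat{y}$, and the approximation ratio to analyse reduces to
\[
\E[\ER]=\hat{y}\,\ER(\hat{y},\x)+(1-\hat{y})\Bigl[\tfrac{2}{5}\ER(\tfrac{1}{2}-\alpha,\x)+\tfrac{1}{5}\ER(\tfrac{1}{2},\x)+\tfrac{2}{5}\ER(\tfrac{1}{2}+\alpha,\x)\Bigr].
\]

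For robustness I proceed as in BAM: the supremum over $\x$ is attained at $x_2=1$ (pushing $x_2$ rightward weakly decreases every minimum utility and weakly increases every maximum utility), and the worst $x_1$ must lie in $[\hat{y},\tfrac{1}{2}+\alpha]$, since outside this interval moving $x_1$ inward strictly increases the envy ratio at at least one anchor without compensating loss. On this interval each $\ER(y_k,\x)$ is linear-fractional in $x_1$, so $\E[\ER]$ is piecewise linear-fractional with an explicitly computable maximizer. A first-derivative analysis shows that the maximum is attained at $x_1=\tfrac{1}{2}-\alpha$ whenever $\hat{y}\le\tfrac{1}{4}$ (i.e., $c\ge\tfrac{1}{4}$), producing the linear expression $\tfrac{1}{10}(4\sqrt{5}c+7\sqrt{5}+5)$; its value at the transition $c=\tfrac{1}{4}$ equals $\tfrac{1}{2}+\tfrac{4}{\sqrt{5}}$, which I then argue is a uniform upper bound for the entire regime $c\in[0,\tfrac{1}{4}]$ via a monotonicity comparison with the $c=\tfrac{1}{4}$ boundary instance. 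A tight instance of the form $(x_1,x_2)=(\tfrac{1}{2}-\alpha,1)$ will certify the boundary value.

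For consistency I substitute $\hat{y}=\midp(\x)$, so $\ER(\hat{y},\x)=1$ and only the LRM terms contribute above $1$. Writing $\delta=\tfrac{x_2-x_1}{2}\in[0,\hat{y}]$, each LRM term becomes a rational function of $\delta$, and the BAM-style monotonicity argument pins the extremal $\delta$ at $\hat{y}$, i.e., at the candidate profile $(x_1,x_2)=(0,2\hat{y})$. The three consistency pieces correspond to the geometric position of $x_2=2\hat{y}$ relative to the LRM anchors: when $2\hat{y}\le\tfrac{1}{2}-\alpha$ (equivalently $c\in[\tfrac{\sqrt{5}-1}{4},\tfrac{1}{2}]$) all three anchors lie strictly to the right of both agents and a direct computation yields the quadratic $-\tfrac{4(3+\sqrt{5})}{5}c^2+\tfrac{\sqrt{5}+8}{5}$; when $\tfrac{1}{2}-\alpha<2\hat{y}\le\tfrac{1}{2}$ ($c\in[\tfrac{1}{4},\tfrac{\sqrt{5}-1}{4}]$) exactly one anchor lies between the agents and the analogous computation produces $\tfrac{-8c^2+2(\sqrt{5}-1)c+\sqrt{5}+6}{5}$; finally, for $c\in[0,\tfrac{1}{4}]$ the remaining sub-regimes (where one or more anchors lie inside $[x_1,x_2]$ and the candidate $(0,2\hat{y})$ has $x_2\ge\tfrac{1}{2}$) are covered uniformly by the constant $\tfrac{23+9\sqrt{5}}{20}$, obtained by upper bounding each LRM term by its envelope on this range.

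The main obstacle I anticipate is case bookkeeping: for each anchor $y_k\in\{\hat{y},\tfrac{1}{2}-\alpha,\tfrac{1}{2},\tfrac{1}{2}+\alpha\}$, identifying which of $x_1,x_2$ attains the maximum (and minimum) utility depends on the sign of $y_k-\midp(\x)$, producing a multi-way split not present in the BAM proof. The algebra within each fixed active case is routine, but verifying that the piecewise expressions are valid upper bounds across the transition points $c=\tfrac{1}{4}$ and $c=\tfrac{\sqrt{5}-1}{4}$, and that the candidate worst-case profiles $(0,2\hat{y})$ (consistency) and $(\tfrac{1}{2}-\alpha,1)$ (robustness) dominate all other configurations, requires a careful comparison of linear-fractional expressions involving $\alpha=\tfrac{\sqrt{5}}{2}-1$. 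As the subsequent discussion in the paper emphasises, the purpose of these calculations is only to certify upper bounds that demonstrate Bias-Aware LRM is strictly dominated by BAM in both consistency and robustness, so exhibiting a small number of tight instances at each transition suffices to complete the comparison.
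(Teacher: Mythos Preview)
Your proposal follows essentially the same strategy as the paper: reduce to two-agent instances via Lemma~\ref{lem::two_agent}, pin the consistency worst case at $(x_1,x_2)=(0,2\hat y)$ and the robustness worst case at $x_2=1$ with $x_1\in\{\hat y,\tfrac12-\alpha\}$, and then case-split on the position of $2\hat y$ (equivalently $c$) relative to the LRM anchors $\tfrac12-\alpha,\tfrac12,\tfrac12+\alpha$. The three consistency regimes and the linear robustness expression you write down, together with the transition points $c=\tfrac14$ and $c=\tfrac{\sqrt5-1}{4}$, coincide with what the paper derives.

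One step needs more care. Your robustness reduction to $x_2=1$ is justified by ``pushing $x_2$ rightward weakly decreases every minimum utility and weakly increases every maximum utility,'' but this is not true anchor-by-anchor: once $x_2$ crosses an anchor $y_k$ to the right, $u(y_k,x_2)$ starts decreasing, and if $x_2$ was the max-utility agent at $y_k$ the envy ratio there drops. The paper does \emph{not} use a one-line monotonicity here; instead it invokes the detailed per-anchor case analysis already developed for the LRM mechanism (the machinery behind Lemma~\ref{lem:two_cases_optimal_lrm_mechanism} and Theorem~\ref{thm::opt_LRM_constant_mechanism}) to conclude that $x_2=1$ is extremal, and then argues separately that the expected envy ratio is monotone in $\delta=x_1-\hat y$ on the reduced interval, leaving only the two endpoint candidates $(\hat y,1)$ and $(\tfrac12-\alpha,1)$ to compare. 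Your first-derivative/piecewise linear-fractional plan will reproduce this once you drop the incorrect blanket monotonicity and instead track, for each anchor, which agent realises the max and min utility as $x_1,x_2$ move; the conclusion you state (max at $x_1=\tfrac12-\alpha$ for $c\ge\tfrac14$) then matches the paper's comparison.
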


\begin{proof}
	Strategyproofness directly holds for the Bias-Aware LRM mechanism. We mainly focus on the proof of the lower bounds of consistency and robustness. Without loss of generality, we assume that $\hat{y}\le \frac{1}{2}$.
	
	\noindent \textbf{Consistency}. By Lemma~\ref{lem::two_agent} and the proof of Theorem~\ref{thm::opt_LRM_constant_mechanism}, we only need to consider instances with two agents where $0=x_1 < x_2 \le 1$. Note that the envy ratio achieved by $\hat{y}$ is $1$. The envy ratio achieved by $(\frac{\sqrt{5}}{2}-1,\frac{2}{5})$-LRM is always $\frac{5+2\sqrt{5}}{5}$. When moving $x_2$ from location $1$ to location $\frac{\sqrt{5}-1}{2}$ (the right boundary of $(\frac{\sqrt{5}}{2}-1,\frac{2}{5})$-LRM), the probability of using $(\frac{\sqrt{5}}{2}-1,\frac{2}{5})$-LRM will increase. Thus the expected approximation ratio will increase. Therefore, we only need to consider the case where  $0=x_1 < x_2 \le \frac{\sqrt{5}-1}{2}$. Consider any instance $\x$ with correct prediction of optimal facility location $\hat{y}$, i.e., $\midp(\x)=\hat{y}$. We have $\hat{y}\le \frac{\sqrt{5}-1}{4}$. 
	
	If $0\le \hat{y}\le \frac{3-\sqrt{5}}{4}$, then $c\in [\frac{\sqrt{5}-1}{4}, \frac{1}{2}]$, and the consistency is
	\begin{align*}
		\gamma &= \frac{\ER(f(\x,\hat{y}), \x)}{\ER(\midp(\x),\x)} \\
		&= \hat{y}\cdot 1 + (1-\hat{y})\left(\frac{2}{5}\cdot \frac{\frac{\sqrt{5}-1}{2}+2\hat{y}}{\frac{\sqrt{5}-1}{2}}+\frac{1}{5}\cdot\frac{\frac{1}{2}+2\hat{y}}{\frac{1}{2}}+\frac{2}{5}\cdot\frac{\frac{3-\sqrt{5}}{2}+2\hat{y}}{\frac{3-\sqrt{5}}{2}}\right)\\
		&=  -\frac{4(3+\sqrt{5})}{5}c^2+\frac{\sqrt{5}+8}{5}, 
	\end{align*}
	which is monotonically decreasing with respect to $c$.
	
	If $\frac{3-\sqrt{5}}{4}\le \hat{y}\le \frac{1}{4}$, then $c\in [\frac{1}{4}, \frac{\sqrt{5}-1}{4}]$, and the consistency is
	\begin{align*}
		\gamma &= \frac{\ER(f(\x,\hat{y}), \x)}{\ER(\midp(\x),\x)} \\
		&= \hat{y}\cdot 1 + (1-\hat{y})\left(\frac{2}{5}\cdot \frac{\frac{5-\sqrt{5}}{2}-2\hat{y}}{\frac{\sqrt{5}-1}{2}}+\frac{1}{5}\cdot\frac{\frac{1}{2}+2\hat{y}}{\frac{1}{2}}+\frac{2}{5}\cdot\frac{\frac{3-\sqrt{5}}{2}+2\hat{y}}{\frac{3-\sqrt{5}}{2}}\right)\\
		&=  \frac{1}{5}\left(-8c^2+2(\sqrt{5}-1)c+\sqrt{5}+6\right)
	\end{align*}
	which is monotonically decreasing with respect to $c$.
	
	If $\frac{1}{4}\le \hat{y}\le \frac{\sqrt{5}-1}{4}$,  then $c\in [\frac{3-\sqrt{5}}{4}, \frac{1}{4}]$, and the consistency is
	\begin{align*}
		\gamma &=  \frac{\ER(f(\x,\hat{y}), \x)}{\ER(\midp(\x),\x)} \\
		& =  \hat{y}\cdot 1 + (1-\hat{y})\left(\frac{2}{5}\cdot \frac{\frac{5-\sqrt{5}}{2}-2\hat{y}}{\frac{\sqrt{5}-1}{2}}+\frac{1}{5}\cdot\frac{\frac{3}{2}-2\hat{y}}{\frac{1}{2}}+\frac{2}{5}\cdot\frac{\frac{3-\sqrt{5}}{2}+2\hat{y}}{\frac{3-\sqrt{5}}{2}}\right)\\
		&=  \frac{2c}{\sqrt{5}}+\frac{1}{\sqrt{5}}+1,
	\end{align*}
	which is monotonically increasing with respect to $c$.
	
	\noindent \textbf{Robustness}. If $0\le \hat{y}\le \frac{3-\sqrt{5}}{2}$, then $c\in [\frac{\sqrt{5}-2}{2}, \frac{1}{2}]$. By using a similar analysis as Theorem~\ref{thm::opt_LRM_constant_mechanism} we have that the worst case satisfies $x_2=1$. If $x_1 < \hat{y}$, we can show that moving this agent from $x_1$ to $\hat{y}$ will increase the expected envy ratio. To see this, the envy ratio achieved by $\hat{y}$ is increasing and the envy ratio achieved by $(\frac{\sqrt{5}-1}{2},\frac{2}{5})$-LRM constant mechanism is increasing. If $x_1>\frac{3-\sqrt{5}}{2}$, we can also use a similar analysis to show that moving this agent from $x_1$ to $\frac{\sqrt{5}-1}{2}$ will increase the expected envy ratio. Then we consider $x_1\in [\hat{y},\frac{1}{3}]$. Let $\delta = x_1-\hat{y}$, by using the similar analysis as \Cref{thm::opt_LRM_constant_mechanism} we can show that the approximation ratio satisfies the monotonicity with respect to $\delta$. Hence, we only need to compare the envy ratio between two cases $(\hat{y},1)$ and $(\frac{3-\sqrt{5}}{2},1)$.
	
	For case $(\hat{y},1)$, if $\hat{y} \le \sqrt{5}-2$ ($c\in [\frac{5}{2}-\sqrt{5},\frac{1}{2}]$), we have the robustness
	\begin{align*}
		\beta &=  \hat{y}\cdot \frac{1}{\hat{y}} + (1-\hat{y})\left(\frac{2}{5}\cdot \frac{\frac{\sqrt{5}-1}{2}+\hat{y}}{\frac{3-\sqrt{5}}{2}}+\frac{1}{5}\cdot\frac{\frac{1}{2}+\hat{y}}{\frac{1}{2}}+\frac{2}{5}\cdot\frac{\frac{\sqrt{5}-1}{2}}{\frac{3-\sqrt{5}}{2}+\hat{y}}\right)\\
		&=  -\frac{4(\sqrt{5}-5)c^3+4(4\sqrt{5}-13)c^2+(21\sqrt{5}-53)c-34\sqrt{5}+89}{5(\sqrt{5}-3)(2c+\sqrt{5}-4)}.
	\end{align*}
	If $\sqrt{5}-2 \le \hat{y} \le \frac{3-\sqrt{5}}{2}$ ($c\in [\frac{\sqrt{5}-2}{2}, \frac{5}{2}-\sqrt{5}]$), we have the robustness
	\begin{align*}
		\beta &=  \hat{y}\cdot \frac{1}{\hat{y}} + (1-\hat{y})\left(\frac{2}{5}\cdot \frac{\frac{\sqrt{5}-1}{2}+\hat{y}}{\frac{3-\sqrt{5}}{2}}+\frac{1}{5}\cdot\frac{\frac{1}{2}+\hat{y}}{\frac{1}{2}}+\frac{2}{5}\cdot\frac{\frac{3-\sqrt{5}}{2}+\hat{y}}{\frac{\sqrt{5}-1}{2}}\right)\\
		&=  \frac{1}{10}\left( -4(3+\sqrt{5})c^2+(2+4\sqrt{5})c+3\sqrt{5}+14\right).
	\end{align*}
	
	For case $(\frac{3-\sqrt{5}}{2},1)$, we have the robustness
	\begin{align*}
		\beta &=  \hat{y}\cdot \frac{\frac{\sqrt{5}-1}{2}+\hat{y}}{\hat{y}} + (1-\hat{y})\left(\frac{2}{\sqrt{5}}+1\right)\\
		&=  \frac{1}{10}(4\sqrt{5}c+7\sqrt{5}+5),
	\end{align*}
	which is monotonically increasing with respect to $c$, and always larger than the robustness achieved by $(\hat{y}, 1)$.
	
	If $\frac{3-\sqrt{5}}{2}\le \hat{y}\le \frac{1}{2}$, then $c\in [0, \frac{\sqrt{5}-2}{2}]$. When $x_1=\hat{y}$ and $x_2=0$, the envy ratios achieved by $\hat{y}$ and $(\frac{\sqrt{5}-1}{2},\frac{2}{5})$-LRM reach the maximum. Then the robustness is
	\begin{align*}
		\beta &= \hat{y}\cdot\frac{1}{\hat{y}}+(1-\hat{y})\cdot (\frac{5+2\sqrt{5}}{5}))\\
		&\le (\frac{2}{\sqrt{5}}+1)c+\frac{1}{\sqrt{5}}+\frac{3}{2},
	\end{align*}
	which is monotonically increasing with respect to $c$. Combined with the consistency, we have that when $c \in [0,\frac{1}{4})$, both the consistency and robustness are better than $c=\frac{1}{4}$ (in this case, it is upper-bounded by $(\frac{23+9\sqrt{5}}{20})$-consistency and $(\frac{1}{2}+\frac{4}{\sqrt{5}})$-robustness), which can be omitted.
\end{proof}

We compare the consistency and robustness of all aforementioned four mechanisms, including $\alpha$-BIM, BAM, $\alpha$-Bounding Interval Randomized Mechanism, and Bias-Aware LRM Mechanism in \Cref{fig::compare_four_mechanisms}.
\begin{figure}[!htbp]
	\centering
	\begin{tikzpicture}
		\begin{axis}[
			axis lines=middle,
			grid=both,
			xmin=1, xmax=2.2,   
			ymin=2, ymax=8.2, 
			xlabel={Consistency}, 
			ylabel={Robustness}, 
			xtick={1,1.2,1.4,1.6,1.8,2}, 
			ytick={2,3,4,5,6,7,8},
			xlabel style={at={(1,-0.1)}, anchor=north}, 
			width=8cm, height=5cm 
			]
			\addplot[
			domain=1.1:2, 
			samples=100, 
			smooth, 
			thick, 
			red
			] 
			({x}, {x / (x - 1)});
			
			\addplot[
			domain=0.25:0.5, 
			samples=100, 
			smooth, 
			thin, 
			dashed,
			blue
			] 
			({-4*x^2 + 2}, {x + 2});
			
			
			
			\addplot[
			domain=1.0:1.33,
			samples=200,
			smooth,
			thick,
			dashdotted,
			green
			]
			(
			{
				min(
				1 + (12 + 4*sqrt(5))/5*(1 - 1/x),
				(3 + 2*sqrt(5))/5 + 8/5*(1 - 1/x),
				1 + 2/sqrt(5)
				)
			},
			{x/(x - 1)}
			);
			
			
			
			\addplot[
			domain=0.25:0.309017, 
			samples=100,
			smooth,
			thin,
			dotted,
			orange
			]
			(
			{(-8*x^2+2*(sqrt(5)-1)*x+sqrt(5)+6)/5},
			{1/10*(4*sqrt(5)*x + 7*sqrt(5) + 5)}
			);
			
			\addplot[
			domain=0.309017:0.5,
			samples=100,
			smooth,
			thin,
			dotted,
			orange
			]
			(
			{-4*(3+sqrt(5))/5*x^2 + (sqrt(5)+8)/5},
			{1/10*(4*sqrt(5)*x + 7*sqrt(5) + 5)}
			);
		\end{axis}
	\end{tikzpicture}
	\caption{Comparison between \textcolor{red}{$\alpha$-BIM} (red solid line), \textcolor{blue}{BAM} (blue dashed line),  \textcolor{green}{$\alpha$-Bounding Interval Randomized Mechanism} (green dashdotted line), and \textcolor{orange}{Bias-Aware LRM Mechanism} (orange dotted line).}
	\label{fig::compare_four_mechanisms}
\end{figure}
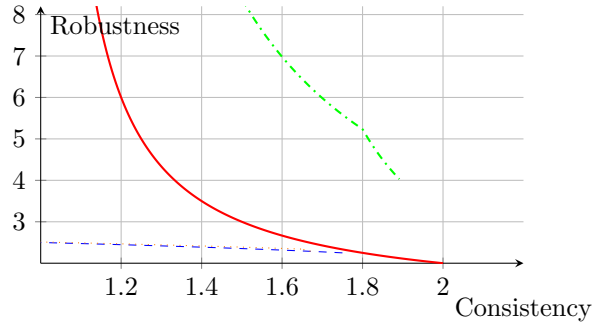

BAM clearly outperforms both $\alpha$-BIM and the $\alpha$-Bounding Interval Randomized Mechanism. While the Bias-Aware LRM Mechanism shares the same framework as BAM, it is slightly less effective and involves greater complexity.
\end{document}